\documentclass[fontsize=11pt, DIV=15]{scrartcl}

\usepackage[utf8]{inputenc}
\usepackage[T1]{fontenc}
\usepackage{lmodern}
\usepackage{microtype}

%
% Some general formatting imports
%

\usepackage{silence}

\usepackage{newunicodechar}
\newunicodechar{ρ}{\rho}
\newunicodechar{β}{\beta}
\newunicodechar{σ}{\sigma}
\newunicodechar{λ}{\lambda}
\newunicodechar{Λ}{\Lambda}
\newunicodechar{μ}{\mu}
\newunicodechar{ν}{\nu}
\newunicodechar{ψ}{\psi}
\newunicodechar{ϕ}{\varphi}
\newunicodechar{Φ}{\Phi}
\newunicodechar{φ}{\phi}
\newunicodechar{π}{\pi}
\newunicodechar{α}{\alpha}
\newunicodechar{ϵ}{{\epsilon}}
\newunicodechar{ε}{{\varepsilon}}
\newunicodechar{δ}{\delta}
\newunicodechar{ω}{\omega}
\newunicodechar{Δ}{\Delta}
\newunicodechar{Σ}{\Sigma}
\newunicodechar{∗}{\textasteriskcentered}

\usepackage{csquotes}

\usepackage{xparse}
\usepackage{xspace}
\usepackage{makecell}
\usepackage{titling}
\usepackage{appendix}
\usepackage{enumitem}

\usepackage{versions}
\includeversion{DRAFT}
\excludeversion{EXCLUDED}

\usepackage{amssymb}
\usepackage{amsmath}
\WarningFilter{amsmath}{Foreign command}
\usepackage{mathrsfs}
\usepackage{mathtools}
\usepackage{optidef}
\usepackage{bbm}
\usepackage{bbold}
\usepackage{aligned-overset}
\usepackage{stmaryrd}
\SetSymbolFont{stmry}{bold}{U}{stmry}{m}{n} % Fixes the missing font warning

\usepackage{cool}
\usepackage[dvipsnames]{xcolor}
\usepackage{hyperref}
\hypersetup{
	%hidelinks,
	colorlinks,
	linkcolor={red!50!black},
	citecolor={red!50!black},
	urlcolor={blue!80!black},
	linktoc=all
}
\usepackage{zref-clever}
\zcsetup{cap}
\usepackage{bookmark}

\usepackage[thmtools-compat]{keytheorems}
\newtheorem{theorem}{Theorem}[section]
\newtheorem{lemma}[theorem]{Lemma}
\newtheorem{corollary}[theorem]{Corollary}

\newtheorem{proposition}[theorem]{Proposition}
\newtheorem{example}[theorem]{Example}
\theoremstyle{definition}
\newtheorem{remark}[theorem]{Remark}
\newtheorem{definition}[theorem]{Definition}

\newtheorem*{definition*}{Definition}

\usepackage{graphicx}         % fuer das Einbinden von Grafiken
\setkeys{Gin}{width=\textwidth} %
\usepackage[multidot,spae]{grffile} %% Enables \includegraphics to parse files with multiple dots and spaces
\usepackage[compatibility=false]{caption}
\usepackage{subcaption}
\graphicspath{{autho./}{img/}}

% Add short caption to caption 
\makeatletter
\let\x@caption\caption% original (caption package) \caption
\newcommand{\x@@caption}[2][\empty]{%
	\ifx\empty#1\relax\x@caption{#2}%
	\else\x@caption[#1]{\textbf{#1.} #2}%
	\fi}% modified caption
\let\caption\x@@caption% new global \caption for other float types
\renewcommand{\captionabove}[2][\empty]{\captionsetup{position=above}%
	\x@@caption[#1]{#2}%
}% new \captionabove
\renewcommand{\captionbelow}[2][\empty]{\captionsetup{position=below}%
	\x@@caption[#1]{#2}%
}% new \captionbelow
\makeatother

\usepackage{suffix}
\usepackage{xparse}
\usepackage{braket}

\usepackage{physics}

\renewcommand{\tr}{\Tr}

\usepackage{tensor}

%Bracketing

%\newcommand{\br}[1]{\autobracket*{#1}}
\newcommand{\br}[1]{\left(#1\right)}

%Eigenstates of Operators/Measurements

\NewDocumentCommand{\stared}{m m}{\IfBooleanTF{#2}{#1*}{#1}}
\NewDocumentCommand{\es}{m o}{
	\IfNoValueTF{#2}{
		\mathbb{\uppercase{#1}}^{\lowercase{#1\/}}
	}{
		\mathbb{\uppercase{#1}}^{#2\/}
	}
}

%
% Matrix Groups
%

%
% Hilbert Space and Operators
%

% Fields

\newcommand{\reals}{\mathbb{R}}

\newcommand{\naturals}{\mathbb{N}}

% Hilbert Space
\newcommand{\HS}{\mathcal{H}}

% BoundedOperators
\NewDocumentCommand{\BO}{o}{\mathcal{B}\br{\IfValueTF{#1}{#1}{\HS}}}
\NewDocumentCommand{\BOsa}{o}{\mathcal{B}_{\mathrm{sa}}\br{\IfValueTF{#1}{#1}{\HS}}}

% Density Matrices
\NewDocumentCommand{\DM}{o}{\mathcal{D}\br{\IfValueTF{#1}{#1}{\HS}}}

% Positive operators
\NewDocumentCommand{\pos}{o}{\mathscr{P}\!\br{\IfValueTF{#1}{#1}{\HS}}}

%Total particle number

%Pre-Indexing (for Hilbert space subspaces)
%\newcommand{\preNT}[2]{\tensor[^{(N_T)}]{#1}{#2}}%\newcommand{\preNTk}[1]{\preNT{#1}{^{(k)}}}

%Logical equivalence (defined)

% Support
\DeclareMathOperator*{\Supp}{supp}
\NewDocumentCommand{\supp}{m}{\Supp(#1)}

% Variance

\newcommand{\poly}{\mathrm{poly}}

%
%	Entropies
%

% Kullback-Leibler-Divergence
\DeclarePairedDelimiterX{\infdivx}[2]{(}{)}{
	#1\;\delimsize\|\;#2
}

\DeclarePairedDelimiterX{\setx}[2]{\{}{\}}{\:#1\:\delimsize|\:#2\:}
%\newcommand{\goodset}[2]{\setx*{#1}{#2}}

% Quantum Renyi Divergence
\DeclareDocumentCommand{\qrd}{o o m m}{\ensuremath{\IfValueTF{#1}{#1}{D}\IfValueTF{#2}{_{#2}}{}\/(#3\|#4)}}
% Geometric Divergence

% Sharp-divergence

% Shannon-Entropy
\NewDocumentCommand{\ent}{d()g}{\ensuremath{H\br{\IfValueTF{#1}{#1}{#2}}}}
% von-Neumann-Entropy
\NewDocumentCommand{\qent}{d()g}{\ensuremath{H\br{\IfValueTF{#1}{#1}{#2}}}}

%span
%\newcommand\span[1]{\textrm{span}\left\{#1\right\}}

\newcommand*\ric[1]{\vphantom{#1}\smash{#1_{}\kern-\scriptspace}}

%
% Old mystyle.sty
%

%%Simple Determinant
%\let\origdet\det
%% You should comment out one of the two next renewcommands
%\renewcommand{\det}[1]{\origdet\mathopen{}\left(#1\right)}   % to get "det(A)"
%% \renewcommand{\det}[1]{\left\lvert#1\right\rvert} % to get "|A|"

\renewcommand{\abs}[1]{\left\lvert#1\right\rvert}

\renewcommand{\norm}[1]{\left\lVert #1 \right\rVert}

\newcommand{\renyi}{R\'enyi\xspace}

% Permutation Symmetry Things        
\newcommand{\Sn}{\mathfrak{S}_n}
\NewDocumentCommand{\End}{o}{\mathrm{End}^{\Sn}(\IfValueTF{#1}{#1}{R^n B^n})}

\newcommand{\n}{^{\otimes n}}
\renewcommand{\k}{^{\otimes k}}

\renewcommand{\E}{\mathcal{E}}
\newcommand{\F}{\mathcal{F}}
\newcommand{\id}{\mathrm{id}}
\renewcommand{\D}{\mathbf{D}}

\usepackage{bm}

\renewcommand{\S}{\mathcal{S}}
\newcommand{\T}{\mathcal{T}}

\NewDocumentCommand{\cptp}{o}{\mathrm{CPTP}(\IfValueTF{#1}{#1}{A \to B})}

\newcommand{\X}{\mathcal{X}}
\newcommand{\Y}{\mathcal{Y}}

% Probability Distributions

% Bounded (or traceclass) operators

% Geometric Divergence

% Geometric trace function

% Reverse test

%IdentityMatrix
\newcommand{\1}{\IdentityMatrix}

% Provides \centernot, used in sharp alpha to 1 limi
\usepackage{centernot}

%Symmetric error probability

%IdentityMAtrix

% Hilbert Schmidt Inner Product

% Asymptotics

% Stuff from Jan that should be rewritten

% Abbreviation for ketbra and braket

% Schatten class operators
\newcommand{\B}{\mathcal{B}}
\NewDocumentCommand{\TC}{o}{\B_1\br{\IfValueTF{#1}{#1}{\HS}}}

\newlength\oversetwidth
\newlength\underwidth

%\delimitershortfall -.1pt
% new commands for this doc

\newcommand{\reg}{\mathrm{reg}}

\newcommand{\R}{R^\infty} % regularized rel entropy of resource
\newcommand{\OR}{\tilde{\mathcal{O}}} % sets of ARNG operations
\newcommand{\Ds}{\widetilde{D}_\alpha} % sandwiched renyi relative entropy
\newcommand{\mE}{(\pi\cdot\mathcal{E}^{\otimes n})(\nu_n)} % action of S_n on a channel
\newcommand{\mF}{(\pi\cdot\Hat{\mathcal{F}}_n')(\nu_n)} % action of S_n on a channel
\NewDocumentCommand{\CQ}{o}{\mathrm{CQ}(\IfValueTF{#1}{#1}{\mathcal{\X} \to A})}
\numberwithin{equation}{section}

\usepackage{dsfont}

\usepackage{silence}
\WarningFilter{biblatex}{Duplicate entry key}
\usepackage[maxbibnames=99, maxalphanames=4, backend=biber]{biblatex} %minalphanames = 1, maxalphanames = 3

% Fixing some Zotero Issues
\AtEveryBibitem{
	\clearfield{note}
	\clearfield{issn}
	\clearfield{urlyear}
	\clearfield{urlmonth}
	\clearfield{eprintclass}
	\iffieldundef{journaltitle}{}{\clearfield{version}}
    %Clear DOI field if it looks like a pure arXiv post
	\iffieldundef{journaltitle}{\iffieldundef{booktitle}{\iffieldundef{eventtitle}{\iffieldundef{eprint}{}{\clearfield{doi}}}{}}{}}{}
}

\renewbibmacro*{url}{%
	\iffieldundef{doi}{\iffieldundef{eprint}{%
			\printfield{url}%
		}{}}{}
}

\setlength\bibitemsep{1.5\itemsep}
%\bibliography{references_zotero}
%\bibliography{references_anirudh}
\bibliography{references}
\bibliography{references_anirudh}

\DeclareMathOperator{\CPTP}{CQ}
\DeclareMathOperator{\spec}{spec}

\DeclareMathOperator{\Prob}{Prob}
\DeclareMathOperator{\SEP}{SEP}
\DeclareMathOperator{\BM}{\mathcal{B}}
\title{Generalized Quantum Stein's Lemma and Reversibility of Quantum Resource Theories for Classical-Quantum Channels}
\author{Bjarne Bergh\thanks{Department of Applied Mathematics and Theoretical Physics, University of Cambridge, United Kingdom} \and Nilanjana Datta\thanksmark{1} \and Anirudh Khaitan\thanksmark{1}}

\begin{document}

\maketitle
\begin{abstract}
We extend the recent proof of the Generalized Quantum Stein's Lemma by Hayashi and Yamasaki \emph{[arXiv:2408.02722]} to classical-quantum (c-q) channels. We analyze the composite hypothesis testing problem of testing a c-q channel $\E\n$ against a sequence of sets of c-q channels $(\S_n)_n$ (satisfying certain natural assumptions), under parallel strategies. We prove that the optimal asymptotic asymmetric error exponent is given by the regularization of Umegaki channel divergence, minimized over $\S_n$. This allows us to prove the reversibility of resource theories of classical-quantum channels in a natural framework, where the distance between channels (and hence also the notion of approximate interconvertibility of channels) is measured in diamond norm, and the set of free operations is the set of all asymptotically resource non-generating superchannels. The results we obtain are similar to the ones in the concurrent and independent work by Hayashi and Yamasaki \emph{[arXiv:2509.07271]}. However the proof of the direct part of the GQSL uses different arguments and techniques to deal with the challenges that arise from dealing with c-q channels. 
\end{abstract}
\tableofcontents

\section{Introduction}
In quantum information theory, information is encoded in quantum states and transmitted through quantum channels.  
Being able to reliably distinguish (or discriminate) between different states and channels is hence an essential requirement for the successful implementation of various quantum information-processing tasks. Such discrimination tasks amount to hypothesis testing problems. It is well-known that optimal error exponents for such hypothesis testing problems are often given by entropic expressions (see \zcref{sec:hypothesis_testing} below for an introduction to quantum hypothesis testing). For example, the optimal asymmetric error exponent of distinguishing independent and identically distributed (IID) copies of quantum states $ρ$ and $σ$ is given by the quantum relative entropy $D(ρ\|σ)$. Such entropic expressions also arise in the case of more complicated non-IID hypothesis testing tasks. In particular, one can consider the composite hypothesis testing problem where the {\em{null hypothesis}} is simple (i.e., given by a tensor power of a single state), whereas the \emph{alternate hypothesis} is given by a set of states. This was originally motivated in particular by the example of entanglement testing, where one wants to decide whether a given state on a bipartite Hilbert space $\HS_{A^nB^n}$ is either equal to $n$ copies of a promised entangled state, or given by some arbitrary separable state along the bipartition $A^n:B^n$. The Generalized Quantum Stein's Lemma (GQSL) states that the asymptotic asymmetric error exponent of this hypothesis testing task is similarly given by an expression involving the quantum relative entropy, but with an additional  infimum over the set of states and a and a regularization (see~\eqref{eq:GQSL_states} below for the exact statement). 
%(now additionally with an infimum over the set of states and a regularizing limit, see \eqref{...} below for the exact statement). 
It is a highly nontrivial result, given that the states allowed in the set corresponding to the alternate hypothesis can be very highly non-IID. This result has found numerous applications (see e.g. \cite{berta_gap_2023} for an overview). However, in 2022, an error was discovered in the original proof of the GQSL \cite{berta_gap_2023}. This was recently resolved independently using two different approaches by Hayashi and Yamasaki~\cite{hayashi_generalized_2024} and Lami~\cite{lami_solution_2024}.

Distinguishing between quantum channels is a considerably more complex problem than distinguishing between states. This is because it involves an additional optimization over the input states to the channels, which may, in general, be entangled. Moreover, the input states could be chosen adaptively. 
In this paper, we prove an extension of the GQSL to an important subclass of quantum channels, namely, classical-quantum (c-q) channels. Similarly to the GQSL for states, we consider the null hypothesis to be simple (corresponding to an $n$-fold tensor power of a fixed c-q channel), and the alternate hypothesis to be composite, corresponding to the channel being a member of a set ${\mathcal{S}}_n$ of c-q channels satisfying certain assumptions (see \zcref{def:axioms} below). We restrict to parallel discrimination strategies, where the input state for the whole channel is picked before any measurements are performed on the outputs. We then prove that, similar to the GQSL for states, the optimal asymmetric error exponent for this hypothesis testing problem is given by a regularization of the corresponding Umegaki channel divergence minimized over the set ${\mathcal{S}}_n$. The assumptions we put on the sets $(\S_n)_n$ are natural analogues of the assumptions used in the proof of the GQSL for states in \cite{hayashi_generalized_2024}. Moreover, we impose an additional assumption that the sets are closed under permutations of the channels (see \zcref{def:axioms} for the explicit statement\footnote{An analogue of this assumption was also required in the original proof \cite{brandao_generalization_2010}, as well as in \cite{lami_solution_2024}.}). An explanation of the reason why we require this assumption %we believe it to be tricky to do without this assumption for the channel problem 
is given in \zcref{remark:why_assumption_needed}. 

One of the main applications of the GQSL is that it allows one to show that certain quantum resource theories (QRTs) are reversible under a class of operations called asymptotically resource non-generating operations (ARNGs) \cite{brandao_generalization_2010, brandao_reversible_2015}. We show that if we quantify closeness of channels in diamond norm, a similar relation between the GQSL and reversibility in QRTs holds for c-q channels. In particular, we show that under natural assumptions on the set of free objects (in this case free c-q channels), the theory becomes reversible when the free operations (superchannels) are ARNG operations (where the resource generation is quantified through the $\max$ channel divergence; this is a natural analogue of the log robustness previously considered). 

Our approach differs from the approach taken in \cite{hayashi_generalized_2024}, where the GQSL for states was applied to Choi states of c-q channels to show reversibility in a different notion of QRTs for c-q channels.
In their approach, the distance between channels is quantified through the trace distance between their (normalized) Choi states, which corresponds to the distance of the output states of the channels averaged over all input states. Instead, we consider the distance between channels in diamond norm, which is the distance of the output states of the channels maximal over all input states. We consider the diamond norm to be a more natural notion of distance, since, in particular, it implies that if $\E_1 $ is approximately transformed to $\E_2$ via a free operation $\Theta$ then $\Theta(\E_1)$ and $\E_2$ are (approximately) indistinguishable, for any possible input state. In contrast, the Choi states of two different channels can be approximately equal, and yet the difference between the outputs of the two channels can be very different for a specific input state. Additionally, because of how Choi states behave under superchannels, the authors of \cite{hayashi_generalized_2024} have to put additional constraints on the set of free operations (i.e., free super-channels), which then becomes a subset of the ARNG operations (which is hard to specify explicitly). Such a restriction is unnecessary in our approach.

From a technical perspective, the general strategy for proving our GQSL for c-q channels is based on the ideas from \cite{hayashi_generalized_2024}, with the generalization to c-q channels relying on techniques from \cite{bergh_composite_2023}. In particular, the {\em{exchange lemma}}, \zcref{lem:rel_entropy_exchange} is extensively used throughout the proof. It allows us to exchange the supremum over input states with the infimum over the sets of channels. Additionally, the direct part of the proof of the GQSL exploits permutation invariance by suitably rephrasing the ideas from \cite{bergh_composite_2023} in \zcref{lem:perm_inputs} and \zcref{lem:perm_cov_channels}. These lemmas state that if the sets of channels are closed under permutations, then the extremizing channels can be chosen to be permutation covariant, with permutation-invariant optimal input states. The requirement that the channels are c-q allows us to often evaluate the supremum over input states in a simple way (see \zcref{lem:divergence_input_reduction}, \zcref{lem:input_reduce_diamond}), which also almost immediately yields the strong converse part of our GQSL.
\medskip

\noindent
\textbf{Note on concurrent work:} Similar results have recently also been obtained in the independent and concurrent work \cite{hayashi2025generalizedquantumsteinslemma}. The general outline of the arguments are similar. However the authors in \cite{hayashi2025generalizedquantumsteinslemma} employ a slightly different proof strategy for the direct part which also uses one less assumption than our approach. Even though their result can thus be considered to be slightly stronger, we still believe that our approach might be of interest. This is because the direct part of our proof of the GQSL makes use of the assumption that the channels are c-q in only one step of the proof (see the text below (\ref{S8_limsup_bound1})), and hence might be more straightforward to generalize to larger classes of channels (which are no longer c-q).

\medskip

\noindent
{\textbf{Layout of the paper:}} The paper is structured as follows: In \zcref{sec:intro} we introduce all the relevant concepts  and explain how they relate to previous work. In \zcref{sec:GQSL} we state and prove our first main result, namely, the Generalized Quantum Stein's Lemma for c-q channels (\Autoref{thm:GQSL}). \zcref{sec:QRT} contains our definition of QRTs for c-q channels and our second main result (\Autoref{thm:QRT}), namely, that such theories are reversible under ARNG operations. Finally, \zcref{sec:QRT_comparison} contains examples highlighting how our construction of QRTs differs from the previous construction involving Choi states of channels, and \zcref{sec:QRT_examples} gives multiple examples of QRTs that satisfy our assumptions and are thus reversible.

\section{Mathematical Introduction, Definitions and Previous Work}\label{sec:intro}
\subsection{Fundamentals of Quantum Information Theory}
Let $\HS$ denote a complex finite-dimensional Hilbert space, and $\BO$ be the set of linear operators acting on $\HS$. We write $\pos$ for the set of positive semi-definite operators acting on $\HS$. For $A, B \in \pos$, we write $A \ll B$ if $\supp{A} \subseteq \supp{B}$  and $A \not \ll B$ otherwise.
%if $\supp{A} \not \subseteq \supp{B}$.  
Let $\DM$ denote the set of density matrices, i.e., the set of positive semi-definite operators of unit trace. A quantum channel (denoted usually as $\E$ or $\F$) is a completely positive trace-preserving map %between density operators
acting on $\BO$. We will label different quantum systems with capital Roman letters ($A$, $B$, $C$, etc.) and often use these letters interchangeably with the corresponding Hilbert space or set of density matrices (i.e., we write $\rho \in \DM[A]$ instead of $\rho \in \DM[\HS_A]$ and $\E: A \to B$ instead of $\E: \BM(\HS_A)\to \BM(\HS_B)$). We will also concatenate these letters to denote tensor products of systems, i.e.,~we will write $\rho \in \DM[RA]$ for $\rho \in \DM[\HS_R \otimes \HS_A]$. We write $\cptp$ for the set of all completely positive trace-preserving maps from $\BM(\HS_A)$ to $\BM(\HS_B)$. The identity operator in $\BM(\HS)$ is denoted as $\IdentityMatrix$, whereas the identity map acting on $\BM(\HS_R)$, where ${\mathcal{H}}_R$ is the Hilbert state associated with a system $R$, is denoted as $\id_R$.

Given a specific orthonormal basis $\{\ket{e_i}\}_{i=1}^{d_A}$ of $\HS_{A}$, we denote the \emph{Choi state} of a channel $\E: A \to B$ as $J(\E) = J(\E)_{\tilde{A}B} \coloneqq (\id_{\tilde{A}} \otimes \E)(\Phi_{\tilde{A}A})$, where $\Phi_{\tilde{A}A} = \ketbra{\Phi}{\Phi}_{\tilde{A}A}$ and $\ket{\Phi}_{\tilde{A}A} = {1 \over \sqrt{d_A}} \sum_{i}\ket{e_i}_{\tilde{A}} \otimes \ket{e_i}_{A}$. The channel output for any input state $ρ \in \DM[A]$ can then be expressed as $\E(ρ) = d_A \Tr_{\tilde{A}}(J(\E)_{\tilde{A}B} (\rho_{\tilde{A}}^T \otimes \IdentityMatrix_B))$, where the transposition is taken in the chosen basis of $\HS_A$.  

The {\em{trace distance}} between any two states $ρ, σ \in \DM$, is given by ${1 \over 2}\norm{ρ - σ}_1$, where $\norm{A}_1 = \Tr(\abs{A}) = \Tr(\sqrt{A^\dagger A})$. The \emph{diamond (norm) distance} between two channels $\E$ and $\F \in \cptp[A \to B]$ is given by $\norm{\E - \F}_{\Diamond} \coloneqq \sup_{\nu \in \DM[RA]} \norm{(\id_{R} \otimes \E)(\nu) - (\id_{R} \otimes \F)(\nu)}_1$, where the supremum can be restricted to a reference system $R$ whose Hilbert space is isomorphic to that of $A$.

\subsection{Quantum Divergences}
    For $\rho \in \DM$ and $\sigma \in \pos$ the \emph{(Umegaki) quantum relative entropy} is defined as~\cite{umegaki_conditional_1962}
	\begin{equation}
		D(\rho\|\sigma) \coloneqq \Tr(\rho(\log \rho - \log \sigma)),
	\end{equation}
	if $\rho \ll \sigma$ and 
	$D(\rho\|\sigma) \coloneqq \infty$ if $\rho \not \ll \sigma$ (with the convention that $0 \log 0 = 0$, and we use the natural  logarithm throughout the paper). One of its most important properties is the {\em{data-processing inequality}}~\cite{lindblad_completely_1975}, which states that for every quantum channel $\E$ acting on $\BM(\HS)$,
	\begin{equation}
		D(\rho\|\sigma) \geq D(\E(\rho)\|\E(\sigma)) \,.
	\end{equation}
	More generally, we call a function of $\rho$ and $\sigma$ a (generalized) divergence if it satisfies the data-processing inequality.
    \smallskip
    We will also make use of the \emph{sandwiched \renyi divergence}
\parencite{muller-lennert_quantum_2013, wilde_strong_2014}, which is for $\rho \geq 0, σ > 0$, defined as
\begin{equation}
    \widetilde{D}_\alpha (\rho\|\sigma) := \frac{1}{\alpha -1} \log \tr \left(\sigma^{\frac{1-\alpha}{2 \alpha}}\rho \sigma^{\frac{1-\alpha}{2 \alpha}}\right)^\alpha\,,
\end{equation}
and can be defined for a $σ$ which is not of full rank by setting $\widetilde{D}_{α}(ρ\|σ) \coloneqq \lim_{ε \to 0} \widetilde{D}_{α}(ρ\|σ + ε \IdentityMatrix)$. Whether this limit is finite depends on whether $ρ \ll σ$ and whether $α$ is smaller or bigger than one, we will be exclusively interested in the case where $α > 1$ and in this case $\widetilde{D}_{α}(ρ\|σ) < \infty$ if and only if $ρ \ll σ$. 
\smallskip

Yet another quantum divergence we use is the so-called \emph{max-relative entropy}. It is the quantum analogue of the classical maximum log-likelihood ratio, and is defined as follows \parencite{datta_min-_2009}:
\begin{equation}
    D_{\max}(\rho\|\sigma) \coloneqq \log \inf\Set{\lambda \in \reals | \rho \leq \lambda \sigma}\,.
\end{equation}
%The quantum max-divergence satisfies the data-processing inequality \parencite{datta_min-_2009}.

A divergence $\D$ is said to satisfy the direct-sum property, if 
\begin{equation}
    \D\infdivx*{\bigoplus_{i=1}^n p_i \rho_i }{\bigoplus_{i=1}^n p_i \sigma_i}= \sum_{i=1}^n p_i \D(\rho_i\|\sigma_i)\,.
\end{equation}
whenever $\rho_i, \sigma_i \in \HS_i$ are two sets of density matrices and $\{p_i\}_{i = 1}^n$ is a probability distribution. This is satisfied in particular for the quantum relative entropy $D$. 

A divergence $\mathbf{D}$ is faithful if, for any density matrix $\rho$, we have that
\begin{eqnarray}
    \mathbf{D}(\rho\| \rho)=0\,.
\end{eqnarray}

{A divergence $\mathbf{D}$ is said jointly quasi-convex if, for any probability distribution $\{p_i\}_{i=1}^n$, we have that}
\begin{eqnarray}
    \mathbf{D}\left(\sum_{i=1}^n p_i\rho_i \bigg\| \sum_{i=1}^np_i\sigma_i \right)\leq \max_{1\leq i\leq n}\mathbf{D}(\rho_i\| \sigma_i)
\end{eqnarray}
for any density matrices $\rho_i,\sigma_i$ belonging to the same system.

When a quantum channel acts only on one part of a quantum state, we will often omit writing the identity map (which acts on the other part) explicitly, i.e., for a channel $\E: A \to B$ and a state $\nu_{RA} \in \DM[RA]$ we use the following notational simplification:
\begin{equation}
    (\id_R \otimes \E)(\nu_{RA}) \equiv \E(\nu_{RA}) \equiv \E(\nu)\,.
\end{equation}

%We will often be in a situation where a channel acts only on a part of an input state (with the untouched part often being called a reference system, and often bearing the letter $R$). In this case, writing the output state as $(\id_R \otimes \E)(\nu_{RA})$ (for a channel $\E: A \to B$) is often quite verbose. To simplify notation we will often make the \emph{identities implicit} and thus write
%\begin{equation}
%    (\id_R \otimes \E)(\nu_{RA}) \equiv \E(\nu_{RA}) \equiv \E(\nu)\,.
%\end{equation}
%To avoid any ambiguity with this notation, we will always do the following:
%\begin{itemize}
 %   \item We will always give labels to the input and output systems of any channel we use, i.e. we will always specify $\E : A \to B$.
 %   \item We will always specify the systems on which a state is defined, i.e. we will always specify $\nu_{RA} \in \DM[RA]$, although we might replace $\nu_{RA}$ with $\nu$ in subsequent equations to further simplify notation. 
%\end{itemize}
%There will then be an implicit identity in $\E(\nu)$ if (and only if) the state $\nu$ has any subsystem on which the channel $\E$ does not act. 

For a divergence $\mathbf{D}$ acting on states (e.g.~$\mathbf{D} = D$, the quantum relative entropy), we define the corresponding channel divergence via an optimization over all possible input states (including a reference system). If $\E, \F \in \cptp[A \to B]$ are two quantum channels,
\begin{equation}\label{eq:def_channel_divergence}
\mathbf{D}(\E\|\F) \coloneqq \sup_{\nu_{RA} \in \DM[RA]} \D((\id_R \otimes \E)(\nu_{RA}) \|(\id_R \otimes \F)(\nu_{RA})) = \sup_{\nu_{RA} \in \DM[RA]} \D(\E(\nu)\|\F(\nu))
\end{equation}
where we used our abbreviated notation with implicit identities from above. In particular, for the choice $\mathbf{D}=D$, we refer to the divergence as the \emph{Umegaki channel divergence}. Using a purification argument, one can easily see that the data-processing inequality for $\D$ implies that the reference system $R$ can be chosen isomorphic to $A$ (see e.g. \cite{bergh_composite_2023} for an explicit argument).

A result that we will make use of multiple times is that for the $D_{\max}$ divergence, this optimization is known to be achieved for a maximally entangled state (\cite[Lemma 12]{wilde_amortized_2020}), i.e.
\begin{lemma}[\cite[Definition~19]{diaz_using_2018},  \cite[Lemma 12]{wilde_amortized_2020}] \label{lem:channel_divergence_maximally_entangled_state}
Let $\E, \F \in \cptp[A \to B]$ be two quantum channels, then for any maximally entangled state $\Phi$, i.e. for any basis $\{\ket{e_i}\}_{i=1}^{d_A}$ of $A$ and corresponding vector $\ket{\Phi}_{\tilde{A}{A}} =  {1 \over \sqrt{d_A}} \sum_{i} \ket{e_i}_{\tilde{A}} \otimes \ket{e_i}_{A}$ with state $\Phi =\ketbra{\Phi}{\Phi}_{\tilde{A}A}$, we have that
\begin{equation}
    D_{\max}(\E(\Phi)\|\F(\Phi)) = D_{\max}(\E\|\F)\,.
\end{equation}
\end{lemma}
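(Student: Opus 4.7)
The plan is to prove the two inequalities separately. The direction $D_{\max}(\E(\Phi)\|\F(\Phi)) \leq D_{\max}(\E\|\F)$ is trivial, since the left-hand side corresponds to one specific input state in the supremum defining the right-hand side (see \eqref{eq:def_channel_divergence}). The substance of the lemma is the reverse inequality. The first preparatory step is to reduce the supremum in $D_{\max}(\E\|\F)$ to pure input states with reference system isomorphic to $A$: given any $\nu_{RA} \in \DM[RA]$, take a purification $\ket{\psi}_{R'RA}$; by the data-processing inequality applied to the partial trace over $R'$, one has $D_{\max}(\E(\nu)\|\F(\nu)) \leq D_{\max}(\E(\psi)\|\F(\psi))$. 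Restricting the purification to the support of $\tr_{R'R}\psi$ (which has rank at most $d_A$), we may further assume the reference system has dimension $d_A$, i.e.\ $R \cong \tilde{A}$.

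The second step is the standard \enquote{ricochet} identity: every pure state $\ket{\psi}_{\tilde{A}A}$ can be written as
\begin{equation}
    \ket{\psi}_{\tilde{A}A} = \sqrt{d_A}\,(X_{\tilde{A}} \otimes \IdentityMatrix_A)\ket{\Phi}_{\tilde{A}A}
\end{equation}
for some operator $X_{\tilde{A}} \in \BM(\HS_{\tilde{A}})$ (whose matrix elements in the chosen basis are exactly the coefficients $\psi_{ij}$ of $\ket{\psi}$). Since $\E$ acts only on $A$ and the operator $X_{\tilde{A}} \otimes \IdentityMatrix_A$ commutes with $\id_{\tilde{A}} \otimes \E$, we obtain
\begin{equation}
    (\id_{\tilde{A}} \otimes \E)(\ketbra{\psi}{\psi}) = d_A\,(X_{\tilde{A}} \otimes \IdentityMatrix_B)\,(\id_{\tilde{A}} \otimes \E)(\Phi)\,(X_{\tilde{A}}^\dagger \otimes \IdentityMatrix_B),
\end{equation}
and the analogous identity holds with $\F$ in place of $\E$.

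The final step combines these two: set $\lambda \coloneqq \exp\!\big(D_{\max}(\E(\Phi)\|\F(\Phi))\big)$, so by definition $(\id_{\tilde{A}} \otimes \E)(\Phi) \leq \lambda\,(\id_{\tilde{A}} \otimes \F)(\Phi)$. Congruence by the operator $\sqrt{d_A}(X_{\tilde{A}} \otimes \IdentityMatrix_B)$ preserves operator inequalities between positive semidefinite operators, so we deduce
\begin{equation}
    (\id_{\tilde{A}} \otimes \E)(\ketbra{\psi}{\psi}) \leq \lambda\,(\id_{\tilde{A}} \otimes \F)(\ketbra{\psi}{\psi}),
\end{equation}
which gives $D_{\max}(\E(\psi)\|\F(\psi)) \leq \log \lambda = D_{\max}(\E(\Phi)\|\F(\Phi))$. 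Taking the supremum over all such pure $\ket{\psi}$ (which, by the first step, computes $D_{\max}(\E\|\F)$) yields the claim.

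There is no real obstacle here; this is a standard textbook argument. The only points requiring some care are the reduction to pure inputs with $R \cong A$ via purification plus data processing, and the observation that the chosen $\Phi$ in the statement corresponds to an arbitrary orthonormal basis of $A$, so the ricochet identity is available for any prescribed maximally entangled state.
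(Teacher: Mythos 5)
Your proof is correct and is exactly the standard ricochet argument from the references the paper cites for this lemma (the paper does not reprove it but imports it from Diaz et al.\ and Wilde et al.): reduce to pure inputs with $R \cong A$ via purification and data processing, write $\ket{\psi} = \sqrt{d_A}(X_{\tilde{A}} \otimes \IdentityMatrix_A)\ket{\Phi}$, and note that congruence by $X_{\tilde{A}} \otimes \IdentityMatrix_B$ preserves the operator inequality defining $D_{\max}$. All steps are sound, including the trivial direction and the handling of the supremum.
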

\subsection{Classical-Quantum Channels}
We use the symbols $\X$ and $\Y$ to denote finite alphabets (or equivalently finite sets corresponding to discrete classical sample spaces or classical systems). We can embed such a classical system $\X$ into a quantum system by associating with it a finite-dimensional Hilbert space $\HS_{\X}$ that has an orthonormal basis, $\{\ket{x}\}_{x \in \X}$, labeled by the elements $x$ of $\X$. A classical-quantum (c-q) channel $\E: \X \to A$ can be seen as a channel from the classical set $\X$ , i.e., a channel that outputs the state of a quantum system $A$, when an input $x \in \X$ is sent through it. However, since we will employ many results stated for general quantum channels, we want to treat the set of c-q channels as a subclass of the set of all quantum channels. For this, we view a c-q channel as a fully quantum channel where the input quantum state is first subjected to a projective measurement in the basis $\{\ket{x}\}_{x \in \X}$ of $\HS_{\X}$.  With a slight abuse of notation, we also denote this channel as $\E:\X \to A$, and the set of c-q channels as $\CQ[\X \to A]$. Explicitly, this means that the set $\CQ[\X \to A]$ contains all channels of the form
\begin{equation}
    \E(ρ) = \sum_{x \in \X} \Tr[ρ\ketbra{x}{x}] ω_{x}
\end{equation}
for any $\rho \in \DM[\HS_{\mathcal{X}}]$, where $\{ω_{x} \in \DM[A]\}_{x}$ is a collection of quantum states. 
Furthermore, we often also write $\DM[\X] = \DM[\HS_{\X}]$ for the set of \emph{general} quantum states on $\HS_{\X}$, in particular, a state $\DM[R\X]$ can be entangled.

In many cases, optimizations of distance measures over input states for c-q channels are achieved in a particularly simple manner, where the intuition is that c-q channels cannot take advantage of entangled input states (in fact, they are a special case of entanglement-breaking channels). This is very useful in particular for the proof of the strong converse in \autoref{ch_strong_converse}, since the following lemmas imply that the optimal input state is always a tensor product state, when applied to the c-q channels $\E^{\n},\F_n:\mathcal{X}^n\rightarrow A^n$. For this, we prove the following two Lemmas in \zcref{Appendix}:
\getkeytheorem{stored:lem:divergence_input_reduction}

\getkeytheorem{stored:lem:diamond_input_reduction}

\subsection{Permutations and the Symmetric Group}
\label{ch:perm}
We write $\Sn$ for the symmetric group, i.e., the group of permutations on $n$ elements. We denote elements of this group as $\pi \in \Sn$, and their unitary representations on a Hilbert space $\HS\n$ -- acting by permuting the $n$ copies -- as $P_{\HS}(\pi)$.\\
\\
This means that $\Sn$ acts on a density operator $\nu_n \in \mathcal{D}(\mathcal{H}_A^{\otimes n})$ as
\begin{eqnarray}
    \nu_n \mapsto P_A(\pi)\nu_n P_A(\pi)^\dag 
\end{eqnarray}
We say that $\nu$ is \textit{permutation invariant} if $P_A(\pi)\nu_n P_A(\pi)^\dag=\nu_n$ for all permutations $\pi\in\Sn$. \\
\\
Similarly, $\Sn$ acts on a quantum channel $\F \in \CPTP(A^n\rightarrow B^n)$ to give another quantum channel $\F_{\pi}$, defined as 
\begin{eqnarray}
    \F_{\pi}(\nu_n):=P_B(\pi)^\dag\F(P_A(\pi)\nu_n P_A(\pi)^\dag)P_B(\pi)
\end{eqnarray}
The channel $\F$ is said to be \textit{permutation covariant} if $\F_{\pi}=\F$ for all $\pi\in\Sn$. \\
\\
Henceforth, we use the standard notation 
\begin{eqnarray}
    \mE:=P_A(\pi)^\dag\E\n(P_\X(\pi)\nu_n P_\X(\pi)^\dag)P_A(\pi)
\end{eqnarray}
for any $\nu_n\in\mathcal{D}(R\X^n)$, and $\pi\in\Sn$.
\subsection{Pinching Maps and the Pinching Inequality}
Let $\mathcal{H}_A$ be a finite-dimensional Hilbert space, and let $(E_j)_{j=1}^k$ be a (finite) set of orthogonal projections satisfying $\sum_{j}E_j=\mathds{1}_A$. Then we define the \textit{pinching} map $E:A\rightarrow A$ with respect to the projectors $\{E_j\}_j$ as
\begin{eqnarray}
    E(\rho):=\sum_{j=1}^k{E_j\rho E_j}\,,
\end{eqnarray}
which can easily be seen to be a quantum channel.

Similarly, we can define a pinching map with respect to a state $\sigma$. If the spectral decomposition of $\sigma$ is $σ = \sum_{j=1}^k{\lambda_jE_j}$ for distinct eigenvalues $\lambda_j$ and corresponding eigen-projectors $E_j$, then we define the pinching map \emph{with respect to $σ$} to be the pinching map with respect to the eigen-projectors $\{E_j\}_j$. 

One of the key properties of the pinching map is that, if $E$ is a pinching map with respect to $\sigma$, then for any state $\rho$, $E(\rho)$ and $\sigma$ commute. This is because the pinching makes $E(\rho)$ block diagonal with respect to the eigenspaces of $\sigma$, by deleting the relevant off-diagonal blocks of $\rho$. 

Furthermore, the \textit{pinching inequality} \cite{hayashi_optimal_2002} gives a relation between the original state and the pinched state. It states that, for any state $\rho$ and a pinching map $E$ with $k$ distinct mutually orthogonal projectors,
\begin{eqnarray}
    \label{eq:pinching_inequality}
    \rho \leq kE(\rho)\,.
\end{eqnarray}
In particular, when the pinching map is with respect to $σ$, then $k$ is the number of distinct spectral points of $σ$. There are quite a few cases in which this number %of distinct spectral points 
can be conveniently bounded, for example if $σ_n \in \DM[\HS\n]$ is a permutation-invariant state (in particular also if it is a tensor-product state) then the number of spectral points is bounded as $k = \mathcal{O}(\poly(n))$ (this is well-known see e.g.\ \cite[Lemma II.1]{csiszar_method_1998}). \\
\\
The pinching inequality can also be used to obtain a bound on the quantum relative entropy which is shown in the following lemma,, and which we use in \Autoref{lem:main_iteration} below.
\begin{lemma}
    \label{lem:pinching_bound}
    Let $\rho,\sigma$ be density matrices acting on the same Hilbert space, and let $E$ denote the pinching map with respect to the spectral projectors of $\sigma$. Then
    \begin{eqnarray}
        D(\rho \| \sigma) \leq D(E(\rho) \| \sigma)+\log k
    \end{eqnarray}
    where $k=\abs{\spec \sigma}$.
\end{lemma}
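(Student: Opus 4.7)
The plan is to derive the bound directly from the pinching inequality $\rho \leq k E(\rho)$ combined with operator monotonicity of the logarithm. First I would use the fact that $\log$ is operator monotone on the cone of positive operators, so that the pinching inequality yields
\begin{equation}
\log \rho \leq \log(k\, E(\rho)) = (\log k)\, \IdentityMatrix + \log E(\rho).
\end{equation}
(If $\rho$ is not of full rank, one can replace $\rho$ with $\rho + \varepsilon \IdentityMatrix$ and $E(\rho)$ with $E(\rho) + \varepsilon \IdentityMatrix$, apply the inequality, and take $\varepsilon \to 0$ at the end.) Tracing against $\rho$ gives
\begin{equation}
\tr[\rho \log \rho] \leq \log k + \tr[\rho \log E(\rho)].
\end{equation}

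Next I would exploit the defining property of the pinching map: since $E(\rho)$ commutes with $\sigma$, both $\log E(\rho)$ and $\log \sigma$ are block-diagonal with respect to the spectral projectors $\{E_j\}_j$ of $\sigma$. This implies the two identities
\begin{equation}
\tr[\rho \log E(\rho)] = \tr[E(\rho) \log E(\rho)], \qquad \tr[\rho \log \sigma] = \tr[E(\rho) \log \sigma],
\end{equation}
each obtained by inserting $\sum_j E_j (\cdot) E_j$ around $\log E(\rho)$ or $\log \sigma$ and using the cyclic property of the trace together with $E_j^2 = E_j$.

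Combining these, I would subtract $\tr[\rho \log \sigma]$ from both sides of the trace inequality above, rewrite the right-hand side using the two identities, and read off
\begin{equation}
D(\rho\|\sigma) = \tr[\rho \log \rho] - \tr[\rho \log \sigma] \leq \log k + \tr[E(\rho) \log E(\rho)] - \tr[E(\rho) \log \sigma] = D(E(\rho)\|\sigma) + \log k,
\end{equation}
which is the desired inequality. The only mildly delicate point is handling the support condition for the logarithm; this is the main (minor) obstacle and is dispatched by the $\varepsilon$-regularization argument mentioned above, using that $\rho \ll \sigma$ is necessary for the right-hand side to be finite and that $\supp \rho \subseteq \supp E(\rho)$ in general.
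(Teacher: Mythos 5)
Your proposal is correct and follows essentially the same route as the paper: it combines the pinching inequality with operator monotonicity of the logarithm to control $\tr[\rho\log\rho - \rho\log E(\rho)]$ by $\log k$, and uses the block-diagonal structure of $\log E(\rho)$ and $\log\sigma$ with respect to the spectral projectors of $\sigma$ to convert the remaining traces into $D(E(\rho)\|\sigma)$. The only difference is cosmetic (you apply the pinching inequality first and the trace identities second, while the paper does the reverse), plus your explicit $\varepsilon$-regularization of the support issue, which the paper leaves implicit.
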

\begin{proof}
    We have
    \begin{eqnarray}
        D(\rho \| \sigma) - D(E(\rho) \| \sigma) &=& \Tr[\rho\log\rho - \rho\log\sigma-E(\rho)\log E(\rho)+E(\rho)\log \sigma] \nonumber\\
        &\overset{(a)}{=}& \Tr[\rho\log\rho - E(\rho)\log\sigma-\rho\log E(\rho)+E(\rho)\log \sigma] \nonumber\\
        &=& \Tr[\rho\log\rho-\rho\log E(\rho)]
    \end{eqnarray}
    where (a) uses the fact that, for any Hermitian matrices $M,N$ and pinching $E_N$ with respect to the spectral projectors of $N$, we have that $\Tr[MN]=\Tr[E_N(M)N]$, and the fact that $σ$ and $\log σ$ have the same spectral projectors. \\
    \\
    We now use the pinching inequality (\ref{eq:pinching_inequality}) and the operator monotonicity of the logarithm to get
    \begin{eqnarray}
        \log \rho \leq \1\log k+\log E(\rho)
    \end{eqnarray}
    Hence,
    \begin{eqnarray}
        \Tr[\rho\log\rho-\rho\log E(\rho)] \leq \log k
    \end{eqnarray}
    as required.
\end{proof}

\subsection{Quantum Hypothesis Testing}\label{sec:hypothesis_testing}
Hypothesis testing deals with the question of asserting the truth of one of multiple possible hypotheses given some object or data. In the simplest case, the given object is a quantum state and the task is to identify which of two (fully specified) options it is. Generically, if there are two hypotheses, there are two ways of making an error (mistaking the first for the second, or mistaking the second for the first). Given a decision strategy, we will call the corresponding probabilities of making such an error the type-I and type-II error probabilities (we will also use the term type-I and type-II errors interchangeably). In the simple case where the task is to identify a given state as either $ρ$ or $σ$, these two probabilities are given by
\begin{align}
α &\coloneqq \mathbb{P}[\text{we claim the state is $σ$ } | \text{ state is actually $ρ$}]  \qquad \qquad \text{ the type-I error,}\\
β &\coloneqq \mathbb{P}[\text{we claim the state is $ρ$ } | \text{ state is actually $σ$}]  \qquad \qquad \text{ the type-II error.}
\end{align}

The most general way to arrive at such a decision in the case of distinguishing $ρ$ and $σ$ is by performing a binary (i.e., a two-outcome) POVM measurement, which is fully specified by one of its elements $0 \leq M \leq \IdentityMatrix$, and we use the convention that an outcome corresponding to the measurement $M$ (resp.~$\IdentityMatrix - M$) leads to the inference that the state is $\rho$ (resp.~$\sigma$). In that case the corresponding error probabilities are given by
\begin{align}
α &= \Tr((\IdentityMatrix - M) ρ) = 1 - \Tr(Mρ)\\
β &= \Tr(M σ) \,.
\end{align}

We can then optimize over $M$ to find the optimal measurement in a certain sense. In the so-called asymmetric setting of hypothesis testing the aim is to minimize the type-II error probability given the constraint that the type-I error is below a chosen threshold $ε \in [0,1]$:
\begin{equation}
    \min_{M:\, 0 \leq M \leq \IdentityMatrix_{\HS}}\Set{\Tr(M σ) | 1 - \Tr(M ρ) \leq ε} = \min_{\substack{M:\, 0 \leq M \leq \IdentityMatrix_{\HS} \\ \Tr(M \rho) \geq 1- \varepsilon}} \Tr(M σ)
\end{equation}

The negative logarithm of this is called \emph{hypothesis testing relative entropy} \parencite{wang_one-shot_2012}:
\begin{equation}\label{eq:hypothesis_testing_relative_entropy}
    D_H^{\varepsilon}(\rho\|\sigma) \coloneqq - \log \qty[\min_{\substack{M:\, 0 \leq M \leq \IdentityMatrix_{\HS} \\ \Tr(M \rho) \geq 1- \varepsilon}} \Tr(M σ)] = \max_{\substack{M:\, 0 \leq M \leq \IdentityMatrix_{\HS} \\ \Tr(M \rho) \geq 1- \varepsilon}} - \log(\Tr(M σ))\,.
\end{equation}
It has been given this name, since it shares some properties with the quantum relative entropy, in particular, it satisfies the data-processing inequality \parencite{wang_one-shot_2012}. 

One often considers the case in hypothesis testing where one can make use of many IID (independent and identically distributed) copies of the quantum states associated to the hypotheses. In that case, the asymmetric error probability will decay exponentially in the number of samples (usually denoted as $n$), and one can study the asymptotic rate of this decay, which is known as the \emph{asymmetric error exponent} and given by 
\begin{equation}
    \lim_{n \to \infty}{1 \over n} D_H^{ε}(ρ\n\|σ\n)\,,
\end{equation}
which is the optimal achievable decay rate of the type-II error, such that also the type-I error stays below $ε$. The \emph{quantum Stein's lemma} states that this limit is given by the relative entropy between $ρ$ and $σ$, i.e., for $ε \in (0,1)$ \parencite{hiai_proper_1991, ogawa_strong_2000}
\begin{equation}\label{eq:quantum-steins-lemma-weak-converse}
    \lim_{n \to \infty}{1 \over n} D_H^{ε}(ρ\n\|σ\n) = D(ρ\|σ)\,.
\end{equation}
The fact that the right-hand side does not depend on $ε$, and so the exponential decay rate does not depend on the allowed type-I error, is known as the strong-converse property. 
\subsection{Hypothesis Testing for Channels}
In quantum channel discrimination, the given objects and hypotheses are not quantum states, but quantum channels, i.e., one is given an unknown quantum channel (which can be thought of as a black-box which takes a quantum state as an input and outputs another quantum state) and the promise that the black-box acts like one of two possible quantum channels, and the task is to find out which one. This gives rise to the same notion of type-I and type-II error probabilities for the two ways of confusing the two hypotheses as above. In the actual decision-making process though, we now have the additional freedom of picking an input state for the channel, which in general could also be entangled with a reference system that is untouched by the channel. If we want to distinguish between the two quantum channels $\E, \F \in \cptp[A \to B]$, and if we are allowed to use the channel just a single time, we can optimize over both the input state and the measurement and obtain the following expression for the optimal type-II error probability with a type-I error constraint (similar to the state case above):
\begin{equation}
    \min_{\nu \in \DM[RA]} \min_{\substack{M: \, 0 \leq M \leq \IdentityMatrix_{RB} \\ \Tr(M \E(\nu)) \geq 1- \varepsilon}} \Tr(M \E(ν))
\end{equation}
where we used the notation involving implicit identities as explained above. We can use this to define a hypothesis-testing relative entropy for channels which happens to be also the channel divergence (as defined above) associated to the hypothesis-testing relative entropy for states:
\begin{equation}
    D_H^{ε}(\E\|\F) \coloneqq - \log \qty[ \min_{\nu \in \DM[RA]} \min_{\substack{M: \, 0 \leq M \leq \IdentityMatrix_{RB} \\ \Tr(M \E(\nu)) \geq 1- \varepsilon}} \Tr(M \E(ν))] = \max_{ν \in \DM[RA]} D_H^{ε}(\E(\nu)\|\F(\nu))
\end{equation}

If one is allowed to use the channel more than once, the situation becomes more interesting as one could employ an adaptive discrimination strategy and make the input states of subsequent channel uses depend on the output of previous channel uses. It is not hard to construct examples where one can show that this can lead to strictly better error probabilities for a finite number of channel uses \cite{harrow_adaptive_2010, salek_usefulness_2022}. Adaptive strategies can also lead to better asymptotic error exponents in some scenarios \cite{salek_usefulness_2022}, but for the asymmetric error exponent in particular (i.e., the optimal decay rate of the type-II error under a type-I error threshold), one can show that this is not the case, i.e., the optimal asymptotic exponent can also be achieved by a non-adaptive (also called parallel) strategy \cite{fang_chain_2020}. Such a parallel strategy fixes the input state (which can still be entangled between different channels inputs) and the measurement at the beginning, which leads to the logarithm of the optimal type-II error probability (given $n$ channel uses and a type-I error threshold of $ε$) being given by $D_H^{ε}(\E\n\|\F\n)$. The asymptotic limit of this is then given by \cite{wang_resource_2019}
\begin{equation}
    \lim_{ε \to 0} \lim_{n \to \infty} {1 \over n} D_H^{ε}(\E\n\|\F\n) = D^{\reg}(\E\|\F) \coloneqq \lim_{n \to \infty} {1 \over n} D(\E\n\|\F\n)\,.
\end{equation}
Note that it is not known whether a strong-converse property holds for this problem with general quantum channels (see also  \cite{fawzi_defining_2021, fang_towards_2022}), and so the best known result at the moment involves an additional limit $ε \to 0$ as stated.

If the given channels are c-q channels the situation simplifies a bit further, in that one can show using \zcref{lem:divergence_input_reduction} that
\begin{equation}
D^{\reg}(\E\|\F) = D(\E\|\F)
\end{equation}
and additionally the strong-converse property holds in this case, so a limit $ε \to 0$ is not needed \cite{wilde_amortized_2020}. Furthermore, in the c-q case, equality between adaptive and non-adaptive error exponents holds also for more than just the asymmetric error exponent (for example for Chernoff and Hoeffding exponents, where equality fails to hold for general quantum channels \cite{salek_usefulness_2022}).

\subsection{Composite Hypothesis Testing}\label{sec:composite_HT}
In composite hypothesis testing, the hypotheses need no longer be simple, which means they can include more than one possible option. For example, in the case of distinguishing states, instead of a hypotheses specifying that the state is exactly $ρ$, the hypothesis just specifies that the given state comes from a specific set $S \subset \DM[\HS]$. The task is still just to find the correct hypothesis, so we need not identify the state exactly, but just correctly identify which set it comes from. Since now, given one of the two hypotheses, there can be many states we could encounter from this hypothesis, we have to additionally specify how we define type-I and type-II errors, since the error probability (given a specific strategy) will in general depend on which state we do encounter. Here, we will be looking at \emph{worst-case error probabilities}, i.e., given a decision strategy (i.e., a measurement), the type-I and type-II error probabilities are given by the worst-case over all states coming from the sets corresponding to the hypotheses. Specifically, if the first hypothesis corresponds to the set $S \subset \DM$ and the second hypothesis to the set $T \subset \DM$, and we make a decision based on the measurement $0 \leq M \leq \IdentityMatrix_{\HS}$, we will arrive at type-I and type-II errors:
\begin{align}
    α(M,S) &\coloneqq \sup_{ρ \in S} \Tr((\IdentityMatrix - M) ρ) \\
    β(M,T) &\coloneqq \sup_{σ \in T} \Tr(Mσ) \,.
\end{align}
Proceeding similarly as above, this leads to the following notion of hypothesis testing relative entropy between sets
\begin{equation}
    D_H^{ε}(S\|T) \coloneqq -\log\left[\min_{\substack{0\leq M \leq \1 \\ \alpha(M,S)\leq \varepsilon}}\beta(M,T)\right] 
\end{equation}
% -\log \qty[ \min_{\substack{M:\, 0 \leq M \leq \IdentityMatrix \\ \textcolor{cyan}{\inf}_{ρ \in S} \Tr(M ρ) \geq 1- \varepsilon}} \sup_{σ \in T} \Tr(M ρ)]
One can show that if $S$ and $T$ are convex, then \cite[Lemma 31]{fang_generalized_2025}
\begin{equation}\label{eq:hypothesis_testing_relative_entropy_exchange}
    D_H^{ε}(S\|T) = \inf_{\substack{ρ \in S \\ σ \in T}} D_H^{ε}(ρ\|σ)\,.
\end{equation}
This also motivates the following definition for generalized divergences $\mathbf{D}$:
\begin{equation}
\mathbf{D}(S\|T) \coloneqq \inf_{\substack{ρ \in S\\σ \in T}}\mathbf{D}(ρ\|σ)\,.
\end{equation}
where we will also write $\D(ρ\|T) = \D(\{ρ\}\|T)$ if the set $S = \{ρ\}$ contains only a single element. 

The \emph{Generalized Quantum Stein's Lemma} (GQSL) deals with the asymptotic asymmetric error exponent in a composite hypothesis testing problem where only the alternate hypothesis (i.e., the second argument in the divergences) is composite and is given by a sequence of sets. In particular, let $(S_n \subset \DM[\HS\n])_n$ be a sequence of sets of states, such that the following holds:
\begin{enumerate}
\item Each set $S_n$ is closed and convex as a subset of $\DM[\HS\n]$.
\item The sets $(S_n)$ are closed under tensor products, i.e., if $σ_n \in S_n$ and $σ_m \in S_m$ then $σ_n \otimes σ_m \in S_{n + m}$. 
\item The set $S_1$ contains a full-rank state.
\end{enumerate}

Then for any $ρ \in \DM$ and any $ε \in (0, 1)$, the Generalized Quantum Stein's Lemma states that \cite{hayashi_generalized_2024} (see also \cite{brandao_generalization_2010, berta_gap_2023, lami_solution_2024}):
\begin{equation}
\lim_{n \to \infty} {1 \over n} D_H^{ε}(ρ\n\|S_n) = \lim_{n \to \infty} {1 \over n} D(ρ\n\|S_n)\,.
\label{eq:GQSL_states}
\end{equation}

Note that the original (incorrect) proof \cite{brandao_generalization_2010} and the proof in \cite{lami_solution_2024} additionally use the following two assumptions
\begin{enumerate}
    \item Tracing out any one of the $n$ systems in a state $σ_n \in S_{n}$ yields a state in $S_{n - 1}$
    \item For any state $σ_n \in S_{n}$ and a permutation $\pi \in \Sn$ with corresponding representation $P_{\HS}(\pi)$ that permutes the $n$ copies of $\HS$, also $P_{\HS}(\pi)σ_n P_{\HS}(\pi)^\dagger \in S_n$.
\end{enumerate}

\subsection{Composite Hypothesis Testing for Channels}\label{sec:composite_HT_channels}
Generalizing the previous section to quantum channels, we are now given again a channel as a black-box with the promise that it corresponds to a channel out of two possible sets $\S, \T \subset \cptp[A \to B]$. After choosing an input state $\nu \in \DM[RA]$ and a measurement $0 \leq M \leq \IdentityMatrix_{RB}$ we again introduce the worst-case notion of error probabilities
\begin{align}
    α(M,\mathcal{S},\nu) &\coloneqq \sup_{\E \in \S} \Tr((\IdentityMatrix - M) \E(\nu)) \\
    β(M,\mathcal{T},\nu) &\coloneqq \sup_{\F \in \T} \Tr(\F(\nu) M)
\end{align}
and corresponding hypothesis testing relative entropy
\begin{equation}
D_H^{ε}(\S\|\T)  \coloneqq -\log\left[\min_{\nu\in\mathcal{D}(RA)}\min_{\substack{0\leq M\leq\1 \\ \alpha(M,\mathcal{S},\nu)\leq \varepsilon }}\beta(M,\mathcal{T},\nu)\right].
\end{equation}
% -\log \qty[ \min_{\nu \in \DM[RA]} \min_{\substack{M:\, 0 \leq M \leq \IdentityMatrix_{RB} \\ \textcolor{cyan}{\inf}_{\E \in \S} \Tr(M \E(\nu)) \geq 1- \varepsilon}} \sup_{\F \in \T} \Tr(M \F(\nu))] \, =
 It is easy to see that
\begin{equation}
    D_H^{ε}(\S\|\T) = \max_{\nu \in \DM[RA]} D_H^{ε}(\S[\nu]\|\T[\nu])
\end{equation}
where we used the notation $\S[\nu] \coloneqq \Set{\E(\nu) | \E \in \S}$ for all the output states of the channels in the set $\S$ given the input state $\nu$. Using \eqref{eq:hypothesis_testing_relative_entropy_exchange} one then also finds that if $\S$ and $\T$ are convex
\begin{equation}
    D_H^{ε}(\S\|\T) = \max_{\nu \in \DM[RA]} \inf_{\substack{\E \in \S\\ \F \in \T}} D_H^{ε}(\E(\nu)\|\F(\nu))\,.
\end{equation}
This again motivates the following shorthand notation for any divergence $\D$: 
\begin{equation}\label{eq:shorthand_composite_channels}
    \mathbf{D}(\S\|\T) \coloneqq \sup_{\nu \in \DM[RA]}\inf_{\substack{\E \in \S\\\F \in \T}}\mathbf{D}(\E(\nu)\|\F(\nu))
\end{equation}
We prove in \Autoref{lem:rel_entropy_exchange} below that the infimum and supremum can be exchanged for certain divergences. However, for many divergences, the condition of this lemma is not satisfied, and hence the order of the supremum and infimum matters. Similarly as for the state case we also write
\begin{equation}
    \D(\E\|\S) \coloneqq  \D(\{\E\}\|\S)\,.
\end{equation}

Given a sequence of sets $\S_n, \T_n \subset \cptp[A^n \to B^n]$, and under some additional assumptions on these sets (in particular, that the channels they contain are (convex combinations of) tensor products of channels), asymmetric error exponents for such channel discrimination tasks using adaptive and parallel strategies were explored in \cite{bergh_composite_2023}. 

\subsection{Quantum Resource Theories}
\label{sec:intro_QRT}
An important problem in quantum information theory is to determine the amount of resources needed to perform specific communication and information-processing tasks efficiently. The setting of quantum resource theories (QRTs) provides a powerful and unifying framework for addressing the above questions and, more broadly, for analyzing a wide range of phenomena in quantum physics. In this setting, for a task in question, one distinguishes between free resources and costly (or valuable) ones. One also identifies a restricted set of operations, called free operations, which are the physically allowed transformations that leave the set of free resources unchanged (in some exact sense to be specified). For example, in the QRT of bipartite entanglement, free resources are separable states, the valuable ones are entangled states, and free operations are chosen to contain at least local operations and classical communication (LOCC) between the two distant parties (say, Alice and Bob), each possessing a subsystem of the underlying bipartite quantum system. The QRT framework, which started with the consideration of static resources encoded in quantum states, has more recently been extended to the consideration of dynamic resources, namely, quantum channels and measurements \cite{gour_how_2019, }. In any QRT, a question of interest is that of convertibility of one valuable resource to another, under the allowed set of operations, and the determination of whether such a conversion is reversible. The question of reversibility is usually studied in the so-called asymptotic setting, in which an arbitrarily large number of the initial resource is given. The allowed set of operations are required to at least satisfy the property that, in the asymptotic limit, they do not generate a valuable resource when acting on a free resource. This set of operations is also referred to as {\em{asymptotically resource non-generating (ARNG) operations}}. Reversibility of the QRT is then characterized in terms of asymptotic conversion rates. For two given valuable resources $R_1$ and $R_2$, the asymptotic conversion rate  $r(R_1 \to R_2)$ quantifies the number of copies of $R_2$ that can be obtained per copy of $R_1$ with vanishing error in the limit $n \to \infty$, where $n$ denotes the number of identical copies of the resource $R_1$ that one starts with. The QRT is said to be reversible if, for any two valuable resources $R_1$ and $R_2$, $r(R_1 \to R_2) =  r(R_2 \to R_1)^{-1}$. Despite the large degree of freedom in how one defines free resources and free operations, unexpected similarities emerge among different QRTs in terms of resource measures and resource convertibility. 

The convertibility of valuable resources in a QRT can be phrased in terms of a second law, in analogy with the well-known second law of thermodynamics. The thermodynamic version is typically expressed as the principle that the entropy of a closed system never decreases. However, as made precise in the axiomatic formulation of Lieb and Yngvason~\cite{lieb_physics_1999} (see also~\cite{giles_mathematical_1964}), it can equivalently be understood as the existence of a total ordering of equilibrium thermodynamic states, which dictates which state transformations are possible under adiabatic processes: a state $X_1$ can be converted to a state $X_2$ under an adiabatic process if and only if $S(X_1) \leq S(X_2)$, where $S(\cdot)$ denotes the entropy. Thus, the convertibility of the states is characterized by a single function, namely, the entropy.

In the QRT of bipartite entanglement of pure states, a similar result is valid: A state $\ket{\psi_{AB}}$ can be converted to a state $\ket{\phi_{AB}}$ by LOCC asymptotically if and only if $E(\psi) \geq E(\phi)$, where $E(\psi)$ denotes the entropy of entanglement of the bipartite state $\ket{\psi_{AB}}$ , and is defined as the von Neumann entropy of its reduced state. Hence, the entropy of entanglement plays a role similar to that of the thermodynamic entropy. This result is valid in the asymptotic setting mentioned above, in which an arbitrarily large umber of identical copies of $\ket{\phi_{AB}}$ are shared between Alice and Bob.

More generally, a QRT is said to exhibit a second law if the convertibility of its valuable resources under a restricted set of operations is characterized by a single function of its resources. This happens if the asymptotic conversion rate (mentioned above) takes the form $r(R_1 \to R_2)= f(R_1)/f(R_2)$, for some function $f(\cdot)$. The latter is referred to as the {\em{relative entropy of resource}}. In thermodynamics $f$ is the thermodynamic entropy, while in the theory of bipartite pure state entanglement, it is the entropy of entanglement. The restricted set of operations are adiabatic processes in the former and LOCC operations in the latter.

Even though the interconversion of pure bipartite entangled states satisfies the reversibility criterion mentioned above, reversibility under LOCC does not hold for mixed entangled states. In 2010, Brandao and Plenio \cite{brandao_generalization_2010} showed that under the class of ARNG operations (which in the context of entanglement are also called asymptotically non-entangling operations) the theory of manipulation of mixed entangled states becomes reversible. They identified the unique function $f$ characterizing the reversibility under these operations to be the regularized relative entropy of entanglement. The main technical tool developed and employed by Brandao and Plenio to prove this result was the Generalized Quantum Stein's Lemma (GQSL), which, as is evident from its name, is a generalization of the quantum Stein's lemma of quantum hypothesis testing. 
It established that the optimal type-II error exponent for distinguishing multiple copies of an entangled state from an arbitrary sequence of separable states is given by the regularized relative entropy of entanglement.

Even though a gap in the proof of the GQSL was detected in~\cite{berta_gap_2023} in 2023, this issue was recently resolved, with complete proofs provided independently, through two different approaches, by Hayashi and Yamasaki~\cite{hayashi_generalized_2024} and Lami~\cite{lami_solution_2024}.

\section{A Generalized Quantum Stein's Lemma for C-Q Channels}\label{sec:GQSL}
Throughout this section, let $\mathcal{E}\in \CQ$, be a fixed c-q channel. 
We assume that $(\mathcal{S}_n \subset\CPTP(\mathcal{X}^n\rightarrow A^n))_n$ is a sequence of sets of c-q channels with the following properties:
\begin{definition}[Assumptions on the set of free c-q channels]\label{def:axioms}
    \mbox{}
\begin{enumerate}
    \item $\mathcal{S}_n$ is closed and convex as a subset of $ \CPTP(\mathcal{X}^n\rightarrow A^n)$ for each $n\in\mathbb{N}$.
    \item $\mathcal{S}_n$ is closed under permutations of the $n$ inputs and outputs for each $n\in\mathbb{N}$. That is, for every $n \in \naturals$, for each $\F_n \in \S_n$ and every permutation $\pi \in \Sn$ also the permuted channel $\omega \mapsto (\pi\cdot\F_n)(\omega)= P_A(\pi)^\dag \F_n(P_{\X}(\pi) \omega P_{\X}(\pi)^\dag) P_A(\pi)$ is an element of $\S_n$. 
    \item $(\mathcal{S}_n)_n$ is closed as a sequence under tensor products (i.e., for any $m,n\in \mathbb{N}$, if $\mathcal{F}_n\in\mathcal{S}_n$ and $\mathcal{F}_m\in \mathcal{S}_m$, then $\mathcal{F}_n \otimes\mathcal{F}_m\in \mathcal{S}_{n+m}$).
    \item There exists a channel $\mathcal{F}_* \in \mathcal{S}_1$, such that the Choi state of $\F_*$ has full rank. Due to  \zcref{lem:channel_divergence_maximally_entangled_state} this is equivalent to the statement that there exists a constant $C$ such that for every other channel $\E \in \cptp[\X \to A]$: $D_{\max}(\mathcal{E}\|\mathcal{F}_*) \leq C$.
\end{enumerate}
\end{definition}

\begin{remark}
    Note that for the GQSL (\zcref{thm:GQSL}) where the channel $\E$ is fixed, it is in fact sufficient to have assumption (4) in \zcref{def:axioms} only for this particular channel $\E$, i.e., that $D_{\max}(\E\|\S_1) < \infty$. We state the assumptions in \zcref{def:axioms} as written to be able to use the same set of assumptions for the free sets of quantum resource theories of c-q channels later on, which should be independent of any channel $\E$.
\end{remark}

\begin{remark}\label{remark:why_assumption_needed}
Our assumptions (1), (3) and (4) are natural channel analogues of the assumptions considered in the GQSL for quantum states in \cite{hayashi_generalized_2024}. The original argument \cite{brandao_generalization_2010}, and also the argument in \cite{lami_solution_2024} require further assumptions, which also include an analogue of our assumption (2). Our proof broadly follows the argument in \cite{hayashi_generalized_2024}, however we still also require assumption (2). This is because a key step in our proof is the use of the pinching inequality (in particular \zcref{lem:pinching_bound}), for which we require the number of spectral points (of the state we pinch with respect to) to be sub-exponential.  We guarantee this by using an argument involving permutation invariance which requires assumption (2).
In \cite{hayashi_generalized_2024} the authors use a different approach to bound the number of spectral points: They show that assumption (4) can be used to guarantee that certain states that they construct always have their minimal eigenvalues bounded from below, and they use this to show that a certain spectral bunching operation does not disturb the states too much. In the approach we pick to prove our GQSL using a similar approach is tricky. This is because we often use the channel together with a reference system, and for every channel $\F$, the smallest eigenvalue of $(\id_{R} \otimes \F)(\nu_{R\X})$ can be arbitrarily close to zero, for an appropriate choice of $\nu_{R\X}$. We do not consider the additional assumption (2) to be very restrictive, as it is satisfied by most interesting sets of free channels. Moreover, as mentioned above, a similar assumption was also included in the original argument of \cite{brandao_generalization_2010}.
\end{remark}

%this approach to be tricky when applied to channels, since for any channel $\F$, the smallest eigenvalue of $(\id_{R} \otimes \F)(\nu_{R\X})$ can be arbitrarily close (but not equal to) zero, for an appropriate choice of $\nu_{R\X}$, and hence we do not see how to apply an analogous argument.

%In \cite{hayashi_generalized_2024} the authors use a different approach to bound the number of spectral points; they show that assumption (4) can be used to guarantee that certain states they construct always have a minimal eigenvalue bounded from below, and this can be used to show that a certain spectral bunching operation does not disturb the states too much. We believe this approach to be tricky when applied to channels, because no matter what channel $\F$ one has, the smallest eigenvalue of $(\id_{R} \otimes \F)(\nu_{R\X})$ cannot be controlled uniformly over all input states  $\nu$, and hence we do not see how to apply an analogous argument. 

%We do not believe the additional assumption (2) to be very restrictive, as we cannot think of a reasonable and interesting set of free channels that does not satisfy it, and a similar assumption was also included in the original argument of \cite{brandao_generalization_2010}.
%\end{remark}

The channel version of the Generalized Quantum Stein's Lemma is very similar to the state version (see \zcref{sec:composite_HT}).
It relates an operational quantity, concretely the optimal asymptotic type-II error exponent of the composite hypothesis testing problem of discriminating the simple IID channel hypothesis $\E\n$ against the composite hypothesis $\S_n$, to an entropic expression. The latter is given by a regularization of the Umegaki channel divergence when minimized over the set $\S_n$:

\begin{theorem}[Generalized Quantum Stein's Lemma for Classical-Quantum Channels]\label{thm:GQSL}
Let $\varepsilon\in(0,1)$, $\E\in \CQ[\X \to A]$ be a c-q channel and let $(\mathcal{S}_n)_n$ be a sequence of c-q channels satisfying (1)-(4) in \Autoref{def:axioms}. Then, it holds that
\begin{equation}\label{eq:GQSL}
\lim_{n \to \infty} {1 \over n} D_H^{ε}(\E\n\|\S_n) = \lim_{n \to \infty} {1 \over n} D(\E\n\|\S_n).
\end{equation}
Writing out the shorthand notation (defined in \eqref{eq:shorthand_composite_channels}), this is equivalent to:
\begin{equation}\label{eq:GQSL_explicit}
    \lim_{n\rightarrow\infty}\frac{1}{n}\sup_{\nu_n\in\mathcal{D}(R\mathcal{X}^n)}\inf_{\mathcal{F}_n\in \mathcal{S}_n}{D^{\varepsilon}_H(\mathcal{E}^{\otimes n}(\nu_n)\|\mathcal{F}_n(\nu_n))} \\ =  \lim_{n\rightarrow\infty}\frac{1}{n}\sup_{\nu_n\in\mathcal{D}(R\mathcal{X}^n)}\inf_{\mathcal{F}_n\in \mathcal{S}_n}{D(\mathcal{E}^{\otimes n}(\nu_n)\|\mathcal{F}_n(\nu_n))} 
\end{equation}
\end{theorem}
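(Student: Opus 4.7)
The plan is to prove the two inequalities separately. Existence of the right-hand side limit is immediate from Fekete's lemma: $n \mapsto \inf_{\mathcal{F}_n \in \mathcal{S}_n} D(\E^{\otimes n}\|\mathcal{F}_n)$ is subadditive because assumption (3) lets one combine near-optimal $\mathcal{F}_n$ and $\mathcal{F}_m$ into a near-optimal $\mathcal{F}_{n+m}$ via tensor product, and the Umegaki channel divergence is additive on tensor products. Write $R := \lim_n \frac{1}{n} D(\E^{\otimes n}\|\mathcal{S}_n)$.

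For the strong converse ($\limsup \leq R$), the c-q structure makes the argument essentially painless. Fix $\delta > 0$ and pick a near-optimal $\mathcal{F}_n^{\star} \in \mathcal{S}_n$ for $D(\E^{\otimes n}\|\mathcal{S}_n)$; since both $\E^{\otimes n}$ and $\mathcal{F}_n^{\star}$ are c-q, \zcref{lem:divergence_input_reduction} reduces the supremum over inputs $\nu_n \in \mathcal{D}(R\mathcal{X}^n)$ in the channel hypothesis testing divergence to a supremum over classical product inputs, at which point the problem becomes an ordinary two-state hypothesis test. The standard sandwiched-Rényi strong converse then gives $\frac{1}{n} D_H^{\varepsilon} \leq \frac{1}{n}\widetilde{D}_{\alpha} + \frac{\alpha}{n(\alpha-1)}\log\frac{1}{1-\varepsilon}$ for $\alpha > 1$; sending first $n \to \infty$ and then $\alpha \to 1^{+}$ bounds the limsup by the Umegaki rate $R$.

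For the direct part ($\liminf \geq R$), I would adapt the Hayashi--Yamasaki strategy from \cite{hayashi_generalized_2024} by replacing its state-level symmetry and pinching arguments with channel-level analogues from \cite{bergh_composite_2023}. Concretely: (a) fix $\delta > 0$ and, for each $n$, pick $\nu_n$ and $\mathcal{F}_n^{\star} \in \mathcal{S}_n$ that are $\delta$-near-optimal for $\sup_{\nu} \inf_{\mathcal{F}} D(\E^{\otimes n}(\nu)\|\mathcal{F}(\nu))$; (b) use assumption (2) together with \zcref{lem:perm_inputs} and \zcref{lem:perm_cov_channels} to symmetrize $\nu_n$ and twirl $\mathcal{F}_n^{\star}$, making both $\E^{\otimes n}(\nu_n)$ and $\mathcal{F}_n^{\star}(\nu_n)$ permutation-invariant states on $n$ copies and hence supported on a spectrum of size $\mathrm{poly}(n)$; (c) apply \zcref{lem:pinching_bound} to pass to the pinched versions with only an $O(\log n) = o(n)$ loss in the relative entropy; (d) run the HY iterative construction of a POVM $M_n$ achieving type-I error at most $\varepsilon$ against $\E^{\otimes n}(\nu_n)$ and type-II error at most $e^{-n(R - \delta)}$ against every $\mathcal{F}_n \in \mathcal{S}_n$, using assumption (4) via the fixed full-rank $\mathcal{F}_{*}$ to control the worst case; and (e) invoke \zcref{lem:rel_entropy_exchange} at the step where an exchange between the supremum over inputs and the infimum over $\mathcal{S}_n$ is needed.

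The main obstacle is step (d), namely the iterative construction itself. Each round must identify a ``bad'' $\mathcal{F}_n' \in \mathcal{S}_n$ against which the current test fails the target exponent, then refine the test to handle $\mathcal{F}_n'$ without damaging the guarantees already obtained, and one must argue that the process terminates after a bounded number of rounds with only $o(n)$ overall loss. The channel setting makes this trickier than the state setting because the input $\nu_n$ has to be tracked consistently through the iteration; this is precisely where the c-q assumption on $\E$ is used, since \zcref{lem:divergence_input_reduction} reduces the inner input optimization to a rigid form that decouples cleanly from the inductive step. I therefore expect the bulk of the technical work to go into formulating and proving the analogue of the HY iteration lemma (i.e., \zcref{lem:main_iteration}) so that it accommodates a supremum over inputs and exploits the c-q structure at exactly the one point where it is required.
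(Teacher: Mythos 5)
Your treatment of the limit's existence and of the strong converse is correct and matches the paper's proof essentially step for step: Fekete's lemma via tensor-product subadditivity and \zcref{lem:divergence_input_reduction}, then the bound $D_H^{\varepsilon} \leq \widetilde{D}_{\alpha} + \frac{\alpha}{\alpha-1}\log\frac{1}{1-\varepsilon}$, reduction of the input supremum to classical strings, additivity of $\widetilde{D}_\alpha$, and the limit $\alpha \to 1^{+}$ (cf.\ \zcref{lem:GQSL_converse_bound} and \zcref{lem:GQSL_strong_converse}).

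The gap is in the direct part, specifically in your step (d), which is the heart of the proof and the one place where your description of the mechanism is wrong. The Hayashi--Yamasaki iteration is not an iterative refinement of a POVM $M_n$ that ``identifies a bad $\mathcal{F}_n'$, refines the test, and terminates after a bounded number of rounds.'' It is an iteration on \emph{sequences of channels}: given $(\mathcal{F}_{n,k})_n$ with rate $R_{2,k} > R_{1,\varepsilon}$, one constructs (\zcref{lem:main_iteration}) a new sequence $(\mathcal{F}_{n,k+1})_n$ with $R_{2,k+1} - R_{1,\varepsilon} \leq (1-\tilde{\varepsilon})(R_{2,k} - R_{1,\varepsilon})$; the iteration index $k$ is sent to infinity (it is not bounded), and the geometric decay forces $\lim_n \frac{1}{n} D(\E\n\|\S_n) \leq R_{1,\varepsilon}$. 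No test is ever constructed; the hypothesis-testing quantity enters only as an a priori bound on $\liminf_n \tr[P_{n,1}E_n(\rho_n)]$ via \cite[Lemma S10]{hayashi_generalized_2024}. The single-round construction --- the convex combination $\mathcal{F}'_{n,\omega_n} = \frac{1}{3}(\hat{\mathcal{F}}_n' + \mathcal{F}_*^{\otimes n} + \mathcal{F}_n^*)$, the three projectors $E_{n,1}, E_{n,2}, E_{n,3}$ cutting $\log E_n(\rho_n) - \log\tilde{\sigma}_n'$ at the levels $R_{1,\varepsilon}$, $R_2$, and $D_{\max}(\E\|\mathcal{F}_*)$, and the resulting three-term bound on $\frac{1}{n} D(E_n(\rho_n)\|\tilde{\sigma}_n')$ --- is entirely absent from your sketch and cannot be reconstructed from the ingredients you list. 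Relatedly, your account of where the c-q hypothesis enters is off: the input optimization is handled (for general channels) by fixing a sequence of permutation-invariant inputs $(\omega_n)_n$, letting the constructed channels depend on $\omega_n$, and restoring the supremum only at the end via \zcref{lem:sup_over_sequences} and \zcref{lem:rel_entropy_exchange}; the c-q structure is needed at exactly one point, namely the strong-converse-type upper bound $\limsup_n \frac{1}{n} D_H^{1-\varepsilon_1}(\rho_n\|\sigma_n) \leq R_2 + \varepsilon_0$ obtained through \zcref{lem:perm_averaging_converse}, not to ``decouple the input optimization from the inductive step.''
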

\begin{remark}
    Note that by \zcref{lem:rel_entropy_exchange} on the right-hand side of \eqref{eq:GQSL_explicit} the supremum and infimum can be exchanged. We are not aware of any way to show a similar exchange for the left-hand side of \eqref{eq:GQSL_explicit}. However, note that, as explained in \zcref{sec:composite_HT_channels}, the left-hand side with the order of supremum and infimum as written, corresponds to the operational quantity of optimal achievable type-II error exponent.
\end{remark}
Note also that a similar Generalized Quantum Stein's Lemma for channels, but with the constraint of input states being IID tensor products (in which case the problem reduces fairly straightforwardly to the GQSL for states) was already studied in \cite{gour_how_2019}.

\begin{EXCLUDED}
%currently not too sure if we can actually prove this, the strong converse statement we use actually makes use of not just the strong-converse property but also that the regularized channel divergence collapses.
In fact, we prove the GQSL for channels not just for c-q channels but for any class of channels for which the strong converse property is known (currently the most important class of this is the class of c-q channels):
\begin{theorem}\label{thm:GQSL_under_strong_converse}
    Let $\E \in \cptp[A \to B]$ be a quantum channel and let $ (\mathcal{S}_n)_n$ be a sequence of general quantum channels satisfying (1)-(4) in \Autoref{def:axioms} such that the strong-converse property holds for simple channel discrimination between $\E\n$ and any element of $\S_n$, that means that for any $n \in \naturals$, for any $\F_n \in \S_n$, and for any $\varepsilon \in (0,1)$ it holds that
    \begin{equation}
        \lim_{m \to \infty} \sup_{\nu \in \DM[RA^{nm}]} {1 \over m} D_H^{ε}(\E^{\otimes mn}(\nu)\|\F_n^{\otimes m}(\nu)) = \lim_{m \to \infty} \sup_{\nu \in \DM[RA^{nm}]} {1 \over m} D(\E^{\otimes mn}(\nu)\|\F_n^{\otimes m}(\nu))\,.
    \end{equation}
    Then, it also holds that
    \begin{multline}\label{eq:GQSL_under_strong_converse}
        \lim_{n \to \infty} {1 \over n} \sup_{ν_n \in \DM[RA^n]} \inf_{\F_n \in \S_n} D^{ε}_H(\E\n(ν_n)\|\F_n(ν_n)) \\ = \lim_{n \to \infty} {1 \over n} \sup_{\nu_n \in \DM[RA^n]} \inf_{\F_n \in \S_n} D(\E\n(ν_n)\|\F_n(ν_n))
    \end{multline}
\end{theorem}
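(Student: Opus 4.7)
I would split the proof into the strong-converse inequality $\limsup_n \frac{1}{n} D_H^{\varepsilon}(\E\n\|\S_n) \le \lim_n \frac{1}{n} D(\E\n\|\S_n)$ and the direct inequality $\liminf_n \frac{1}{n} D_H^{\varepsilon}(\E\n\|\S_n) \ge \lim_n \frac{1}{n} D(\E\n\|\S_n)$; together with the existence of the right-hand-side limit (which follows from a Fekete-type subadditivity argument based on assumption~(3) of \zcref{def:axioms}), this implies \eqref{eq:GQSL}.

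The strong converse is the easier direction and rests almost entirely on the c-q tools from \zcref{sec:composite_HT_channels}. Fix any $n \in \naturals$ and any $\F_n \in \S_n$. Assumption~(3) gives $\F_n\k \in \S_{nk}$ for all $k$, and enlarging the alternate set can only decrease $D_H^{\varepsilon}$, so
\begin{equation*}
\tfrac{1}{nk} D_H^{\varepsilon}(\E^{\otimes nk}\|\S_{nk}) \le \tfrac{1}{nk} D_H^{\varepsilon}(\E^{\otimes nk}\|\F_n\k).
\end{equation*}
The right-hand side is the rate for simple c-q channel discrimination of $\E\n$ against $\F_n$, which by the strong-converse property cited in \zcref{sec:composite_HT_channels} converges as $k \to \infty$ to $\tfrac{1}{n} D(\E\n\|\F_n)$. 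Infimizing over $\F_n \in \S_n$ and letting $n \to \infty$, together with a standard truncation argument (using $\F_*^{\otimes r} \in \S_r$, granted by assumptions~(3)--(4), to patch integers outside the subsequence $\{nk\}$), yields the strong-converse bound for all block lengths.

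For the direct part I would follow the strategy of \cite{hayashi_generalized_2024} adapted to the channel setting via the permutation-covariance machinery of \cite{bergh_composite_2023}. The preparatory steps are: (i)~invoke \zcref{lem:rel_entropy_exchange} to exchange the outer $\sup_{\nu_n}$ with $\inf_{\F_n}$ on the entropic side of \eqref{eq:GQSL_explicit}; and (ii)~apply assumptions~(1)--(2) together with \zcref{lem:perm_inputs} and \zcref{lem:perm_cov_channels} to restrict the optimizers to a permutation-covariant $\F_n \in \S_n$ evaluated on a permutation-invariant input state $\nu_n$. The crucial consequence is that $\F_n(\nu_n)$ then has only $\poly(n)$ distinct eigenvalues, so pinching with respect to its spectrum costs only $O(\log n)$ by \zcref{lem:pinching_bound} and the pinching inequality~\eqref{eq:pinching_inequality}. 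With these reductions in place, the main step is an iterative Hayashi--Yamasaki-type construction: fix a target rate $R$ strictly below $\lim_n \tfrac{1}{n} D(\E\n\|\S_n)$ and recursively build, block by block, a POVM whose concatenation achieves type-I error at most $\varepsilon$ and type-II exponent at least $R$ against all of $\S_{Nn}$. Each recursion step uses assumption~(4) for uniform $D_{\max}$ control against $\F_*$, the pinching bound to reduce relative-entropy comparisons to ones between commuting operators, and --- at the single step flagged by the authors as requiring the c-q hypothesis --- \zcref{lem:divergence_input_reduction} to reduce a channel-divergence supremum to a product input that decouples the blocks.

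The main obstacle is bookkeeping the accumulated losses in the recursion: each iteration contributes a polylogarithmic pinching loss, a small discrepancy from replacing near-optimal $\F_n$ by their group averages over $\Sn$, and concatenation boundary terms from gluing block-wise POVMs into a single measurement. The whole construction must be arranged so that the cumulative error is $o(n)$ relative to the block size, so that the constructed rate can be taken arbitrarily close to the regularized divergence. Combining the resulting direct bound with the strong-converse bound above forces both limits in \eqref{eq:GQSL} to exist and agree, establishing \zcref{thm:GQSL}.
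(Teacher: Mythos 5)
Your proposal is, in substance, a sketch of the proof of the c-q theorem (\zcref{thm:GQSL}) rather than of the statement actually posed, which concerns a \emph{general} channel $\E$ and general sets $\S_n$, with the strong-converse property of simple discrimination supplied as a \emph{hypothesis}. This mismatch is not cosmetic. In your strong-converse step you assert that $\frac{1}{nk}D_H^{\varepsilon}(\E^{\otimes nk}\|\F_n^{\otimes k})$ converges to $\frac{1}{n}D(\E\n\|\F_n)$; under the stated hypothesis it converges only to the \emph{regularized} quantity $\frac{1}{n}\lim_{m}\frac{1}{m}D(\E^{\otimes nm}\|\F_n^{\otimes m})$, and for general channels the Umegaki channel divergence is superadditive but not additive, so this can strictly exceed the single-letter value. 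Collapsing that regularization --- and likewise the Fekete subadditivity argument you invoke for the existence of the right-hand limit, and the reduction of the simple-discrimination rate to $D(\E\n\|\F_n)$ --- all rest on \zcref{lem:divergence_input_reduction} together with additivity on product inputs, i.e.\ precisely on the c-q structure that the stated theorem does not assume. A proof of the theorem as stated must either carry the regularized divergence through the entire argument or justify why $\inf_{\F_n\in\S_n}D^{\reg}(\E\n\|\F_n)$ matches $\lim_n\frac{1}{n}D(\E\n\|\S_n)$; neither issue is addressed. (You also pass silently between $\sup_{\nu}\inf_{\F}$ and $\inf_{\F}\sup_{\nu}$, which requires \zcref{lem:rel_entropy_exchange}.)

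The second, independent gap is in the direct part. You describe it as \enquote{recursively build, block by block, a POVM whose concatenation achieves type-I error at most $\varepsilon$ and type-II exponent at least $R$}, with \enquote{concatenation boundary terms from gluing block-wise POVMs into a single measurement}. That is not the Hayashi--Yamasaki argument, and a concatenated block-wise test is exactly the kind of construction that fails for composite alternatives: the worst-case element of $\S_{Nn}$ need not factorize across blocks, so block-wise type-II guarantees do not compose --- this is essentially the gap in the original Brandao--Plenio proof. The argument actually used (\zcref{lem:main_iteration} and \zcref{lem:GQSL_direct}) constructs no measurement at all. It iterates on the \emph{alternate hypothesis}: for a fixed sequence of permutation-invariant inputs one forms $\F_{n,\omega_n}'=\frac{1}{3}(\hat{\F}_n'+\F_*^{\otimes n}+\F_n^*)$, pinches $\rho_n$ with respect to $\tilde{\sigma}_n'$, splits $D(E_n(\rho_n)\|\tilde{\sigma}_n')$ via two spectral projections whose traces are controlled by $D_H^{\varepsilon}$ and $D_H^{1-\varepsilon_1}$, and concludes that the gap $\liminf_n\frac{1}{n}D(\E\n\|\F_n')-R_{1,\varepsilon}$ contracts by a factor $(1-\tilde{\varepsilon})$; iterating in $k$ (each step being a full asymptotic statement, with no accumulating finite-$n$ losses) forces $\lim_n\frac{1}{n}D(\E\n\|\S_n)\le\liminf_n\frac{1}{n}D_H^{\varepsilon}(\E\n\|\S_n)$. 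You have correctly identified the supporting ingredients (permutation reduction, the pinching bound, the $D_{\max}$ control from assumption (4), the exchange lemma), but the architecture that makes them yield a proof is absent, and the one you propose in its place would not close.
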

\end{EXCLUDED}

\subsection{Important Lemmas}
Before we prove the above theorem we would like to prove some important lemmas first.

\medskip
We will make use many times of the following {\em{exchange lemma}} found in \cite[Prop. 19]{bergh_composite_2023} (see also \cite[Lemma 4.2.4]{bergh_discrimination_2025}, and \cite[Theorem 2]{gour_how_2019}).
\begin{lemma}[\cite{bergh_composite_2023, bergh_discrimination_2025}]\label{lem:rel_entropy_exchange}
    Let $\S,\T \subset \cptp$ be two closed, convex sets of channels. Let $\D$ be a quantum divergence that satisfies the data-processing inequality, is (jointly) lower semi-continuous, and also satisfies the direct-sum property. Then
		\begin{equation}
			\inf_{\substack{\E \in \S \\ \F \in \T}} \sup_{\nu \in \DM[R A]} \D(\E(\nu)\|\F(\nu)) = \sup_{\nu \in \DM[R A]} \inf_{\substack{\E \in \S \\ \F \in \T}}  \D(\E(\nu)\|\F(\nu)) .
		\end{equation}
\end{lemma}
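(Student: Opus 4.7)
The easy direction $\sup_\nu \inf_{(\E,\F)} \leq \inf_{(\E,\F)} \sup_\nu$ is the standard min-max inequality; the real work lies in the reverse inequality, where my plan is a minimax argument after a suitable linearization.

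First I would derive joint convexity of $\D$ from the assumed DPI and direct-sum property, via the standard classical-quantum flagging trick: given $p \in [0,1]$ and state pairs $(\rho_i,\sigma_i)_{i=0,1}$, form the block-diagonal flagged states $\tilde\rho = p\,\ketbra{0}{0} \otimes \rho_0 + (1-p)\,\ketbra{1}{1} \otimes \rho_1$ and analogously $\tilde\sigma$. The direct-sum property gives $\D(\tilde\rho\|\tilde\sigma) = p\D(\rho_0\|\sigma_0) + (1-p)\D(\rho_1\|\sigma_1)$, and DPI under partial trace of the flag register yields $\D(p\rho_0 + (1-p)\rho_1 \| p\sigma_0 + (1-p)\sigma_1) \leq p\D(\rho_0\|\sigma_0) + (1-p)\D(\rho_1\|\sigma_1)$. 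Consequently $f(\nu,(\E,\F)) := \D(\E(\nu)\|\F(\nu))$ is jointly convex and (by hypothesis) jointly lsc in $(\E,\F)$ on the compact convex set $\S \times \T$ (compact since $\cptp$ is a compact subset of a finite-dimensional space).

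The main obstacle to applying Sion's minimax theorem directly is that $\nu \mapsto f(\nu,\cdot)$ is also convex (rather than concave) in $\nu$, since it is the composition of joint convexity of $\D$ with the linear map $\nu \mapsto (\E(\nu),\F(\nu))$. To circumvent this convex-convex structure, I would use that the supremum of a convex function over a compact convex set is attained at extreme points, so the sup over $\DM[RA]$ can be restricted to the compact set of pure states. I then pass to the convex, weak-$\ast$ compact set of Borel probability measures $\mu$ on the pure states and set $g(\mu,(\E,\F)) := \int f(\nu,(\E,\F))\,d\mu(\nu)$, which is linear (hence quasi-concave and usc) in $\mu$ and convex-lsc in $(\E,\F)$. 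Sion's theorem now applies to $g$ and gives $\inf_{(\E,\F)} \sup_\mu g = \sup_\mu \inf_{(\E,\F)} g$, and the left-hand side equals $\inf_{(\E,\F)} \sup_\nu f$ since the sup over probability measures is attained at Diracs.

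The final and most delicate step, which I expect to be the main technical obstacle, is translating Sion's identity for $g$ back to the original $f$. Concretely, one must show $\sup_\mu \inf_{(\E,\F)} g = \sup_\nu \inf_{(\E,\F)} f$: the $\geq$ direction is trivial via Diracs, but the reverse is nontrivial because the infimum over the channel pair does not commute with the integral over $\mu$ in general. Resolving this requires exploiting the convex-geometric structure of the minimization — specifically, concentrating an optimizing measure $\mu^*$ onto a single pure state $\nu^*$ without decreasing the inf-over-channels value, using convexity of $\mu \mapsto \inf_{(\E,\F)} g(\mu,\cdot)$ together with the extreme-point structure of probability measures. The detailed execution follows the approach of \cite[Proposition 19]{bergh_composite_2023} and the related \cite[Theorem 2]{gour_how_2019}.
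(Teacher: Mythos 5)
The paper does not prove this lemma itself; it imports it verbatim from \cite{bergh_composite_2023} (Prop.~19; see also \cite{gour_how_2019}), so your proposal has to stand on its own. Your overall architecture --- joint convexity from DPI plus the direct-sum property, the trivial $\sup\inf\leq\inf\sup$ direction, linearization of the supremum variable by passing to ensembles of input states, and a minimax theorem --- is indeed the architecture of the cited proof. But the step you yourself flag as the main obstacle is exactly where your argument breaks: the map $\mu\mapsto \inf_{(\E,\F)} g(\mu,(\E,\F))$ is an infimum of affine functions of $\mu$ and is therefore \emph{concave}, not convex, so its supremum over the convex set of probability measures is not attained at extreme points, and you cannot ``concentrate $\mu^*$ onto a single pure state'' by any convexity or extreme-point argument. (If that argument worked, the measure lift would have been unnecessary: you could have applied it directly to $\nu\mapsto\inf_{(\E,\F)}f(\nu,\cdot)$ on $\DM[RA]$.)

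The missing ingredient is a second application of the direct-sum property, in the opposite direction from your convexity derivation. For a finitely supported measure $\mu=\sum_i p_i\delta_{\nu_i}$ one has, exactly,
\begin{equation*}
g(\mu,(\E,\F))=\sum_i p_i\,\D(\E(\nu_i)\|\F(\nu_i))=\D\!\left(\E(\omega_\mu)\,\middle\|\,\F(\omega_\mu)\right),\qquad \omega_\mu=\sum_i p_i\,\ketbra{i}{i}_{R'}\otimes(\nu_i)_{RA},
\end{equation*}
since the classical flag $R'$ realizes the direct sum as a single input state on the enlarged reference $R'R$. Hence $\inf_{(\E,\F)}g(\mu,\cdot)\leq\sup_{\omega}\inf_{(\E,\F)}\D(\E(\omega)\|\F(\omega))$ with $\omega$ ranging over states with a larger reference system, and the reference is then shrunk back to $R\cong A$ by purifying $\omega$ and using DPI together with the Schmidt-rank bound (the purification argument the paper alludes to below the definition of the channel divergence). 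Two smaller repairs are also needed: work with finitely supported measures only, so that no weak-$*$ topology and no upper semicontinuity of $\mu\mapsto\int f\,d\mu$ is required (the latter genuinely fails when $\D$ is merely lower semicontinuous and possibly $+\infty$); and replace Sion's theorem by a Kneser/Fan-type minimax theorem, which demands lsc convexity only on the compact side $\S\times\T$ and mere concavity (here affinity) on the other side. With these substitutions your outline closes into a correct proof, essentially the one in the cited reference.
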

\medskip

Next, we wish to prove that the limit on the right-hand side of equation \eqref{eq:GQSL} exists. To achieve this, we use \Autoref{lem:divergence_input_reduction} to reduce the supremum to over tensor product states, and then use the same method as \cite[Lemma S4]{hayashi_generalized_2024}. %insert reference to Hayashi here 
We use additivity of the quantum relative entropy to prove the sequence is subadditive, and then use Fekete's subadditive lemma to prove that the limit exists.
\begin{lemma}[Existence of Relative Entropy Limit]
\label{lem:lim_exists}
For any c-q channel $\E\in\CPTP(\mathcal{X}\rightarrow A)$, and any sequence of c-q channels $(\mathcal{S}_n)_n$ which satisfy the assumptions in \Autoref{def:axioms}, the following limit exists:
\begin{equation}
    \lim_{n \to \infty} {1 \over n} D(\E\n\|\S_n) = \lim_{n\rightarrow\infty}\frac{1}{n}\sup_{\nu_n\in\mathcal{D}(R\mathcal{X}^n)}\inf_{\mathcal{F}_n\in \mathcal{S}_n}{D(\mathcal{E}^{\otimes n}(\nu_n)\|\mathcal{F}_n(\nu_n))} 
\end{equation}
\end{lemma}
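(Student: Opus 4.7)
Setting $a_n := D(\E^{\otimes n}\|\S_n) = \sup_{\nu_n \in \DM[R\X^n]} \inf_{\F_n \in \S_n} D(\E^{\otimes n}(\nu_n)\|\F_n(\nu_n))$, the plan is to establish that $(a_n)_{n \in \naturals}$ is a subadditive sequence of non-negative real numbers, and then invoke Fekete's subadditive lemma to conclude that $\lim_n a_n/n = \inf_n a_n/n$ exists. This is exactly the strategy of \cite[Lemma S4]{hayashi_generalized_2024}, except that we must first massage the supremum over entangled inputs $\nu_n$ into a form compatible with product structure, which is where the c-q assumption enters.

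\textbf{Subadditivity.} The key step is to show $a_{n+m} \leq a_n + a_m$. First I would apply \Autoref{lem:divergence_input_reduction} to $\E^{\otimes(n+m)}$: since $\E^{\otimes(n+m)}$ is itself a c-q channel from $\X^{n+m}$ to $A^{n+m}$, the supremum defining $a_{n+m}$ can be restricted to input states of tensor product form across the $n$--$m$ split, i.e.\ inputs of the form $\nu_n \otimes \nu_m \in \DM[R\X^n] \otimes \DM[R'\X^m]$. Next, by assumption (3) of \Autoref{def:axioms}, whenever $\F_n \in \S_n$ and $\F_m \in \S_m$ the tensor product $\F_n \otimes \F_m$ lies in $\S_{n+m}$, so restricting the infimum to such product channels can only increase it:
\begin{align}
a_{n+m} &\leq \sup_{\nu_n,\nu_m}\; \inf_{\F_n \in \S_n,\;\F_m \in \S_m} D\bigl(\E^{\otimes n}(\nu_n) \otimes \E^{\otimes m}(\nu_m) \,\big\|\, \F_n(\nu_n) \otimes \F_m(\nu_m)\bigr).
\end{align}
Additivity of the Umegaki relative entropy under tensor products splits the divergence into a sum, the infimum factorizes over $\F_n$ and $\F_m$, and the supremum over independent $\nu_n, \nu_m$ likewise splits. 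This yields $a_{n+m} \leq a_n + a_m$.

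\textbf{Finiteness and closing out.} To apply Fekete's lemma I still need $a_n < \infty$ for all $n$. Assumption (4) of \Autoref{def:axioms} provides a reference channel $\F_\ast \in \S_1$ with $D_{\max}(\E\|\F_\ast) \leq C$; taking $\F_n = \F_\ast^{\otimes n} \in \S_n$ (again by closure under tensor products) and using $D(\cdot\|\cdot) \leq D_{\max}(\cdot\|\cdot)$ together with additivity gives $a_n \leq D(\E^{\otimes n}\|\F_\ast^{\otimes n}) \leq nC$. Non-negativity of $a_n$ follows from non-negativity of the Umegaki divergence between normalized states. Fekete's lemma then yields existence of the limit.

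\textbf{Expected obstacle.} The only genuinely non-routine step is the initial input reduction: one needs to ensure that \Autoref{lem:divergence_input_reduction} can be applied to $\E^{\otimes(n+m)}$ in a way that produces a $\nu_n \otimes \nu_m$ split matched to the $(n,m)$ block structure (rather than merely to some single-letter tensor decomposition). This is fine because $\E^{\otimes(n+m)} = \E^{\otimes n} \otimes \E^{\otimes m}$ is still c-q with respect to the bipartition $\X^n : \X^m$, so the lemma applies to this grouping. Apart from this mild bookkeeping, the proof is a direct translation of the state-case argument of \cite{hayashi_generalized_2024} into the channel setting.
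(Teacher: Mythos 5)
Your overall strategy (subadditivity plus Fekete) is the same as the paper's, and your finiteness bound via $\F_*^{\otimes n}$ is fine. However, there is a genuine gap in the subadditivity step. \zcref{lem:divergence_input_reduction} is a statement about a \emph{fixed pair} of c-q channels: it says $\sup_{\nu} \D(\E(\nu)\|\F(\nu)) = \sup_x \D(\E(\ketbra{x}{x})\|\F(\ketbra{x}{x}))$ for given $\E,\F$. You apply it to the quantity $a_{n+m} = \sup_{\nu}\inf_{\F_{n+m}\in\S_{n+m}} D(\E^{\otimes(n+m)}(\nu)\|\F_{n+m}(\nu))$, which has an infimum sitting \emph{inside} the supremum. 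What the lemma gives you for a fixed $\F_{n+m}$ is $D(\E^{\otimes(n+m)}(\nu)\|\F_{n+m}(\nu)) \leq \max_x D(\E^{\otimes(n+m)}(\ketbra{x}{x})\|\F_{n+m}(\ketbra{x}{x}))$; taking the infimum over $\F_{n+m}$ on both sides yields an upper bound by $\inf_{\F}\max_x(\cdots)$, which is in general \emph{larger} than the quantity $\max_x\inf_{\F}(\cdots)$ that your argument needs. So the claim that the supremum in $a_{n+m}$ can be restricted to product inputs does not follow from \zcref{lem:divergence_input_reduction} alone; a minimax exchange is hiding here. The "expected obstacle" you flag (matching the tensor decomposition to the $n$--$m$ block structure) is not the actual difficulty.

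The paper closes exactly this gap by reorganizing the argument: it first fixes $\tilde{\F}_k$ as an extremizer of the \emph{inf-sup} quantity $\inf_{\F_k\in\S_k}\sup_x D(\E^{\otimes k}(\ketbra{x}{x})\|\F_k(\ketbra{x}{x}))$, bounds $a_{n+m}$ from above by dropping the infimum in favour of the single channel $\tilde{\F}_n\otimes\tilde{\F}_m$ (at which point \zcref{lem:divergence_input_reduction} legitimately applies, since the second channel is now fixed), splits by additivity, and then uses the exchange lemma (\zcref{lem:rel_entropy_exchange}) to convert the resulting $\inf\sup$ expressions back into $\sup\inf = a_n$ and $a_m$. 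Your proof can be repaired in exactly this way, but as written it omits the appeal to \zcref{lem:rel_entropy_exchange} (or an equivalent minimax argument), which is the essential ingredient that makes the reduction to product inputs compatible with the inner infimum over $\S_{n+m}$.
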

\begin{proof}
    We first choose $\Tilde{\mathcal{F}_n}\in\mathcal{S}_n$ as an extremizer of 
    \begin{equation}
        \inf_{\mathcal{F}_n\in\mathcal{S}_n} \sup_{x\in\mathcal{X}^n} D(\mathcal{E}(\ketbra{x}{x})\:\|\:\mathcal{F}(\ketbra{x}{x}))
        \label{S4Fchoice}
    \end{equation}
    for each integer $n$. By the assumptions in \Autoref{def:axioms}, we have that $\mathcal{F}_*^{\otimes n}\in\mathcal{S}_n$ and $\mathcal{S}_n$ is compact, so since $D$ is lower semi-continuous (this is
well-known see e.g. \cite{wehrl_general_1978}) and a supremum of lower semi-continuous functions is lower semi-continuous, the infimum must exist and be finite. \\
    \\
    Let $m, n \in\mathbb{N}$. Then,
    \begin{equation}
    \begin{split}
    &\sup_{\nu_{m+n}\in\mathcal{D}(R\mathcal{X}^{m+n})}\inf_{\mathcal{F}_{m+n}\in \mathcal{S}_{m+n}}{D(\mathcal{E}^{\otimes (m+n)}(\nu_{m+n})\|\mathcal{F}_{m+n}(\nu_{m+n}))} \\
    &\leq \sup_{\nu_{m+n}\in\mathcal{D}(R\mathcal{X}^{m+n})} {D(\mathcal{E}^{\otimes (m+n)}(\nu_{m+n})\|(\Tilde{\mathcal{F}}_{n}\otimes\Tilde{\mathcal{F}}_{m})(\nu_{m+n}))} \\
    &= \sup_{x\in\mathcal{X}^{m+n}} {D(\mathcal{E}^{\otimes (m+n)}(\ketbra{x}{x})\|(\Tilde{\mathcal{F}}_{n}\otimes\Tilde{\mathcal{F}}_{m})(\ketbra{x}{x}))}
    \end{split}
    \end{equation}
    where we used \Autoref{lem:divergence_input_reduction} to restrict to states of the form $\ketbra{x}{x}$. \\
    \\
    Now, for each $x\in\mathcal{X}^{m+n}$, we can write $x=x_1x_2$, where $x_1\in\mathcal{X}^n$ and $x_2\in\mathcal{X}^m$ have been concatenated. So, we have
    \begin{eqnarray}
        && \sup_{x\in\mathcal{X}^{m+n}} {D(\mathcal{E}^{\otimes (m+n)}(\ketbra{x}{x})\|(\Tilde{\mathcal{F}}_{n}\otimes\Tilde{\mathcal{F}}_{m})(\ketbra{x}{x}))} \nonumber\\
        & & \,\, \quad = \sup_{x\in\mathcal{X}^{m+n}} {D( \mathcal{E}^{\otimes n}(\ketbra{x_1}{x_1})\otimes \mathcal{E}^{\otimes m}(\ketbra{x_2}{x_2})\|\Tilde{\mathcal{F}}_{n}(\ketbra{x_1}{x_1}) \otimes \Tilde{\mathcal{F}}_{m}(\ketbra{x_2}{x_2})}\nonumber\\ % not sure how to fix this line that is too long
        & & \,\, \quad \overset{(a)}{=} \sup_{x\in\mathcal{X}^{m+n}} \left[D( \mathcal{E}^{\otimes n}(\ketbra{x_1}{x_1})\|\Tilde{\mathcal{F}}_{n}(\ketbra{x_1}{x_1}) + D( \mathcal{E}^{\otimes m}(\ketbra{x_2}{x_2})\|\Tilde{\mathcal{F}}_{m}(\ketbra{x_2}{x_2}) \right]\nonumber\\\
        & & \,\, \quad\leq \sup_{x_1\in\mathcal{X}^{n}} D( \mathcal{E}^{\otimes n}(\ketbra{x_1}{x_1})\|\Tilde{\mathcal{F}}_{n}(\ketbra{x_1}{x_1})) + \sup_{x_2\in\mathcal{X}^{m}} D( \mathcal{E}^{\otimes m}(\ketbra{x_2}{x_2})\|\Tilde{\mathcal{F}}_{m}(\ketbra{x_2}{x_2})) \nonumber\\\
        & & \,\,\quad \overset{(b)}{=}\inf_{\mathcal{F}_n\in\mathcal{S}_n} \sup_{x_1\in\mathcal{X}^{n}} D( \mathcal{E}^{\otimes n}(\ketbra{x_1}{x_1})\|\mathcal{F}_n(\ketbra{x_1}{x_1}))
        \nonumber\\ & & \,\,\quad \quad 
        \quad + \inf_{\mathcal{F}_m\in\mathcal{S}_m} \sup_{x_2\in\mathcal{X}^{m}} D( \mathcal{E}^{\otimes m}(\ketbra{x_2}{x_2})\|\mathcal{F}_m(\ketbra{x_2}{x_2}))      
    \end{eqnarray}
    We used the additivity of the quantum relative entropy under tensor products in (a), and by the choice of $\Tilde{\mathcal{F}}_n$ in \eqref{S4Fchoice} we get (b). \\
    \\
    Now we use \Autoref{lem:divergence_input_reduction} to return the supremum to being over all states in $\mathcal{D}(R\mathcal{X}^n)$, and use \Autoref{lem:rel_entropy_exchange} to swap the order of the supremum and infimum to get
    \begin{multline}
        \sup_{\nu_{m+n}\in\mathcal{D}(R\mathcal{X}^{m+n})}\inf_{\mathcal{F}_{m+n}\in \mathcal{S}_{m+n}}{D(\mathcal{E}^{\otimes (m+n)}(\nu_{m+n})\|\mathcal{F}_{m+n}(\nu_{m+n}))}
       \\ \leq \sup_{\nu_{n}\in\mathcal{D}(R\mathcal{X}^{n})}\inf_{\mathcal{F}_{n}\in \mathcal{S}_{n}}{D(\mathcal{E}^{\otimes n}(\nu_{n})\|\mathcal{F}_{n}(\nu_{n}))}
         + \sup_{\nu_{m}\in\mathcal{D}(R\mathcal{X}^{m})}\inf_{\mathcal{F}_{m}\in \mathcal{S}_{m}}{D(\mathcal{E}^{\otimes m}(\nu_{m})\|\mathcal{F}_{m}(\nu_{m}))}
    \end{multline}
    Hence, the sequence is subadditive, so we can use Fekete's subadditive lemma to get that the limit exists.
\end{proof}
We now use \Autoref{lem:lim_exists} to break down the GQSL for c-q channels into two statements that we then prove separately. This proposition is analogous to \cite[Proposition S2]{hayashi_generalized_2024}.
\begin{proposition}
Suppose for any $\varepsilon\in(0,1)$, any c-q channel $\E\in\CPTP(\X\rightarrow A)$, and any sequence $(\mathcal{S}_n)_n$ satisfying the conditions in \Autoref{def:axioms}, that the following relations hold: \\ 
\\
\textbf{Strong Converse:}
\begin{equation}
    \limsup_{n \to \infty} {1 \over n} D_H^{ε}(\E\n\|\S_n) \leq \lim_{n \to \infty} {1 \over n} D(\E\n\|\S_n)
\end{equation}
\textbf{Direct:}
\begin{equation}
    \liminf_{n \to \infty} {1 \over n} D_H^{ε}(\E\n\|\S_n) \geq \lim_{n \to \infty} {1 \over n} D(\E\n\|\S_n)
\end{equation}
Then the Generalized Quantum Stein's Lemma for c-q channels (\Autoref{thm:GQSL}) holds.
\end{proposition}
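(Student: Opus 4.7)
The plan is to combine the three given inputs—the existence of the right-hand-side limit from \zcref{lem:lim_exists}, the strong converse inequality, and the direct inequality—via the elementary fact that $\liminf \leq \limsup$ for any real sequence. Since the conclusion of \zcref{thm:GQSL} asserts equality of two limits, the task reduces to showing that the limit on the left-hand side of \eqref{eq:GQSL} in fact exists and coincides with the limit on the right.

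Concretely, I would proceed as follows. First, invoke \zcref{lem:lim_exists} to denote by
\begin{equation*}
    L \coloneqq \lim_{n\to\infty}\frac{1}{n} D(\E\n\|\S_n)
\end{equation*}
the (finite) value of the right-hand side; finiteness follows because $\F_*\n\in\S_n$ and $D(\E\n\|\F_*\n) \leq n\, D_{\max}(\E\|\F_*) \leq nC$ by assumption (4) of \zcref{def:axioms} together with data processing. Next, the hypothesised strong-converse inequality yields
\begin{equation*}
    \limsup_{n\to\infty}\frac{1}{n} D_H^\varepsilon(\E\n\|\S_n) \leq L,
\end{equation*}
while the hypothesised direct inequality yields
\begin{equation*}
    \liminf_{n\to\infty}\frac{1}{n} D_H^\varepsilon(\E\n\|\S_n) \geq L.
\end{equation*}
Combining these with the universal bound $\liminf \leq \limsup$ forces both quantities to equal $L$, so the sequence $\tfrac{1}{n} D_H^\varepsilon(\E\n\|\S_n)$ converges to $L$, which is precisely \eqref{eq:GQSL}.

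There is essentially no obstacle in this step—it is purely a bookkeeping argument that packages the two nontrivial inequalities (whose proofs constitute the remainder of \zcref{sec:GQSL}) into the single equality asserted by \zcref{thm:GQSL}. The only subtlety worth flagging is that the proposition relies on the limit on the right-hand side existing in the first place, which is why \zcref{lem:lim_exists} (built on subadditivity via Fekete's lemma, the additivity of Umegaki relative entropy on tensor products, and the tensor-product assumption (3)) must be invoked before the sandwich argument can be closed. All real work is deferred to the strong converse (\zcref{ch_strong_converse}) and the direct part.
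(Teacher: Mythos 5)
Your proposal is correct and follows essentially the same route as the paper: invoke \zcref{lem:lim_exists} for the existence of the right-hand-side limit, then sandwich $\tfrac{1}{n}D_H^{\varepsilon}(\E\n\|\S_n)$ between the direct and strong-converse bounds using $\liminf \leq \limsup$. The extra remark on finiteness via assumption (4) is harmless and consistent with how the paper establishes it inside \zcref{lem:lim_exists}.
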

\begin{proof}
    By \Autoref{lem:lim_exists}, the limit on the right-hand side of both inequalities exists. \\
    \\
    Now, for any sequence of real numbers $(a_n)_n$, we have that $\liminf_{n\rightarrow\infty}a_n\leq\limsup_{n\rightarrow\infty}a_n$. Hence, we get from the combination of both inequalities that the limit
    \begin{equation}
        \lim_{n \to \infty} {1 \over n} D_H^{ε}(\E\n\|\S_n)
    \end{equation}
    exists and is equal to 
    \begin{equation}
        \lim_{n \to \infty} {1 \over n} D(\E\n\|\S_n)
    \end{equation}
    as required.
\end{proof}
\subsection{Strong Converse} \label{ch_strong_converse}
First, we wish to bound the hypothesis testing relative entropy from above by the quantum relative entropy. The next lemma, which is based on \cite[Lemma S6]{hayashi_generalized_2024}, allows us to do this for a specific sequence of channels in $(\mathcal{S}_n)_n$. Then, by using a set of channels extremizing $\inf_{\F_n\in\mathcal{S}_n}D(\E\n \| \F_n)$, and applying \Autoref{lem:GQSL_converse_bound}, we obtain the strong converse in \Autoref{lem:GQSL_strong_converse}. 
\begin{lemma}
\label{lem:GQSL_converse_bound}
Fix any $m\in\mathbb{N}$, and $\Tilde{\mathcal{F}}_m\in\mathcal{S}_m$. With $\F_*$ the channel with full-rank Choi state from \zcref{def:axioms}, define $\hat{\mathcal{F}}_n:=\Tilde{\F}_m^{\otimes l}\otimes \mathcal{F}_*^{\otimes r}$, where $n=ml+r$, with $l,r\in\mathbb{Z}$, and $0\leq r<n$. 

Then, for any $\varepsilon\in[0,1)$, we have
\begin{equation}
    \limsup_{n\rightarrow\infty}\frac{1}{n}\sup_{\nu_n\in\mathcal{D}(R\mathcal{X}^n)}{D_H^{\varepsilon}(\mathcal{E}^{\otimes n}(\nu_n)\|\hat{\mathcal{F}}_n(\nu_n))}
    \leq \frac{1}{m}\sup_{\nu_m\in\mathcal{D}(R\mathcal{X}^m)}{D(\mathcal{E}^{\otimes m}(\nu_m)\|\Tilde{\mathcal{F}}_m(\nu_m))}\,.
\end{equation}
\end{lemma}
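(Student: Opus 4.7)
The proof is a sandwiched-R\'enyi strong converse, tailored to the c-q setting via the input-reduction lemma \zcref{lem:divergence_input_reduction}. Since $\mathcal{E}^{\otimes n}$ and $\hat{\mathcal{F}}_n=\Tilde{\mathcal{F}}_m^{\otimes l}\otimes\mathcal{F}_*^{\otimes r}$ are both c-q channels, that lemma lets me replace $\sup_{\nu_n\in\mathcal{D}(R\X^n)}$ by $\sup_{x\in\X^n}$ over classical computational-basis inputs $\ketbra{x}{x}$. On such inputs, the two outputs are fully factorized product states, aligned with the block decomposition $n=ml+r$ in the second argument.

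Next I would invoke the standard one-shot strong-converse bound
\begin{equation*}
 D_H^\varepsilon(\rho\|\sigma)\ \leq\ \widetilde{D}_\alpha(\rho\|\sigma)+\tfrac{\alpha}{\alpha-1}\log\tfrac{1}{1-\varepsilon}\qquad (\alpha>1),
\end{equation*}
which follows in one line from data-processing of $\widetilde{D}_\alpha$ under the optimal binary POVM from the definition of $D_H^\varepsilon$. Additivity of $\widetilde{D}_\alpha$ on tensor products, applied to the block decomposition $x=x^{(1)}\cdots x^{(l)}\,y_1\cdots y_r$ matching $\hat{\mathcal{F}}_n$, then yields, after dividing by $n$ and taking the sup over $x$,
\begin{equation*}
 \tfrac{1}{n}\sup_x D_H^\varepsilon(\mathcal{E}^{\otimes n}(\ketbra{x}{x})\|\hat{\mathcal{F}}_n(\ketbra{x}{x}))\ \leq\ \tfrac{l}{n}\sup_{x\in\X^m}\widetilde{D}_\alpha(\mathcal{E}^{\otimes m}(\ketbra{x}{x})\|\Tilde{\mathcal{F}}_m(\ketbra{x}{x})) + \tfrac{r}{n}\sup_{y\in\X}\widetilde{D}_\alpha(\mathcal{E}(\ketbra{y}{y})\|\mathcal{F}_*(\ketbra{y}{y})) + \tfrac{1}{n}\tfrac{\alpha}{\alpha-1}\log\tfrac{1}{1-\varepsilon}.
\end{equation*}

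Finally I would take $\limsup_{n\to\infty}$: the first term has limit $\tfrac{1}{m}\sup_{x\in\X^m}\widetilde{D}_\alpha(\mathcal{E}^{\otimes m}(\ketbra{x}{x})\|\Tilde{\mathcal{F}}_m(\ketbra{x}{x}))$ because $l/n\to 1/m$; the middle term vanishes since assumption~(4), combined with the standard bound $\widetilde{D}_\alpha\leq D_{\max}$, makes the single-letter R\'enyi divergence uniformly bounded while $r<m$ is bounded; and the last term is $O(1/n)$. Letting $\alpha\to 1^+$ then commutes with $\sup_{x\in\X^m}$ (a maximum over a finite set since $\X$ is finite), and $\widetilde{D}_\alpha\to D$ pointwise. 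Re-applying \zcref{lem:divergence_input_reduction} in the reverse direction rewrites the resulting expression $\tfrac{1}{m}\sup_{x\in\X^m} D(\mathcal{E}^{\otimes m}(\ketbra{x}{x})\|\Tilde{\mathcal{F}}_m(\ketbra{x}{x}))$ as $\tfrac{1}{m}\sup_{\nu_m\in\mathcal{D}(R\X^m)} D(\mathcal{E}^{\otimes m}(\nu_m)\|\Tilde{\mathcal{F}}_m(\nu_m))$, which is the claim. The only mildly delicate points are the uniform bound on the tail R\'enyi term (trivial given assumption~(4)) and the commutation of $\alpha\to 1^+$ with the supremum (immediate given finiteness of $\X$), so I do not anticipate any serious obstacle for this lemma; the technical difficulty of the overall GQSL will instead lie in the direct part.
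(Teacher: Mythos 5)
Your proposal is essentially the paper's own proof: the same one-shot bound $D_H^\varepsilon(\rho\|\sigma)\leq \widetilde{D}_\alpha(\rho\|\sigma)+\frac{\alpha}{\alpha-1}\log\frac{1}{1-\varepsilon}$, the same reduction to computational-basis inputs via \zcref{lem:divergence_input_reduction}, additivity of $\widetilde{D}_\alpha$ over the block decomposition, vanishing of the $r$-tail by assumption (4), and the exchange of $\alpha\to1^+$ with the finite supremum. One step should be reordered: you invoke \zcref{lem:divergence_input_reduction} directly on $D_H^\varepsilon$ to replace $\sup_{\nu_n}$ by $\sup_x$, but that lemma requires the divergence to be faithful and jointly quasi-convex, and $D_H^\varepsilon$ is not faithful (e.g.\ $D_H^\varepsilon(\rho\|\rho)=-\log(1-\varepsilon)>0$) nor obviously jointly quasi-convex. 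The fix is trivial and is what the paper does: first apply the one-shot bound inside $\sup_{\nu_n}$ to pass to $\widetilde{D}_\alpha$, and only then apply the input-reduction lemma to $\widetilde{D}_\alpha$ (which for $\alpha>1$ is faithful, additive, and jointly quasi-convex); the resulting chain of inequalities and the final bound are unchanged.
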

\begin{proof}
    The idea used in the proof is that, asymptotically, we can consider $D(\E\n\|\hat{\mathcal{F}}_n)$ to be the same as $D(\E^{\otimes ml}\|\Tilde{\mathcal{F}}_m^{\otimes l})$, because the extra part $D(\E^{\otimes r}\|\F_*^{\otimes r})$ is $O(1)$ as $n\rightarrow\infty$, so it does not contribute to the final expression.\\
    \\
    Note that by assumptions (3) and (4) in \Autoref{def:axioms}, we have $\hat{\mathcal{F}}_n\in\mathcal{S}_n$ for all $n$.
    We use \cite[Proposition 7.71]{khatri_principles_2020} which states that any two states $\rho, \sigma$ and any $α > 1$:
    \begin{equation}
        D^{\varepsilon}_H(\rho \|\sigma)\leq \Ds(\rho \|\sigma) + \frac{\alpha}{\alpha-1}\log\left(\frac{1}{1-\varepsilon}\right)
    \end{equation}
    Using this, we get,
    \begin{eqnarray}
        &&\sup_{\nu_n\in\mathcal{D}(R\mathcal{X}^n)}{D_H^{\varepsilon}(\mathcal{E}^{\otimes n}(\nu_n)\|\hat{\mathcal{F}}_n(\nu_n))} \\
        &\leq&\sup_{\nu_n\in\mathcal{D}(R\mathcal{X}^n)}{\Ds(\mathcal{E}^{\otimes n}(\nu_n)\|\hat{\mathcal{F}}_n(\nu_n))} + \frac{\alpha}{\alpha-1}\log\left(\frac{1}{1-\varepsilon}\right) \\
        &=&\sup_{x\in\mathcal{X}^n}{\Ds(\mathcal{E}^{\otimes n}(\ketbra{x}{x})\|\hat{\mathcal{F}}_n(\ketbra{x}{x}))} + \frac{\alpha}{\alpha-1}\log\left(\frac{1}{1-\varepsilon}\right)
        \label{tensor_product_channels_samepoint}
    \end{eqnarray}
    by using \Autoref{lem:divergence_input_reduction}. We can do this since by e.g.\ \cite[Chapter 7.5]{khatri_principles_2020}, the sandwiched \renyi relative entropy $\Ds$ is additive under tensor products and faithful, and is also jointly quasi-convex for $\alpha>1$. \\
    \\ 
    Now, choose $x_*\in\mathcal{X}^n$ to extremize 
    \begin{equation}
        \sup_{x\in\mathcal{X}^n}{\Ds(\mathcal{E}^{\otimes n}(\ketbra{x}{x})\|\hat{\mathcal{F}}_n(\ketbra{x}{x}))}
    \end{equation}
    We can write $\ketbra{x_*}{x_*}$ in the form
    \begin{eqnarray}
        \ketbra{x_*}{x_*}=\left(\bigotimes_{i=1}^l{\ketbra{y_i}{y_i}}\right)\otimes\bigotimes_{j=1}^n\ketbra{z_j}{z_j}
    \end{eqnarray}
    where $y_i\in\mathcal{X}^m$ and $z_j\in\mathcal{X}$. With this we find:
    \begin{align}
        \sup_{x\in\mathcal{X}^n}&{\Ds(\mathcal{E}^{\otimes n}(\ketbra{x}{x})\|\hat{\mathcal{F}}_n(\ketbra{x}{x}))} \\
        &=\Ds(\mathcal{E}^{\otimes n}(\ketbra{x_*}{x_*})\|\hat{\mathcal{F}}_n(\ketbra{x_*}{x_*}))\\
        &= \sum_{i=1}^l\Ds(\mathcal{E}^{\otimes m}(\ketbra{y_i}{y_i})\|\Tilde{\mathcal{F}}_m(\ketbra{y_i}{y_i}))
        +\sum_{j=1}^r\Ds(\mathcal{E}(\ketbra{z_j}{z_j})\|\mathcal{F}_*(\ketbra{z_j}{z_j})) \\
        &\leq l \sup_{x\in\mathcal{X}^m}\Ds(\mathcal{E}^{\otimes m}(\ketbra{x}{x})\|\Tilde{\mathcal{F}}_m(\ketbra{x}{x})) 
        + r \sup_{x\in\mathcal{X}}\Ds(\mathcal{E}(\ketbra{x}{x})\|\mathcal{F}_*(\ketbra{x}{x}))
    \end{align}
    where we have used the additivity of $\Ds$ under tensor products. \\
    \\
    Due to assumption (4) in \Autoref{def:axioms}, $D_{\max}(\mathcal{E}(\ketbra{x}{x})\| \mathcal{F}_*(\ketbra{x}{x}))<\infty$ for all $x\in\mathcal{X}$, so the second term is finite. This means that, when we take $\limsup_{n\rightarrow\infty}\frac{1}{n}$ of both sides, the second term is zero asymptotically, so we obtain %maybe phrase this a bit differently
    \begin{eqnarray}
        \limsup_{n\rightarrow\infty}\frac{1}{n}\sup_{\nu_n\in\mathcal{D}(R\mathcal{X}^n)}{D_H^{\varepsilon}(\mathcal{E}^{\otimes n}(\nu_n)\|\hat{\mathcal{F}}_n(\nu_n))} 
        \leq\frac{1}{m}\sup_{x\in\mathcal{X}^m}\Ds(\mathcal{E}^{\otimes m}(\ketbra{x}{x})\|\Tilde{\mathcal{F}}_m(\ketbra{x}{x}))
    \end{eqnarray}
    also noting that $\frac{\alpha}{n(\alpha-1)}\log(\frac{1}{1-\varepsilon})=o(1)$ as $n\rightarrow\infty$. \\
    \\
    Now, we take the limit $\alpha \rightarrow1^+$. Since the supremum is over a finite set, we can exchange the limit and supremum. This gives
    \begin{eqnarray}
        \limsup_{n\rightarrow\infty}\frac{1}{n}\sup_{\nu_n\in\mathcal{D}(R\mathcal{X}^n)}{D_H^{\varepsilon}(\mathcal{E}^{\otimes n}(\nu_n)\|\hat{\mathcal{F}}_n(\nu_n))}
        &\leq&\frac{1}{m}\sup_{x\in\mathcal{X}^m}D(\mathcal{E}^{\otimes m}(\ketbra{x}{x})\|\Tilde{\mathcal{F}}_m(\ketbra{x}{x})) \\
        &\leq&\frac{1}{m}\sup_{\nu\in\mathcal{D}(R\mathcal{X}^m)}D(\mathcal{E}^{\otimes m}(\nu_m)\|\Tilde{\mathcal{F}}_m(\nu_m))
    \end{eqnarray}
    as required.
\end{proof}
Finally, we prove the strong converse. This proof is based on \cite[Proposition S5]{hayashi_generalized_2024}. Here, the upper bound in \Autoref{lem:GQSL_converse_bound} is used for an extremal choice of $\tilde{\mathcal{F}}_m$ for the quantum relative entropy in equation (\ref{eq:GQSL_converse_F_choice}). Then we take the limit $m\rightarrow\infty$ to get the desired inequality. It is worth noting that this proof does not require the assumption that $\S_n$ is closed under permutations.
\begin{proposition}
\label{lem:GQSL_strong_converse}
Let $\E:\mathcal{X}\rightarrow A$ be a c-q channel, and $(\mathcal{S}_n\subset\CPTP(\mathcal{X}^n\rightarrow A^n))_n$ be a sequence of sets of c-q channels satisfying assumptions (1), (3) and (4) in \Autoref{def:axioms}. Then, the strong converse

\begin{equation}
    \limsup_{n \to \infty} {1 \over n} D_H^{ε}(\E\n\|\S_n) \leq \lim_{n \to \infty} {1 \over n} D(\E\n\|\S_n)
\end{equation}
holds.
\end{proposition}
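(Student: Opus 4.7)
The plan is to use Lemma~\ref{lem:GQSL_converse_bound} with a near-optimal choice of $\tilde{\mathcal{F}}_m$, then let $m \to \infty$.

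First, I would upper bound the left-hand side by dropping the infimum over channels in favour of a specific, well-chosen sequence. For each $m$ and $\delta > 0$, by compactness of $\mathcal{S}_m$ and lower semicontinuity of $D$, I can choose $\tilde{\mathcal{F}}_m \in \mathcal{S}_m$ that is $\delta$-close to realizing the infimum in
\begin{equation}
    \inf_{\mathcal{F}_m \in \mathcal{S}_m} \sup_{\nu_m \in \DM[R\X^m]} D(\E^{\otimes m}(\nu_m)\|\mathcal{F}_m(\nu_m)).
\end{equation}
By Lemma~\ref{lem:rel_entropy_exchange} (applied to the quantum relative entropy, which is lower semicontinuous, satisfies data-processing and the direct-sum property), this infimum equals $D(\E^{\otimes m}\|\mathcal{S}_m)$ in the shorthand of \eqref{eq:shorthand_composite_channels}. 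Define $\hat{\mathcal{F}}_n := \tilde{\mathcal{F}}_m^{\otimes l} \otimes \mathcal{F}_*^{\otimes r}$ as in Lemma~\ref{lem:GQSL_converse_bound}; by assumptions (3) and (4) of \zcref{def:axioms}, $\hat{\mathcal{F}}_n \in \mathcal{S}_n$ for every $n$.

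Next, since $\hat{\mathcal{F}}_n \in \mathcal{S}_n$, the definition of $D_H^\varepsilon(\cdot\|\mathcal{S}_n)$ as $\sup_{\nu_n} \inf_{\mathcal{F}_n \in \mathcal{S}_n} D_H^\varepsilon(\E^{\otimes n}(\nu_n)\|\mathcal{F}_n(\nu_n))$ immediately gives
\begin{equation}
    D_H^\varepsilon(\E^{\otimes n}\|\mathcal{S}_n) \leq \sup_{\nu_n \in \DM[R\X^n]} D_H^\varepsilon(\E^{\otimes n}(\nu_n)\|\hat{\mathcal{F}}_n(\nu_n)).
\end{equation}
Dividing by $n$, taking $\limsup_{n \to \infty}$, and applying Lemma~\ref{lem:GQSL_converse_bound} yields
\begin{equation}
    \limsup_{n \to \infty} \frac{1}{n} D_H^\varepsilon(\E^{\otimes n}\|\mathcal{S}_n) \leq \frac{1}{m} \sup_{\nu_m \in \DM[R\X^m]} D(\E^{\otimes m}(\nu_m)\|\tilde{\mathcal{F}}_m(\nu_m)) \leq \frac{1}{m}\bigl(D(\E^{\otimes m}\|\mathcal{S}_m) + \delta\bigr),
\end{equation}
where the second inequality uses the choice of $\tilde{\mathcal{F}}_m$ and Lemma~\ref{lem:rel_entropy_exchange}.

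Finally, since $\delta > 0$ was arbitrary and the left-hand side is independent of $m$, I send $\delta \to 0$ and then $m \to \infty$, using \zcref{lem:lim_exists} to guarantee that $\lim_{m \to \infty} \frac{1}{m} D(\E^{\otimes m}\|\mathcal{S}_m)$ exists. This gives exactly the claimed inequality. The main subtleties are (i) ensuring that the near-extremizer $\tilde{\mathcal{F}}_m$ can be chosen in the sup-inf order (which is exactly where the exchange lemma is needed, along with closedness and convexity in assumption (1)), and (ii) verifying that Lemma~\ref{lem:GQSL_converse_bound} applies, which requires $\hat{\mathcal{F}}_n \in \mathcal{S}_n$ via assumptions (3) and (4). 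Note, as the authors remark, that this argument never uses permutation invariance (assumption (2)), so the strong converse holds under the weaker set of assumptions (1), (3), (4).
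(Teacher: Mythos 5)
Your proof is correct and follows essentially the same route as the paper's: choose a (near-)minimizer $\tilde{\mathcal{F}}_m$ of $\inf_{\mathcal{F}_m}\sup_{\nu_m}D$, build $\hat{\mathcal{F}}_n=\tilde{\mathcal{F}}_m^{\otimes l}\otimes\mathcal{F}_*^{\otimes r}$, apply \zcref{lem:GQSL_converse_bound}, use the exchange lemma (\zcref{lem:rel_entropy_exchange}) to identify the bound with $D(\E^{\otimes m}\|\S_m)$, and let $m\to\infty$ via \zcref{lem:lim_exists}. The only cosmetic difference is that you work with a $\delta$-approximate minimizer and send $\delta\to 0$, whereas the paper uses an exact minimizer (guaranteed by compactness and lower semicontinuity); both are fine.
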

\begin{proof}
    Fix $m\in\mathbb{N}$, and we choose $\tilde{\mathcal{F}}_k$ for each $k$ to extremise 
    \begin{equation}
        \label{eq:GQSL_converse_F_choice}
        \inf_{\mathcal{F}_k\in\mathcal{S}_k} \sup_{\nu_n\in\mathcal{D}(R\mathcal{X}^k)}D(\mathcal{E}^{\otimes k}(\nu_n)\|\mathcal{F}_k(\nu_n))
    \end{equation}
    This exists as $\mathcal{S}_k$ is compact, the quantum relative entropy is jointly lower semi-continuous (this is well-known see e.g.\ \cite{wehrl_general_1978}), and the supremum of lower semi-continuous functions is lower semi-continuous. \\
    \\
    Again, we define $\hat{\mathcal{F}}_n=\Tilde{\mathcal{F}}_m^{\otimes l}\otimes\mathcal{F}_*^{\otimes r}$ as before. Now, we get
    \begin{eqnarray}
        && \limsup_{n\rightarrow\infty}\frac{1}{n}\sup_{\nu_n\in\mathcal{D}(R\mathcal{X}^n)}\inf_{\mathcal{F}_n\in\mathcal{S}_n}D^{\varepsilon}_H(\mathcal{E}^{\otimes n}(\nu_n)\|\mathcal{F}_n(\nu_n)) \\ 
        &\leq&\limsup_{n\rightarrow\infty}\frac{1}{n}\sup_{\nu_n\in\mathcal{D}(R\mathcal{X}^n)}D^{\varepsilon}_H(\mathcal{E}^{\otimes n}(\nu_n)\|\hat{\mathcal{F}}_n(\nu_n))
        \\
        &\leq&\frac{1}{m} \sup_{\nu_m\in\mathcal{D}(R\mathcal{X}^m)}D(\mathcal{E}^{\otimes m}(\nu_m)\|\Tilde{\mathcal{F}}_m(\nu_m))
        \label{S5Bound1}
    \end{eqnarray}
    by \Autoref{lem:GQSL_converse_bound}. \\
    \\
    Additionally,
    \begin{eqnarray}
    \label{S5startineq}
        && \sup_{\nu_m\in\mathcal{D}(R\mathcal{X}^m)}\inf_{\mathcal{F}_m\in\mathcal{S}_m}D(\mathcal{E}^{\otimes m}(\nu_m)\|\mathcal{F}_m(\nu_m)) \\
        &\leq& \sup_{\nu_m\in\mathcal{D}(R\mathcal{X}^m)}D(\mathcal{E}^{\otimes m}(\nu_m)\|\Tilde{\mathcal{F}}_m(\nu_m)) \\
        &\overset{(a)}{=}& \inf_{\mathcal{F}_m\in\mathcal{S}_m}\sup_{\nu_m\in\mathcal{D}(R\mathcal{X}^m)}D(\mathcal{E}^{\otimes m}(\nu_m)\|\mathcal{F}_m(\nu_m)) \\
        &\overset{(b)}{=}& \sup_{\nu_m\in\mathcal{D}(R\mathcal{X}^m)}\inf_{\mathcal{F}_m\in\mathcal{S}_m}D(\mathcal{E}^{\otimes m}(\nu_m)\|\mathcal{F}_m(\nu_m))
        \label{S5endineq}
    \end{eqnarray}
    The equality in (a) is by the choice of $\tilde{\mathcal{F}}_m$, and the equality  in (b) is by \Autoref{lem:rel_entropy_exchange}. \\
    \\
    This means that all the inequalities in equations (\ref{S5startineq}) to (\ref{S5endineq}) are actually equalities. Hence,
    \begin{equation}
        \frac{1}{m} \sup_{\nu_m\in\mathcal{D}(R\mathcal{X}^m)}D(\mathcal{E}^{\otimes m}(\nu_m)\|\Tilde{\mathcal{F}}_m(\nu_m)) \\
        =\frac{1}{m} \sup_{\nu_m\in\mathcal{D}(R\mathcal{X}^m)}\inf_{\mathcal{F}_m\in\mathcal{S}_m}D(\mathcal{E}^{\otimes m}(\nu_m)\|\mathcal{F}_m(\nu_m))
        \label{S5Bound2}
    \end{equation}
    Combining equations (\ref{S5Bound1}) and (\ref{S5Bound2}) and taking the limit as $m\rightarrow\infty$ gives
    \begin{eqnarray}    
    && \limsup_{n\rightarrow\infty}\frac{1}{n}\sup_{\nu_n\in\mathcal{D}(R\mathcal{X}^n)}\inf_{\mathcal{F}_n\in\mathcal{S}_n}D^{\varepsilon}_H(\mathcal{E}^{\otimes n}(\nu_n)\|\mathcal{F}_n(\nu_n)) \\
    &\leq& \lim_{m\rightarrow\infty}\frac{1}{m} \sup_{\nu_m\in\mathcal{D}(R\mathcal{X}^m)}\inf_{\mathcal{F}_m\in\mathcal{S}_m}D(\mathcal{E}^{\otimes m}(\nu_m)\|\mathcal{F}_m(\nu_m))
    \label{S5end}
    \end{eqnarray}
    where the limit in (\ref{S5end}) exists by \Autoref{lem:lim_exists}.
\end{proof}
\subsection{Direct Part}
We start by giving a brief overview of the proof of the direct part.
The main construction (which we show in \Autoref{lem:main_iteration}), is that, given a sequence $(\F_n\in\mathcal{S}_n)_n$, we can find another sequence $(\F_n'\in\mathcal{S}_n)_n$ of channels satisfying
\begin{align}
    \liminf_{n\rightarrow\infty}\frac{1}{n}D(\mathcal{E}^{\otimes n}\|\mathcal{F}_n') -\liminf_{n\rightarrow\infty}\frac{1}{n}D_H^\varepsilon(\mathcal{E}^{\otimes n}\|\mathcal{S}_n) &\leq (1-\Tilde{\varepsilon})\left\{\liminf_{n\rightarrow\infty}\frac{1}{n}D(\mathcal{E}^{\otimes n}\|\mathcal{F}_n)-\liminf_{n\rightarrow\infty}\frac{1}{n}D_H^\varepsilon(\mathcal{E}^{\otimes n}\|\mathcal{S}_n)\right\}
\end{align}
This construction can be iterated to have the right-hand side go to zero, which gives a sequence of channels $(\F_{n,*})_n$ satisfying
\begin{eqnarray}
    \liminf_{n\rightarrow\infty}\frac{1}{n}D(\mathcal{E}^{\otimes n}\|\mathcal{F}_{n,*}) \leq\liminf_{n\rightarrow\infty}\frac{1}{n}D_H^\varepsilon(\mathcal{E}^{\otimes n}\|\mathcal{S}_n)\,,
\end{eqnarray}
which then proves the direct part. The tricky part of this construction is how to deal with the supremum over input states. We tackle this by deferring this supremum to the very end, i.e. we start by fixing a sequence of input states, then carry out our construction for this fixed sequence and take the supremum over all sequences of input states at the end. The fact that the constructed $\F_n'$ depend on this sequence of input states is not a problem because of \zcref{lem:rel_entropy_exchange}. 
Furthermore, we show that we can restrict to input states which are permutation invariant, and channels which are permutation covariant, which allows us to use the pinching inequality (and in particular \zcref{lem:pinching_bound}), even though the input (and thus also output) states we consider can be highly non-IID.
\\
\\
To perform the mentioned last step of taking the supremum over all sequences of input states and commuting this supremum with the limit, we will use the following Lemma:
\begin{lemma}
\label{lem:sup_over_sequences}
Let $(T_n)_n$ be a sequence of arbitrary sets, and define $D:=\{(n,x)\, |\, n\in\mathbb{N}, x\in T_n\}$. Let $\hat{T}$ denote the set of all sequences $(x_n)_n$, with $x_n\in T_n$ for every $n$. \\
\\
Let $f:D\rightarrow \mathbb{R}$ be a function, such that $\sup_{x_n \in T_n}f(n, x_n) < \infty$ for all $n \in \naturals$. Then
\begin{eqnarray}
    \sup_{(x_n)_n\in\hat{T}} \liminf_{n\rightarrow\infty}f(n,x_n) = \liminf_{n\rightarrow\infty} \sup_{x_n\in T_n}f(n,x_n)\,.
\end{eqnarray}
The analogous identity holds if the $\liminf$ is replaced with a $\limsup$ on both sides.
\end{lemma}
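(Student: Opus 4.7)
The plan is to establish the two inequalities separately, with essentially the same strategy working for both the $\liminf$ and $\limsup$ versions. For brevity set $a_n := \sup_{x_n \in T_n} f(n,x_n)$, which is finite for every $n$ by hypothesis.

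The inequality $\sup_{(x_n)_n \in \hat{T}} \liminf_{n \to \infty} f(n,x_n) \leq \liminf_{n \to \infty} a_n$ will be immediate: for any fixed sequence $(x_n)_n \in \hat{T}$ we have $f(n, x_n) \leq a_n$ for all $n$, so by the monotonicity of $\liminf$ we obtain $\liminf_n f(n, x_n) \leq \liminf_n a_n$, and the claim follows upon taking the supremum over sequences on the left-hand side. The analogous argument, with $\liminf$ replaced by $\limsup$ throughout, yields $\sup_{(x_n)_n \in \hat{T}} \limsup_n f(n,x_n) \leq \limsup_n a_n$.

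For the reverse inequalities I will construct an explicit witness sequence. Using finiteness of $a_n$ together with the definition of the supremum, for each $n \in \naturals$ I can pick $x_n^\ast \in T_n$ with $f(n, x_n^\ast) \geq a_n - 1/n$ (the degenerate case where some $T_n = \emptyset$ is trivial, since then $\hat{T} = \emptyset$ and both sides of the claimed identity equal $-\infty$ by convention). Then $(x_n^\ast)_n \in \hat{T}$ and, because $1/n \to 0$, we have
\begin{equation*}
    \liminf_{n \to \infty} f(n, x_n^\ast) \geq \liminf_{n \to \infty} (a_n - 1/n) = \liminf_{n \to \infty} a_n,
\end{equation*}
and likewise $\limsup_n f(n, x_n^\ast) \geq \limsup_n a_n$. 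Taking the supremum over $(x_n)_n \in \hat{T}$ on the left will then complete both the $\liminf$ and the $\limsup$ identities simultaneously.

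There is no real obstacle to this argument; the only ingredient that is not purely formal is the existence of the approximate maximizers $x_n^\ast$, and this is guaranteed by the assumption $\sup_{x_n \in T_n} f(n, x_n) < \infty$. Note that no structure on the sets $T_n$ or the function $f$ beyond this finiteness condition is needed, which is why the same diagonal-approximation construction simultaneously handles both the $\liminf$ and $\limsup$ cases.
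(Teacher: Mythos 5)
Your proof is correct and follows essentially the same approach as the paper: the easy inequality by monotonicity, and the reverse inequality via a sequence of approximate maximizers (you use a vanishing tolerance $1/n$ where the paper fixes $\varepsilon>0$ and lets $\varepsilon\to 0$ at the end, which is an immaterial difference). The only quibble is your parenthetical about empty $T_n$: if exactly one $T_n$ is empty then $\hat{T}=\emptyset$ while the right-hand side can still be finite, so that edge case is best excluded by assuming the $T_n$ nonempty (as the paper implicitly does) rather than claimed to be trivially consistent.
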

\begin{proof}
We will prove the statement for the $\liminf$ but the proof for the $\limsup$ statement is completely identical. 
    Fix $\varepsilon>0$. For each $n\in\mathbb{N}$, choose $x_n^* \in T_n$ such that
    \begin{eqnarray}
        \sup_{x_n\in T_n}f(n,x_n) \leq f(n,x_n^*)+\varepsilon
    \end{eqnarray}
    Then, we have
    \begin{eqnarray}
        \liminf_{n\rightarrow\infty} \sup_{x_n\in T_n}f(n,x_n) &\leq& \varepsilon+\liminf_{n\rightarrow\infty} f(n,x_n^*) \\
        &\leq& \varepsilon+ \sup_{(x_n)_n\in\hat{T}}\liminf_{n\rightarrow\infty} f(n,x_n)
    \end{eqnarray}
    Now, $\varepsilon$ is arbitrary, so we can take the limit $\varepsilon\rightarrow 0$ to get
    \begin{eqnarray}
        \liminf_{n\rightarrow\infty} \sup_{x_n\in T_n}f(n,x_n) &\leq&\sup_{(x_n)_n\in\hat{T}}\liminf_{n\rightarrow\infty} f(n,x_n)
    \end{eqnarray}
    Also, we have, for any sequence $(\Tilde{x}_n)_n\in \hat{T}$,
    \begin{eqnarray}
        \liminf_{n\rightarrow\infty} f(n,\Tilde{x}_n) &\leq&\liminf_{n\rightarrow\infty} \sup_{x_n\in T_n}f(n,x_n) 
    \end{eqnarray}
    Hence, 
    \begin{eqnarray}
        \sup_{(x_n)_n\in\hat{T}}\liminf_{n\rightarrow\infty} f(n,{x}_n) &\leq&\liminf_{n\rightarrow\infty} \sup_{x_n\in T_n}f(n,x_n) 
    \end{eqnarray}
    which gives the required result.
\end{proof}
Now, we will prove some technical lemmas on permutation covariant channels that we will use to prove the direct part. They allow us to deal only with permutation invariant states or permutation invariant channels. 
\getkeytheorem{stored:lem:perm_inputs}

\Autoref{lem:perm_inputs} and its proof are almost exactly the same as \cite[Lemma 24]{bergh_composite_2023} (see also \cite[Lemma 2.6.1]{bergh_discrimination_2025}), and hence the proof is given in \zcref{Appendix}. The difference between our statement and the statement in \cite{bergh_discrimination_2025} is that in our statement we additionally have infima over sets of permutation covariant channels on both sides of the equality. Nevertheless,   the proof is essentially identical.

\bigskip
The next lemma we prove here is a version of \cite[Lemma 23]{bergh_composite_2023} (see also \cite[Lemma 4.2.5]{bergh_discrimination_2025}), which is useful for the following reason: In the main achievability argument (\Autoref{lem:main_iteration} below) we will encounter expressions of the form
\begin{eqnarray}
    \inf_{\substack{\mathcal{E}\in\mathcal{S} \\ \mathcal{F}\in\mathcal{T}}} D(\mathcal{E}(\nu)\|\mathcal{F}(\nu))
\end{eqnarray}
where $\S$ and $\T$ contain channels that need not be permutation covariant. We show that if the input state is permutation invariant, we can restrict these optimizations to be over the subsets of permutation covariant channels.
\begin{lemma}
\label{lem:perm_cov_channels}
Let $\nu\in\mathcal{D}(A^n)$ be a permutation invariant state. Let $\mathcal{S}$, $\mathcal{T}$ be two sets of quantum channels in $\cptp[A^n\rightarrow B^n]$ that are closed under permutations (in the sense of (2) of \Autoref{def:axioms}), convex, and compact. Then
\begin{eqnarray}
    \inf_{\substack{\mathcal{E}\in\mathcal{S} \\ \mathcal{F}\in\mathcal{T}}} D(\mathcal{E}(\nu)\|\mathcal{F}(\nu)) =\inf_{\substack{\mathcal{E}\in\mathcal{S} \\ \mathcal{F}\in\mathcal{T} \\ \mathcal{E},\mathcal{F}\,perm.\,covariant}} D(\mathcal{E}(\nu)\|\mathcal{F}(\nu))
    \label{eq:GQSL_channel_perm_cov}
\end{eqnarray}
\end{lemma}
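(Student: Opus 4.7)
The proof plan is to show that for any pair $(\mathcal{E},\mathcal{F}) \in \mathcal{S} \times \mathcal{T}$ we can construct a ``twirled'' pair $(\bar{\mathcal{E}},\bar{\mathcal{F}})$ of permutation covariant channels, still in $\mathcal{S} \times \mathcal{T}$, whose relative entropy on $\nu$ is no larger than that of the original pair. Since the reverse inequality in \eqref{eq:GQSL_channel_perm_cov} is immediate (the right-hand side infimum is over a smaller set), this is enough.

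Concretely, I would define the symmetrizations
\begin{equation*}
    \bar{\mathcal{E}} \coloneqq \frac{1}{n!} \sum_{\pi \in \Sn} \pi \cdot \mathcal{E}, \qquad \bar{\mathcal{F}} \coloneqq \frac{1}{n!} \sum_{\pi \in \Sn} \pi \cdot \mathcal{F}.
\end{equation*}
By the assumption that $\mathcal{S}$ and $\mathcal{T}$ are both convex and closed under permutations, we have $\bar{\mathcal{E}} \in \mathcal{S}$ and $\bar{\mathcal{F}} \in \mathcal{T}$. A direct check using the group structure of $\Sn$ (i.e., that $\{\pi \sigma : \sigma \in \Sn\} = \Sn$ for every $\pi$) shows that $\bar{\mathcal{E}}$ and $\bar{\mathcal{F}}$ are permutation covariant.

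The main step is to show
\begin{equation*}
    D(\bar{\mathcal{E}}(\nu) \| \bar{\mathcal{F}}(\nu)) \leq D(\mathcal{E}(\nu) \| \mathcal{F}(\nu)).
\end{equation*}
Here the hypothesis that $\nu$ is permutation invariant is essential. Using $P_A(\pi) \nu P_A(\pi)^\dagger = \nu$, we have
\begin{equation*}
    (\pi \cdot \mathcal{E})(\nu) = P_B(\pi)^\dagger \, \mathcal{E}(\nu) \, P_B(\pi),
\end{equation*}
and similarly for $\mathcal{F}$. Thus $\bar{\mathcal{E}}(\nu)$ and $\bar{\mathcal{F}}(\nu)$ are the outputs of the twirling channel $\mathcal{T}_{\pi}(\rho) \coloneqq \frac{1}{n!} \sum_\pi P_B(\pi)^\dagger \rho P_B(\pi)$ applied to $\mathcal{E}(\nu)$ and $\mathcal{F}(\nu)$, respectively. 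Since $\mathcal{T}_{\pi}$ is a quantum channel (a convex combination of unitary conjugations), the data-processing inequality immediately gives the desired bound. Equivalently, one may invoke joint convexity of $D$ together with the fact that conjugation by $P_B(\pi)$ preserves relative entropy.

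Taking the infimum over $(\mathcal{E},\mathcal{F}) \in \mathcal{S} \times \mathcal{T}$ then yields
\begin{equation*}
    \inf_{\substack{\mathcal{E}\in\mathcal{S} \\ \mathcal{F}\in\mathcal{T} \\ \mathrm{perm.\ cov.}}} D(\mathcal{E}(\nu)\|\mathcal{F}(\nu)) \leq \inf_{\substack{\mathcal{E}\in\mathcal{S} \\ \mathcal{F}\in\mathcal{T}}} D(\mathcal{E}(\nu)\|\mathcal{F}(\nu)),
\end{equation*}
which, combined with the trivial opposite inequality, gives equality. I do not anticipate a significant obstacle: the compactness hypothesis on $\mathcal{S}$ and $\mathcal{T}$ is not even needed for this argument (it only ensures the infima are attained), and the argument mirrors the one in \cite{bergh_composite_2023} with only minor bookkeeping for having two sets of channels simultaneously.
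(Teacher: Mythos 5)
Your proposal is correct and follows essentially the same route as the paper's own proof: symmetrize $\mathcal{E}$ and $\mathcal{F}$ over $\Sn$ (staying in $\mathcal{S}$, $\mathcal{T}$ by convexity and permutation closure), use the permutation invariance of $\nu$ to identify $\bar{\mathcal{E}}(\nu)$ with the output-twirling channel applied to $\mathcal{E}(\nu)$, and conclude by data processing. Your side remarks (the reverse inequality being trivial, compactness being unnecessary for this step) are also accurate.
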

\begin{proof}
    The idea of the proof is to create new channels by averaging over the orbit of the action of $\Sn$ on the set of quantum channels, and showing that the quantum  relative entropy can only decrease under such an averaging.
    Let $\mathcal{E}\in\mathcal{S}$, $\mathcal{F}\in\mathcal{T}$ be arbitrary channels. Now, by closure under permutations and convexity, there exist channels  $\Bar{\mathcal{E}}\in\mathcal{S}$, $\Bar{\mathcal{F}}\in\mathcal{T}$, where
    \begin{eqnarray}
        \Bar{\mathcal{E}}(\rho)=\frac{1}{n!}\sum_{\pi\in \Sn}P_B(\pi)^\dag\mathcal{E}(P_A(\pi)\rho P_A(\pi)^\dag)P_B(\pi) \\
        \Bar{\mathcal{F}}(\rho)=\frac{1}{n!}\sum_{\pi\in \Sn}P_B(\pi)^\dag\mathcal{F}(P_A(\pi)\rho P_A(\pi)^\dag)P_B(\pi)
    \end{eqnarray}
   
   We can see these two new channels as the result of a permutation superchannel having acted on each channel, and it is known that such superchannels can only decrease any relative entropy where one optimizes over all input states (see e.g. \cite{gour_comparison_2019}). For our desired statement we do not have such an optimization over input states, but we do require our input state to be permutation invariant, and this will turn out to be enough. To see this, define $\mathcal{A}: B^n\rightarrow B^n$
    \begin{eqnarray}
        \mathcal{A}(\rho) := \frac{1}{n!}\sum_{\pi\in \Sn}P_B(\pi)^\dag\rho P_B(\pi)
    \end{eqnarray}
    Now, $\mathcal{A}$ is a CPTP map and so, by the data processing inequality,
    \begin{eqnarray}
        D(\mathcal{A}(\mathcal{E}(\nu))\|\mathcal{A}(\mathcal{F}(\nu))) \leq D(\mathcal{E}(\nu)\|\mathcal{F}(\nu))\,.
    \end{eqnarray}
    We also have
    \begin{eqnarray}
        \mathcal{A}(\mathcal{E}(\nu))&=&\frac{1}{n!}\sum_{\pi\in \Sn}P_B(\pi)^\dag\mathcal{E}(\nu) P_B(\pi) \\
        &=& \Bar{\mathcal{E}}(P_A(\pi)^\dag \nu P_A(\pi)) \\
        &=& \Bar{\mathcal{E}}(\nu)
    \end{eqnarray}
    using the permutation invariance of $\nu$. We get the analogous result for $\mathcal{A}(\mathcal{F}(\nu))$ in the same way. Hence, 
    \begin{eqnarray}
        D(\Bar{\mathcal{E}}(\nu)\|\Bar{\mathcal{F}}(\nu)) \leq D(\mathcal{E}(\nu)\|\mathcal{F}(\nu))
    \end{eqnarray}
    Now, since $\Bar{\mathcal{E}}\in\mathcal{S}$ and $\Bar{\mathcal{F}}\in\mathcal{T}$, this means that we can restrict the infimum on the left-hand side of (\ref{eq:GQSL_channel_perm_cov}) to being over permutation covariant channels. 
\end{proof}
\bigskip
As a last preliminary step before we come to the main part of the achievability argument, we require an analogue of \Autoref{lem:GQSL_converse_bound} (the strong converse-like bound) in which $\Hat{\F}_n$ is additionally averaged over all permutations.

\begin{lemma}[A Strong Converse-like Bound with a Permuted Second Argument]
Let $\E \in \CPTP(\X \to A)$ be a c-q-channel and let $(\S_n)_n$ be a sequence of sets of c-q channels that satisfy the axioms of \Autoref{def:axioms}.
\label{lem:perm_averaging_converse}
Fix any $m\in\mathbb{N}$, and $\Tilde{\mathcal{F}}_m\in\mathcal{S}_m$. \\
\\
Define $\hat{\mathcal{F}}_n:=\Tilde{\mathcal{F}}^{\otimes l}\otimes \mathcal{F}_*^{\otimes r}$, where $n=ml+r$, with $l,r\in\mathbb{Z}$, and $0\leq r<n$ (remember that $\mathcal{F}_*$ is defined to satisfy $D_{\max}(\E \| \F_*)<\infty$ in \Autoref{def:axioms}). Then, for any input state $\tau$,
\begin{eqnarray}
    \hat{\mathcal{F}_n'}(\tau):=\frac{1}{n!}\sum_{\pi\in \Sn}(\pi\cdot\hat{\mathcal{F}}_n)(\tau) \,.
\end{eqnarray}
For any $\varepsilon\in[0,1)$, we have
\begin{equation}
    \limsup_{n\rightarrow\infty}\frac{1}{n}\sup_{\nu_n\in\mathcal{D}(R\mathcal{X}^n)}{D_H^{\varepsilon}(\mathcal{E}^{\otimes n}(\nu_n)\|\hat{\mathcal{F}}_n'(\nu_n))}
    \leq \frac{1}{m}\sup_{\nu_m\in\mathcal{D}(R\mathcal{X}^m)}{D(\mathcal{E}^{\otimes m}(\nu_m)\|\Tilde{\mathcal{F}}_m(\nu_m))}
\end{equation}
\end{lemma}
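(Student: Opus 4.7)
The plan is to reduce this statement directly to \Autoref{lem:GQSL_converse_bound} by exploiting (i) the quasi-convexity of $D_H^\varepsilon$ in its second argument, and (ii) the permutation covariance of $\E^{\otimes n}$. The right-hand side is already the bound produced by \Autoref{lem:GQSL_converse_bound} for the non-averaged $\hat{\mathcal{F}}_n$, so we only need to show that the permutation average on the inside of the divergence cannot make things larger.

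\textbf{Step 1 (quasi-convexity of $D_H^\varepsilon$ in the second slot):} I will first observe the elementary fact that, for any probability distribution $\{p_i\}$ and any states $\sigma_i$,
\begin{equation*}
D_H^\varepsilon\bigl(\rho\,\big\|\,\textstyle\sum_i p_i \sigma_i\bigr) \leq \max_i D_H^\varepsilon(\rho\|\sigma_i).
\end{equation*}
This follows in one line: if $M^*$ is optimal for the left-hand side, it satisfies the type-I constraint, so $\operatorname{Tr}(M^* \sigma_i) \geq e^{-D_H^\varepsilon(\rho\|\sigma_i)}$ for every $i$, and hence $\operatorname{Tr}(M^*\sum_i p_i\sigma_i) \geq \min_i e^{-D_H^\varepsilon(\rho\|\sigma_i)}$, which rearranges to the claim.

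\textbf{Step 2 (remove the permutation average):} Apply Step 1 to the convex combination $\hat{\mathcal{F}}_n'(\nu_n)=\tfrac{1}{n!}\sum_{\pi\in\Sn}(\pi\cdot\hat{\mathcal{F}}_n)(\nu_n)$ to obtain
\begin{equation*}
D_H^\varepsilon\!\bigl(\mathcal{E}^{\otimes n}(\nu_n)\,\big\|\,\hat{\mathcal{F}}_n'(\nu_n)\bigr) \leq \max_{\pi\in\Sn} D_H^\varepsilon\!\bigl(\mathcal{E}^{\otimes n}(\nu_n)\,\big\|\,(\pi\cdot\hat{\mathcal{F}}_n)(\nu_n)\bigr).
\end{equation*}
For each fixed $\pi$, use the unitary invariance of $D_H^\varepsilon$ (from data processing applied to the unitary channel $\rho\mapsto P_A(\pi)\,\rho\,P_A(\pi)^\dag$) and the permutation covariance of $\mathcal{E}^{\otimes n}$ (which, being a tensor power of identical c-q channels, intertwines with $P_\mathcal{X}(\pi)$ and $P_A(\pi)$) to rewrite the right-hand summand as $D_H^\varepsilon(\mathcal{E}^{\otimes n}(\nu_n^\pi)\|\hat{\mathcal{F}}_n(\nu_n^\pi))$, with $\nu_n^\pi := P_\mathcal{X}(\pi)\,\nu_n\,P_\mathcal{X}(\pi)^\dag \in \mathcal{D}(R\mathcal{X}^n)$.

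\textbf{Step 3 (invoke \Autoref{lem:GQSL_converse_bound}):} Since $\nu_n^\pi$ is itself a valid state in $\mathcal{D}(R\mathcal{X}^n)$, taking suprema over $\nu_n$ yields
\begin{equation*}
\sup_{\nu_n} D_H^\varepsilon\!\bigl(\mathcal{E}^{\otimes n}(\nu_n)\,\big\|\,\hat{\mathcal{F}}_n'(\nu_n)\bigr) \leq \sup_{\nu_n} D_H^\varepsilon\!\bigl(\mathcal{E}^{\otimes n}(\nu_n)\,\big\|\,\hat{\mathcal{F}}_n(\nu_n)\bigr).
\end{equation*}
Dividing by $n$, taking $\limsup_{n\to\infty}$ and applying \Autoref{lem:GQSL_converse_bound} to the right-hand side gives exactly the desired bound by $\tfrac{1}{m}\sup_{\nu_m} D(\mathcal{E}^{\otimes m}(\nu_m)\|\tilde{\mathcal{F}}_m(\nu_m))$.

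There is no real obstacle here — the only thing one has to be careful about is that Step 2 requires both the unitary data-processing step and the permutation covariance identity $\mathcal{E}^{\otimes n}(P_\mathcal{X}(\pi)\nu_n P_\mathcal{X}(\pi)^\dag)=P_A(\pi)\mathcal{E}^{\otimes n}(\nu_n)P_A(\pi)^\dag$, which lets us move the permutation from acting on $\hat{\mathcal{F}}_n$ over to acting (trivially) on $\mathcal{E}^{\otimes n}$ and nontrivially on the input. This structural symmetry is exactly why the permutation averaging is harmless and why no new ideas beyond \Autoref{lem:GQSL_converse_bound} are required.
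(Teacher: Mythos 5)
Your proposal is correct, and it takes a genuinely different (and more modular) route than the paper. The paper's proof first passes from $D_H^{\varepsilon}$ to the sandwiched \renyi divergence $\Ds$ via $D_H^{\varepsilon}(\rho\|\sigma)\leq \Ds(\rho\|\sigma)+\frac{\alpha}{\alpha-1}\log\frac{1}{1-\varepsilon}$, strips off the permutation average using the \emph{joint} quasi-convexity of $\Ds$ for $\alpha>1$ together with the covariance of $\E\n$ and the unitary invariance of $\Ds$, and then re-runs the entire tail of the proof of \zcref{lem:GQSL_converse_bound} (input reduction to $\ketbra{x}{x}$, additivity of $\Ds$, the limit $\alpha\to 1^+$). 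You instead establish the elementary fact that $D_H^{\varepsilon}$ is quasi-convex in its second argument directly from its variational definition — your one-line argument is sound: the optimal $M^*$ for $\sum_i p_i\sigma_i$ is feasible for each $\sigma_i$, so $\Tr(M^*\sum_i p_i\sigma_i)\geq \sum_i p_i e^{-D_H^{\varepsilon}(\rho\|\sigma_i)}\geq e^{-\max_i D_H^{\varepsilon}(\rho\|\sigma_i)}$ — and this lets you remove the permutation average at the level of $D_H^{\varepsilon}$ itself (the middle step, moving $\pi$ from $\hat{\F}_n$ onto the input via unitary invariance of $D_H^{\varepsilon}$ and the covariance $\E\n(P_{\X}(\pi)\nu_n P_{\X}(\pi)^\dag)=P_A(\pi)\E\n(\nu_n)P_A(\pi)^\dag$, is identical in spirit to the paper's). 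After that you invoke \zcref{lem:GQSL_converse_bound} as a black box rather than repeating its internals. What your approach buys is brevity and a cleaner dependency structure (the c-q assumption enters only through the cited lemma); what the paper's approach buys is essentially nothing extra here beyond reusing machinery it has already set up — so your version is a legitimate simplification. One minor point of hygiene: you should note that the minimizing $M^*$ exists (the feasible set $\{0\leq M\leq\1,\ \Tr(M\rho)\geq 1-\varepsilon\}$ is compact and the objective continuous), or work with a near-optimal $M$; this is trivial but worth a clause.
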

\begin{proof}
    First of all note that, by convexity, closure under tensor products, and closure under permutations in \Autoref{def:axioms}, we have $\hat{\mathcal{F}}_n'\in\mathcal{S}_n$ for all $n$.
    We use the following inequality (see e.g. \cite{khatri_principles_2020}):
    \begin{equation}
        D^{\varepsilon}_H(\rho \|\sigma)\leq \Ds(\rho \|\sigma) + \frac{\alpha}{\alpha-1}\log\left(\frac{1}{1-\varepsilon}\right)
    \end{equation}
    which is satisfied for any two states $\rho, \sigma$. Using this, we get, for $\alpha>1$ (using the notation defined in Section \ref{ch:perm}) 
    \begin{eqnarray}
        &&\sup_{\nu_n\in\mathcal{D}(R\mathcal{X}^n)}{D_H^{\varepsilon}(\mathcal{E}^{\otimes n}(\nu_n)\|\hat{\mathcal{F}}'_n(\nu_n))} \nonumber\\
        &\leq & \sup_{\nu_n\in\mathcal{D}(R\mathcal{X}^n)}{\Ds(\mathcal{E}^{\otimes n}(\nu_n)\|\hat{\mathcal{F}}_n'(\nu_n))} + \frac{\alpha}{\alpha-1}\log\left(\frac{1}{1-\varepsilon}\right)\nonumber \\
        &\leq&\sup_{\nu_n\in\mathcal{D}(R\mathcal{X}^n)} \max_{\pi\in\Sn} \Ds(\mE\| \mF) + \frac{\alpha}{\alpha-1}\log\left(\frac{1}{1-\varepsilon}\right)\nonumber\\&&
    \end{eqnarray}
    since the sandwiched Rényi relative entropy $\Ds$ is jointly quasi-convex for $\alpha>1$, and $\mathcal{E}^{\otimes n}$ is permutation covariant. \\
    \\
    Now, the sandwiched Rényi relative entropy is invariant under the action of unitaries (this can be seen directly by the definition, or for example also follows from the data-processing inequality), so we obtain
    \begin{eqnarray}
       && \sup_{\nu_n\in\mathcal{D}(R\mathcal{X}^n)} \max_{\pi\in\Sn} \Ds(\mE\| \mF) + \frac{\alpha}{\alpha-1}\log\left(\frac{1}{1-\varepsilon}\right)\nonumber\\
        && \quad =\sup_{\nu_n\in\mathcal{D}(R\mathcal{X}^n)} \max_{\pi\in\Sn} \Ds(\mathcal{E}^{\otimes n}(P_{\mathcal{X}}(\pi)\nu_n P_{\mathcal{X}}(\pi)^\dag)\|\hat{\mathcal{F}}_n(P_{\mathcal{X}}(\pi)\nu_n P_{\mathcal{X}}(\pi)^\dag))\nonumber\\
        && \quad  \quad \quad + \,\, \frac{\alpha}{\alpha-1}\log\left(\frac{1}{1-\varepsilon}\right) \\
        && \quad \overset{(a)}{=}\sup_{\nu_n\in\mathcal{D}(R\mathcal{X}^n)}\Ds(\mathcal{E}^{\otimes n}(\nu_n)\|\hat{\mathcal{F}}_n(\nu_n)) + \frac{\alpha}{\alpha-1}\log\left(\frac{1}{1-\varepsilon}\right) \\
        && \quad \overset{(b)}{=}\sup_{x\in\mathcal{X}^n}{\Ds(\mathcal{E}^{\otimes n}(\ketbra{x}{x})\|\hat{\mathcal{F}}_n(\ketbra{x}{x}))} + \frac{\alpha}{\alpha-1}\log\left(\frac{1}{1-\varepsilon}\right)
    \end{eqnarray}
    Here, we use that $\mathcal{D}(R\mathcal{X}^n)$ is closed under permuting the copies of $\mathcal{X}$ in (a). Also, we use \Autoref{lem:divergence_input_reduction} in (b). We can do this since by e.g.\ \cite{khatri_principles_2020}, the sandwiched Renyi relative entropy $\Ds$ is additive under tensor products, faithful, and is jointly quasi-convex for $\alpha>1$. \\
    \\ 
    The rest of the proof follows exactly the same steps as \Autoref{lem:GQSL_converse_bound}, from equation (\ref{tensor_product_channels_samepoint}) onwards.
\end{proof}
\noindent Finally, we get to the main part of the achivability proof:
\begin{lemma}
\label{lem:main_iteration}
Let $(\mathcal{F}_n)_n$ be a sequence of channels with $\mathcal{F}_n\in\mathcal{S}_n$, and let $(\mathcal{S}_n)_n$ satisfy the assumptions in \Autoref{def:axioms}. \\
For $\varepsilon\in (0,1)$, let 
\begin{align}
    R_2 & \coloneqq\liminf_{n\rightarrow\infty}\frac{1}{n}\sup_{\nu_n\in\mathcal{D}(R\mathcal{X}^n)}D(\mathcal{E}^{\otimes n}(\nu_n)\|\mathcal{F}_n(\nu_n)) \\
    R_{1,\varepsilon} &\coloneqq \liminf_{n \to \infty} {1 \over n} D_H^{ε}(\E\n\|\S_n) = \liminf_{n\rightarrow\infty}\frac{1}{n}\sup_{\nu_n\in\mathcal{D}(R\mathcal{X}^n)}\inf_{\Bar{\mathcal{F}}_n\in \mathcal{S}_n}{D^{\varepsilon}_H(\mathcal{E}^{\otimes n}(\nu_n)\|\Bar{\mathcal{F}}_n(\nu_n))}
\end{align}
Suppose that $R_2>R_{1,\varepsilon}$. Then for any fixed $\Tilde{\varepsilon}\in(0,\varepsilon)$, there exists a sequence $(\mathcal{F}_n')_n$ with $\mathcal{F}_n'\in\mathcal{S}_n$ such that 
\begin{eqnarray}
    \liminf_{n\rightarrow\infty}\frac{1}{n}\sup_{\nu_n\in\mathcal{D}(R\mathcal{X}^n)} D(\mathcal{E}^{\otimes n}(\nu_n)\|\mathcal{F}_n'(\nu_n)) -R_{1,\varepsilon} \leq (1-\Tilde{\varepsilon})(R_2-R_{1,\varepsilon})
\end{eqnarray}
\end{lemma}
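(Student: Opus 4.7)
The strategy, inspired by~\cite{hayashi_generalized_2024}, is to first fix an arbitrary sequence of input states $(\nu_n)_n \subset \mathcal{D}(R\mathcal{X}^n)$, construct $\F_n'$ adapted to this sequence, and defer the supremum over input states to the very end via \zcref{lem:sup_over_sequences}. By symmetrizing each $\nu_n$ over $\Sn$ and---using assumption (2) of \zcref{def:axioms} together with \zcref{lem:perm_cov_channels}---symmetrizing each $\F_n$, I may assume throughout that $\nu_n$ is permutation invariant and $\F_n$ is permutation covariant, which in turn forces every output $\F(\nu_n)$ constructed below to be permutation invariant on $\HS^{\otimes n}$ and, by \cite[Lemma II.1]{csiszar_method_1998}, to carry only $O(\mathrm{poly}(n))$ distinct spectral points. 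This polynomial bound on the number of spectral points is the indispensable prerequisite for later invocation of \zcref{lem:pinching_bound}.

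For each $n$ I plan to define the convex mixture
\[
    \F_n' := (1-\tilde{\varepsilon})\F_n + \tilde{\varepsilon}\,\hat{\F}_n,
\]
where $\hat{\F}_n \in \S_n$ is a permutation-averaged corrector channel built from the near-optimal hypothesis-testing measurement $T_n$ achieving $D_H^\varepsilon(\E^{\otimes n}\|\S_n)$ combined with the full-rank channel $\F_*$ from assumption (4), in an analogue of the Hayashi--Yamasaki state-pinching corrector construction. The closure properties (1)--(4) of \zcref{def:axioms} will ensure $\F_n' \in \S_n$ and preserve permutation covariance.

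The central estimate combines joint convexity of $D$ with a pinching / Neyman--Pearson argument. Joint convexity of the Umegaki relative entropy gives
\[
    \tfrac{1}{n}D(\E^{\otimes n}(\nu_n)\|\F_n'(\nu_n)) \leq (1-\tilde{\varepsilon})\tfrac{1}{n}D(\E^{\otimes n}(\nu_n)\|\F_n(\nu_n)) + \tilde{\varepsilon}\,\tfrac{1}{n}D(\E^{\otimes n}(\nu_n)\|\hat{\F}_n(\nu_n)).
\]
The first term contributes $(1-\tilde{\varepsilon})R_2$ after taking $\liminf_n$ and commuting with $\sup_{(\nu_n)}$ via \zcref{lem:sup_over_sequences}. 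For the second term, the design of $\hat{\F}_n$ spectrally aligns $\hat{\F}_n(\nu_n)$ with $T_n$, so that applying \zcref{lem:pinching_bound} with respect to the spectrum of $\hat{\F}_n(\nu_n)$ reduces the quantity to a classical pinched relative entropy $\sum_i p_i \log(p_i/q_i)$ at an additive cost of $O(\log n)$; a Neyman--Pearson-type splitting at threshold $e^{nR_{1,\varepsilon}}$ then bounds the low-ratio part by $nR_{1,\varepsilon}$ deterministically, while the high-ratio tail is controlled to be $o(n)$ by exhibiting the high-ratio projector as a valid $(1-\varepsilon)$-hypothesis-testing measurement against all of $\S_n$. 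Combining these and once more invoking \zcref{lem:sup_over_sequences} then gives
\[
    \liminf_{n\to\infty}\tfrac{1}{n}\sup_{\nu_n}D(\E^{\otimes n}(\nu_n)\|\F_n'(\nu_n)) \leq (1-\tilde{\varepsilon})R_2 + \tilde{\varepsilon}R_{1,\varepsilon},
\]
which rearranges to the statement of the lemma.

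The main technical obstacle will be the construction of the corrector channel $\hat{\F}_n$ as an actual element of $\S_n$ (not just a state on $\HS^{\otimes n}$) that is spectrally compatible with $T_n$, together with the accompanying Neyman--Pearson splitting of the classical pinched relative entropy. The presence of the full-rank channel $\F_*$ is essential to keep the pinched eigenvalues of $\hat{\F}_n(\nu_n)$ from decaying exponentially (so that the high-ratio tail bound $o(n)$ is achievable), and permutation covariance is what keeps the pinching correction at only $O(\log n)$. This is also the single place in the direct part where the c-q hypothesis on $\E$ is used, entering through \zcref{lem:divergence_input_reduction} inside \zcref{lem:perm_averaging_converse}.
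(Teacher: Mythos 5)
Your scaffolding (fix a sequence of permutation-invariant inputs, symmetrize the channels via assumption (2) and \zcref{lem:perm_cov_channels}, pinch with a $\mathrm{poly}(n)$ spectral bound, split by likelihood ratio, and defer the supremum to the end via \zcref{lem:sup_over_sequences} and \zcref{lem:rel_entropy_exchange}) matches the paper's, but the central quantitative step has a genuine gap. The convexity split $D(\rho_n\|\F_n'(\nu_n))\leq(1-\tilde{\varepsilon})D(\rho_n\|\F_n(\nu_n))+\tilde{\varepsilon}D(\rho_n\|\hat{\F}_n(\nu_n))$ is itself valid, but it forces you to construct $\hat{\F}_n\in\S_n$ with $\frac{1}{n}D(\rho_n\|\hat{\F}_n(\nu_n))\to R_{1,\varepsilon}$. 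If such a one-shot construction were available, taking $\F_n'=\hat{\F}_n$ would already prove the lemma with contraction factor $0$ and render the iteration in \zcref{lem:GQSL_direct} unnecessary --- a strong sign that it is not. Concretely, your two-threshold Neyman--Pearson argument fails at the tail: the achievability of $R_{1,\varepsilon}$ (via the analogue of \cite[Lemma S10]{hayashi_generalized_2024}) only forces the pinched mass above the threshold $e^{n(R_{1,\varepsilon}+\varepsilon_2)}$ to be at most $1-\varepsilon$ along a subsequence --- a constant, not $o(1)$ --- while the log-likelihood ratio on that set can be as large as $nD_{\max}(\E\|\F_*)+O(\log n)$. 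The resulting tail contribution is $\Theta(n)$, not $o(n)$, and $(1-\varepsilon)\cdot D_{\max}(\E\|\F_*)$ does not recover $(1-\tilde{\varepsilon})R_2+\tilde{\varepsilon}R_{1,\varepsilon}$.

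The repair is structurally different from what you propose: in the paper the factor $(1-\tilde{\varepsilon})$ does not come from mixture weights at all. The paper uses the unweighted average $\F'_{n,\omega_n}=\frac{1}{3}(\hat{\F}_n'+\F_*^{\otimes n}+\F_n^*)$, bounds $D(E_n(\rho_n)\|\tilde{\sigma}_n')$ directly (the weights costing only $\log 3$), and introduces a \emph{second} threshold at $e^{n(R_2+\varepsilon_0+\varepsilon_2)}$, giving three spectral regions: the band between $R_{1,\varepsilon}$ and $R_2+\varepsilon_0$ carries mass at most $1-\varepsilon$ at rate at most $R_2+\varepsilon_0$, and only the mass above $R_2+\varepsilon_0$ vanishes. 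Establishing that vanishing requires the strong-converse-type estimate $\limsup_n\frac{1}{n}D_H^{1-\varepsilon_1}(\rho_n\|\tilde{\sigma}_n')\leq R_2+\varepsilon_0$, supplied by the component $\hat{\F}_n'$ (the permutation average of $\F_m^{\otimes l}\otimes\F_*^{\otimes r}$) through \zcref{lem:perm_averaging_converse}; a corrector built from the optimal test $T_n$ provides no such upper bound. Dually, the probability bound at the first threshold needs the component $\F_n^*$ minimizing $D_H^{\varepsilon}(\rho_n\|\bar{\F}_n(\omega_n))$, which is absent from your mixture. The parameter $\tilde{\varepsilon}$ then enters only through the identity $R_{1,\varepsilon}+(1-\varepsilon)(R_2+\varepsilon_0-R_{1,\varepsilon})=R_{1,\varepsilon}+(1-\tilde{\varepsilon})(R_2-R_{1,\varepsilon})$ with $\varepsilon_0=\frac{\varepsilon-\tilde{\varepsilon}}{1-\varepsilon}(R_2-R_{1,\varepsilon})$.
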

\begin{proof}
This proof follows the lines of \cite[Lemma S8]{hayashi_generalized_2024}, and the idea goes as follows: We start by fixing a sequence of input states $(\omega_n)_n$ that are permutation invariant. Then, we construct a new sequence of channels $(\F_{n,\omega_n}')_n$ where each $\F_{n, \omega_n}'$ is an average of three different channels. The $\F_{n,\omega_n}'$ depend on $\omega_n$, although this dependence will ultimately not be a problem because of \zcref{lem:rel_entropy_exchange}. Each $\F_{n,\omega_n}'$ is permutation covariant, and hence we can apply \zcref{lem:pinching_bound} to the pinching map $E_n$ which pinches with respect to the spectral projectors of $\F_{n,\omega_n}(\omega_n)$. We then split $D(E_n(\E\n(\omega_n))\| \F_{n,\omega_n}(\omega_n))$ into three terms, bounds on which are then obtained in terms of relative entropies involving the three channels that $\F_{n,\omega_n}$ is an average of.
Finally, we take a supremum over sequences of states and use \Autoref{lem:sup_over_sequences} to get the supremum over states inside the limit, and get the required bound.

    \noindent To start, fix 
    \begin{eqnarray}
        \varepsilon_0=\frac{\varepsilon-\Tilde{\varepsilon}}{1-\varepsilon}(R_2-R_{1,\varepsilon})\,,
        \label{eq:GQSL_direct_e0_def}
    \end{eqnarray}
    and choose $m$ sufficiently large such that
    \begin{eqnarray}
        \frac{1}{m}\sup_{\nu_m\in\mathcal{D}(R\mathcal{X}^m)}D(\mathcal{E}^{\otimes m}(\nu_m)\|\mathcal{F}_m(\nu_m)) \leq R_2+\varepsilon_0 \,.
        \label{S8b_mbound}
    \end{eqnarray}
    We define $\hat{\mathcal{F}_n'}\in\mathcal{S}_n$ by
    \begin{eqnarray}
        \hat{\mathcal{F}_n'}(\tau):=\frac{1}{n!}\sum_{\pi\in S_n}(\pi\cdot(\mathcal{F}_m^{\otimes l}\otimes\mathcal{F}_*^{\otimes r}))(\tau )
    \end{eqnarray}
    This is an element of $\mathcal{S}_n$ by closure under permutations, closure under tensor products, and convexity. \\
    \\
    Let $K\cong \mathcal{X}$ and $(\omega_n \in\mathcal{D}(K^n\mathcal{X}^n))_n$ be any sequence of input states such that every $\omega_n$ is permutation invariant when permuting the copies of $K\mathcal{X}$. \\
    \\
    Henceforth in this proof, when we talk about permutation invariance or permutation covariance without specifying the systems we permute, we mean with respect to permuting the $n$ copies of $K\mathcal{X}$.\\
\\
    For each $n$, define $\mathcal{F}_n^*$ as the minimizer of 
    \begin{eqnarray}
        \inf_{\Bar{\mathcal{F}}_n\in\mathcal{S}_n} {D^{\varepsilon}_H(\mathcal{E}^{\otimes n}(\omega_n)\|\Bar{\mathcal{F}}_n(\omega_n))}
    \end{eqnarray}
    This exists since $\mathcal{S}_n$ is compact, and $D_H^{\varepsilon}$ is a supremum of continuous functions and hence lower semi-continuous. Since $\omega_n$ is a permutation invariant state, we can assume $\id_{K^n}\otimes\mathcal{F}_n^*$ to be permutation covariant by \Autoref{lem:perm_cov_channels}.\\
    \\
    Furthermore, set
    \begin{equation}
        \mathcal{F}_{n,\omega_n}'
    \coloneqq\frac{1}{3}(\hat{\mathcal{F}}_n'+\mathcal{F}_*^{\otimes n}+\mathcal{F}^*_n)\,.
    \end{equation}
    Assumptions (1), (3), and (4) in \Autoref{def:axioms} then imply that $\F_{n, \omega_n}' \in \S_n$.
    Define
    \begin{eqnarray}
        \sigma_n&=&\hat{\mathcal{F}}_n'(\omega_n) \label{S8Sig_n} \\
        \sigma_{n,full}&=&\mathcal{F}_*^{\otimes n}(\omega_n) \label{S8Sig_full}\\
        \sigma_n^*&=&\mathcal{F}_n^*(\omega_n) \label{S8Sig_star}\\
        \Tilde{\sigma}_n'&=&\mathcal{F}'_{n,\omega_n}(\omega_n) \label{S8Sig_prime} \\
        \rho_n &=& \mathcal{E}^{\otimes n}(\omega_n)
    \end{eqnarray}
    The definition of $\F_{n, \omega_n}'$ means that $\Tilde{\sigma}_n'=\frac{1}{3}(\sigma_n+\sigma_{n,full}+\sigma_n^*)$. \\
    \\
    By construction, each of $\sigma_n$, $\sigma_{n,full}$  and $\sigma_n^*$ is permutation invariant, since they are the output states of a permutation covariant channel with a permutation invariant input. For this, note that $\id_{K^n}\otimes\F_n$ is permutation covariant with respect to permuting the $n$ copies of $K\mathcal{X}$ and $KA$ if and only if $\F_n$ is permutation covariant with respect to permuting the $n$ copies of $\mathcal{X}$ and $A$. In the following argument, this allows us to pinch the input state $\rho_n$ and keep the same asymptotic limit of the quantum relative entropy. \\
    \\
    Let $E_n$ be the pinching map with respect to $\Tilde{\sigma}_n'$. Then, by \Autoref{lem:pinching_bound}, we have that
    \begin{eqnarray}
        D(\rho_n\|\Tilde{\sigma}_n')-D(E_n(\rho_n)\|\Tilde{\sigma}_n')
        \leq \log|\spec(\Tilde{\sigma}_n')| = O(\log(\poly(n)))
        \label{S8Pinching}
    \end{eqnarray}
    This is because $\tilde{σ}_n'$ is permutation invariant, so $|\spec(\tilde{σ}_n')|=O(\poly(n))$. We take $\liminf_{n\rightarrow\infty}\frac{1}{n}$ in (\ref{S8Pinching}), to find
    \begin{eqnarray}
        \liminf_{n\rightarrow\infty}\frac{1}{n}D(\rho_n\|\Tilde{\sigma}_n') =\liminf_{n\rightarrow\infty}\frac{1}{n}D(E_n(\rho_n)\|\Tilde{\sigma}_n')\,.
        \label{S8Equality2}
    \end{eqnarray}    
    Now, we have 
    \begin{equation}
        \liminf_{n\rightarrow\infty}\frac{1}{n}D_H^{\varepsilon}(E_n(\rho_n)\|\Tilde{\sigma}_n') 
        = \liminf_{n\rightarrow\infty}\frac{1}{n}D_H^{\varepsilon}(E_n(\rho_n)\|E_n(\Tilde{\sigma}_n')) 
        \leq \liminf_{n\rightarrow\infty}\frac{1}{n}D_H^{\varepsilon}(\rho_n\|\Tilde{\sigma}_n')
    \end{equation}
    by the data processing inequality. We also have
    \begin{align}
        \liminf_{n\rightarrow\infty}\frac{1}{n}D_H^{\varepsilon}(\rho_n\|\Tilde{\sigma}_n') 
        &\overset{(a)}{\leq} \liminf_{n\rightarrow\infty}\frac{1}{n}D_H^{\varepsilon}(\rho_n\|{\sigma}_n^*) \nonumber\\
        &= \liminf_{n\rightarrow\infty}\frac{1}{n}  D_H^{\varepsilon}(\rho_n\|\mathcal{F}_n^*(\omega_n)) \nonumber\\
        &\overset{(b)}{=}\liminf_{n\rightarrow\infty}\frac{1}{n} \inf_{\Bar{\mathcal{F}}_n\in\mathcal{S}_n} D_H^{\varepsilon}(\rho_n\|\Bar{\mathcal{F}}_n(\omega_n)) \nonumber\\
        &\overset{(c)}{\leq} \liminf_{n\rightarrow\infty}\frac{1}{n} \sup_{\nu_n\in\mathcal{D}(R\mathcal{X}^n)}\inf_{\Bar{\mathcal{F}}_n\in\mathcal{S}_n} D_H^{\varepsilon}(\mathcal{E}^{\otimes n}(\nu_n)\|\Bar{\F}_n(\nu_n)) \nonumber\\
        &= R_{1,\varepsilon}
        \label{S8_liminf_bound1}
    \end{align}
    where we used \cite[Lemma S11]{hayashi_generalized_2024} in (a), since by (\ref{S8Sig_star}), $\tilde{\sigma}_n' \geq \frac{1}{3}\sigma_n^*$. Also, the choice of $\mathcal{F}_n^*$ gives the equality in (b). Note that it is this part of the argument that requires us to fix the sequence of input states $(\omega_n)_n$ at the beginning, and have the channels $\F_{n, \omega_n}'$ depend on this sequence. This is, because we want to obtain $R_{1, ε}$ as the upper bound, which has the supremum over input states outside of the infimum over channels, which means there need not be a channel $\F_n^{*}$ which minimizes this expression for all input states. This could be resolved if we had an exchange lemma analogous to \Autoref{lem:rel_entropy_exchange} for the hypothesis testing relative entropy, however the hypothesis testing relative entropy does not satisfy the assumptions of \zcref{lem:rel_entropy_exchange} and it is not obvious to us whether such an exchange lemma can be proven. \\
    \\
    Similarly to the previous argument, we get that, for any $\varepsilon_1\in(0,1)$,
    \begin{eqnarray}
        && \limsup_{n\rightarrow\infty}\frac{1}{n}D_H^{1-\varepsilon_1}(E_n(\rho_n)\|\Tilde{\sigma}_n') \leq \limsup_{n\rightarrow\infty}\frac{1}{n}D_H^{1-\varepsilon_1}(\rho_n\|\Tilde{\sigma}_n')
    \end{eqnarray}
    by the data processing inequality. We also have
    \begin{align}
        \limsup_{n\rightarrow\infty}\frac{1}{n}D_H^{1-\varepsilon_1}(\rho_n\|\Tilde{\sigma}_n')
        &\overset{(a)}{\leq} \limsup_{n\rightarrow\infty}\frac{1}{n}D_H^{1-\varepsilon_1}(\rho_n\|\sigma_n)  \nonumber\\
        &= \limsup_{n\rightarrow\infty}\frac{1}{n}D_H^{1-\varepsilon_1}(\mathcal{E}^{\otimes n}(\omega_n)\|\Hat{\mathcal{F}}_n'(\omega_n))  \nonumber\\
        &\leq \limsup_{n\rightarrow\infty}\frac{1}{n} \sup_{\nu_n \in\mathcal{D}(R\mathcal{X}^n)}D_H^{1-\varepsilon_1}(\mathcal{E}^{\otimes n}(\nu_n)\|\Hat{\mathcal{F}}_n'(\nu_n))  \nonumber\\
        &\overset{(b)}{\leq} \frac{1}{m}\sup_{\nu_m\in\mathcal{D}(R\mathcal{X}^m)} D(\mathcal{E}^{\otimes m}(\nu_m)\|\mathcal{F}_m(\nu_m))  \nonumber\\
        &\leq R_2+\varepsilon_0 
        \label{S8_limsup_bound1}
    \end{align}
    Again, we have by (\ref{S8Sig_n}) that $\tilde{\sigma}_n' \geq \frac{1}{3}\sigma_n$ and so we use \cite[Lemma S11]{hayashi_generalized_2024} in (a). We use \Autoref{lem:perm_averaging_converse} as well in (b). It is important to note that this application of \zcref{lem:perm_averaging_converse} is the only part of this Lemma, and indeed the only part of the proof of the direct part, that requires the channels to be c-q.\\
    \\
    Now, for any $\varepsilon_2>0$, define the following projections:
    \begin{eqnarray}
        P_{n,1}&:=&\{E_n(\rho_n)\geq e^{n(R_{1,\varepsilon}+\varepsilon_2)}\Tilde{\sigma}_n'\} \\
        P_{n,2}&:=&\{E_n(\rho_n)\geq e^{n(R_2+\varepsilon_0 + \varepsilon_2)}\Tilde{\sigma}_n'\}
    \end{eqnarray}
    where $\{A\geq B\}$ is an orthogonal projection onto the space of non-negative eigenvalues of the matrix $A-B$. \\
    \\
    Using \cite[Lemma S10]{hayashi_generalized_2024}, and equations (\ref{S8_liminf_bound1}), (\ref{S8_limsup_bound1}), we get the following limits:
    \begin{eqnarray}
        \liminf_{n\rightarrow\infty}\tr[P_{n,1}E_n(\rho_n)] &\leq& 1-\varepsilon \label{S8b_liminf_bound2}\\
        \limsup_{n\rightarrow\infty}\tr[P_{n,2}E_n(\rho_n)] &\leq& \varepsilon_1  \label{S8b_limsup_bound2}
    \end{eqnarray}
    Since $R_{1,\epsilon}+\varepsilon_2\leq R_2 +\varepsilon_0 +\varepsilon_2$, we have that $P_{n,1} \geq P_{n,2}$. So, we can define the projections
    \begin{eqnarray}
        E_{n,1} &:=& \mathds{1}-P_{n,1} \\
        E_{n,2} &:=& P_{n,1}-P_{n,2}\\
        E_{n,3} &:=& P_{n,2}
    \end{eqnarray}
    with $\sum_j E_{n,j}=\mathds{1}$.\\
    \\
    We have the following equations:
    \begin{eqnarray}
        E_{n,1}&=&\mathds{1}-P_{n,1}=\{E_n(\rho_n)< e^{n(R_{1,\varepsilon}+\varepsilon_2)}\Tilde{\sigma}_n'\} \\
        E_{n,2}&\leq& \mathds{1}-P_{n,2}= \{E_n(\rho_n)< e^{n(R_2+\varepsilon_0 + \varepsilon_2)}\Tilde{\sigma}_n'\}
    \end{eqnarray}
    All of $P_{n,1}$, $P_{n,2}$, $E_n(\rho_n)$ and $\Tilde{\sigma}_n'$ commute, so we can use $E_{n,j}$ to project onto the subspaces where inequalities involving $\log(E_n(\rho_n))-\log\Tilde{\sigma}_n'$ hold. Thus, we obtain
    \begin{eqnarray}
        \frac{1}{n}E_{n,1}(\log(E_n(\rho_n))-\log\Tilde{\sigma}_n') &\leq& E_{n,1}(R_{1,\varepsilon}+\varepsilon_2) \label{S8_proj1} \\
        \frac{1}{n}E_{n,2}(\log(E_n(\rho_n))-\log\Tilde{\sigma}_n') &\leq& E_{n,1}(R_2 +\varepsilon_0+\varepsilon_2) \label{S8_proj2}
    \end{eqnarray}
    For $E_{n,3}$ the idea is slightly different. We first note that, by assumption (4) in \Autoref{def:axioms}, $D_{\max}(\mathcal{E}\|\mathcal{F}_*)<\infty$. So, by using (\ref{S8Sig_full}), we get that 
    \begin{eqnarray}
        \Tilde{\sigma}_n'&\geq& \frac{1}{3}\sigma_{n,full} \nonumber\\
        &=& \frac{1}{3}\mathcal{F}_*^{\otimes n}(\omega_n)  \nonumber\\
        &\geq& \frac{1}{3}e^{-D_{\max}(\mathcal{E}^{\otimes n}\|\mathcal{F}_*^{\otimes n})}\mathcal{E}^{\otimes n}(\omega_n)  \nonumber\\
        &=&\frac{e^{-nD_{\max}(\mathcal{E}\|\mathcal{F}_*)}}{3}\rho_n
    \end{eqnarray}
    by the additivity of $D_{\max}$ under tensor products. \\
    \\
    Let $\Pi$ be an orthogonal projection, so $\Pi=\Pi^\dag$. This means that if $X\geq Y$ for matrices $X,Y$, then $\Pi X\Pi \geq \Pi Y \Pi$. This means that any pinching map is operator monotone. Hence,
    \begin{eqnarray}
         E_n(\Tilde{\sigma}_n' ) &\geq &\frac{e^{-nD_{\max}(\mathcal{E}\|\mathcal{F}_*)}}{3}E_n(\rho_n) \\
         \log(E_n(\rho_n)) - \log(\Tilde{\sigma}_n') &\leq &(nD_{\max}(\mathcal{E}\|\mathcal{F}_*)+ \log 3) \mathds{1} \leq nc_n'\mathds{1}
    \end{eqnarray}
    where we define 
    \begin{eqnarray}
        c_n' :=\max\left\{\frac{\log 3}{n}+D_{\max}(\mathcal{E}\|\mathcal{F}_*), \:R_2 + \varepsilon_0 + \varepsilon_2\right\}
    \end{eqnarray}
    This means that $c_n'=O(1)$ as $n\rightarrow\infty$. So, we get
    \begin{eqnarray}
        \frac{1}{n}E_{n,3}(\log(E_n(\rho_n))-\log\Tilde{\sigma}_n') &\leq& c_n'E_{n,3}
        \label{S8_proj3}
    \end{eqnarray}
    Hence, by combining (\ref{S8_proj1}), (\ref{S8_proj2}), and (\ref{S8_proj3}), we get that
    \begin{eqnarray}
        \frac{1}{n}D(E_n(\rho_n) \|\Tilde{\sigma}_n') &=& \sum_{j=1}^3 {\frac{1}{n}} \tr[(E_n(\rho_n))E_{n,j}(\log(E_n(\rho_n))-\log\Tilde{\sigma}_n')] \nonumber\\
       &\leq &(R_{1,\varepsilon}+\varepsilon_2)\tr[(E_n(\rho_n))E_{n,1}] + (R_2+\varepsilon_0+ \varepsilon_2)\tr[E_n(\rho_n)E_{n,2}] \nonumber\\
        & & \quad \quad + c_n'\tr[E_n(\rho_n)E_{n,3}] \nonumber\\
       &=& (R_{1,\varepsilon}+\varepsilon_2)+(R_2+\varepsilon_0-R_{1,\varepsilon})\tr[E_n(\rho_n) P_{n,1}]\nonumber\\
        & & \quad \quad + (c_n'-R_2-\varepsilon_0-\varepsilon_2)\tr[E_n(\rho_n)P_{n,2}]
    \end{eqnarray}
    Now, for any real sequences $(a_n)_n$ and $(b_n)_n$, we have $\liminf_{n\rightarrow\infty}(a_n+b_n)\leq\liminf_{n\rightarrow\infty}(a_n)+\limsup_{n\rightarrow\infty}(b_n)$. Hence, by taking $\liminf_{n\rightarrow\infty}$ of both sides, we get 
    \begin{eqnarray}
     && \liminf_{n\rightarrow\infty} \frac{1}{n}D(E_n(\rho_n) \|\Tilde{\sigma}_n')  \nonumber\\
        &&\quad \quad \leq (R_{1,\varepsilon}+\varepsilon_2)+(R_2+\varepsilon_0-R_{1,\varepsilon})\liminf_{n\rightarrow\infty}\tr[E_n(\rho_n)P_{n,1}]  \nonumber\\
        && \quad \quad  \quad \quad  \quad \quad + \limsup_{n\rightarrow\infty}\left((c_n'-R_2-\varepsilon_0-\varepsilon_2) \tr[E_n(\rho_n)P_{n,2}]\right)  \nonumber\\
        && \quad \quad \leq (R_{1,\varepsilon}+\varepsilon_2)+(R_2+\varepsilon_0-R_{1,\varepsilon})(1-\varepsilon)+\varepsilon_1(c'-R_2-\varepsilon_0-\varepsilon_2)
    \end{eqnarray}
    where $c':=\limsup_{n\rightarrow\infty}c_n'$. \\
    \\
    Here, $\varepsilon_1,\varepsilon_2>0$ are arbitrary so we can take the limits $\varepsilon_1,\varepsilon_2\rightarrow 0$ to give
    \begin{eqnarray}
        \liminf_{n\rightarrow\infty} \frac{1}{n}D(E_n(\rho_n) \|\Tilde{\sigma}_n') &\leq&R_{1,\varepsilon}+(R_2+\varepsilon_0-R_{1,\varepsilon})(1-\varepsilon) \\
        &=&R_{1,\varepsilon}+(R_2-R_{1,\varepsilon})(1-\Tilde{\varepsilon}) \label{S8_final2} 
    \end{eqnarray}
    where we recall the definition of $\varepsilon_0$ in (\ref{eq:GQSL_direct_e0_def}). We now combine (\ref{S8_final2}) and (\ref{S8Equality2}) to get
    \begin{eqnarray}
        \liminf_{n\rightarrow\infty}\frac{1}{n}D(\mathcal{E}^{\otimes n}(\omega_n)\|\mathcal{F}'_{n,\omega_n}(\omega_n))  \leq R_{1,\varepsilon}+(R_2-R_{1,\varepsilon})(1-\Tilde{\varepsilon})
    \end{eqnarray}
    We denote the set of all channels $\Bar{\mathcal{F}}_n$ that are permutation covariant in $\mathcal{S}_n$ by $\Bar{\mathcal{S}}_n$. Then, since $\F_{n,\omega_n}'$ is permutation covariant with respect to permuting the $n$ copies of $\X$ and $A$, we have that
    \begin{eqnarray}
        \liminf_{n\rightarrow\infty}\inf_{\Bar{\mathcal{F}}_n\in\Bar{\mathcal{S}}_n} \frac{1}{n}D(\mathcal{E}^{\otimes n}(\omega_n)\|\Bar{\mathcal{F}}_{n}(\omega_n))  &\leq& \liminf_{n\rightarrow\infty}\frac{1}{n}D(\mathcal{E}^{\otimes n}(\omega_n)\|\mathcal{F}'_{n,\omega_n}(\omega_n))\\
        &\leq& R_{1,\varepsilon}+(R_2-R_{1,\varepsilon})(1-\Tilde{\varepsilon})
    \end{eqnarray}
    Now, $(\omega_n)_n$ was an arbitrary sequence of permutation invariant states, so we can take a supremum over all such sequences $(\omega_n)_n$ and use \Autoref{lem:sup_over_sequences}, to get 
    \begin{eqnarray}
        \liminf_{n\rightarrow\infty}\sup_{\substack{\omega_n\in\mathcal{D}(K^n\mathcal{X}^n)\\\omega_n\;perm.\;invariant}} \inf_{\Bar{\mathcal{F}}_n\in \Bar{\mathcal{S}}_n} \frac{1}{n}D(\mathcal{E}^{\otimes n}(\omega_n)\|\Bar{\mathcal{F}}_{n}(\omega_n))  \leq R_{1,\varepsilon}+(R_2-R_{1,\varepsilon})(1-\Tilde{\varepsilon})
    \end{eqnarray}
    By \Autoref{lem:perm_inputs}, we can take the supremum to be over all states $\nu_n\in\mathcal{D}(R\mathcal{X}^n)$ instead, which gives the following:
    \begin{eqnarray}
        \liminf_{n\rightarrow\infty}\sup_{\nu_n\in\mathcal{D}(R\mathcal{X}^n)} \inf_{\Bar{\mathcal{F}}_n\in \Bar{\mathcal{S}}_n} \frac{1}{n}D(\mathcal{E}^{\otimes n}(\nu_n)\|\Bar{\mathcal{F}}_{n}(\nu_n))  \leq R_{1,\varepsilon}+(R_2-R_{1,\varepsilon})(1-\Tilde{\varepsilon})
    \end{eqnarray}
    Now, we can use \Autoref{lem:rel_entropy_exchange} to exchange the supremum and infimum to get
    \begin{eqnarray}
        \liminf_{n\rightarrow\infty} \inf_{\Bar{\mathcal{F}}_n\in \Bar{\mathcal{S}}_n} \sup_{\nu_n\in\mathcal{D}(R\mathcal{X}^n)}\frac{1}{n}D(\mathcal{E}^{\otimes n}(\nu_n)\|\Bar{\mathcal{F}}_{n}(\nu_n))  \leq R_{1,\varepsilon}+(R_2-R_{1,\varepsilon})(1-\Tilde{\varepsilon})
    \end{eqnarray}
    This means that there exists a sequence $(\mathcal{F}_n')_n$ that satisfies the required equation.
\end{proof}
\medskip
Finally, we prove the direct part by using \Autoref{lem:main_iteration}, again following a similar idea as in \cite[Proposition S7]{hayashi_generalized_2024}. Specifically, we find sequences of channels $(\mathcal{F}_{n,k})_{n,k}$ where 
\begin{eqnarray}
     \liminf_{n\rightarrow\infty}\frac{1}{n}\sup_{\nu_n\in\mathcal{D}(R\mathcal{X}^n)} D(\mathcal{E}^{\otimes n}(\nu_n)\|\mathcal{F}_{n,k}(\nu_n)) -R_{1,\varepsilon} \leq (1-\tilde{\varepsilon})^k(R_2-R_{1,\varepsilon})
\end{eqnarray}
by iteratively applying \Autoref{lem:main_iteration}. Then, by using parts of each of these sequences, we create a new sequence of channels such that 
\begin{eqnarray}
    \liminf_{n\rightarrow\infty}\frac{1}{n}\sup_{\nu_n\in\mathcal{D}(R\mathcal{X}^n)} D(\mathcal{E}^{\otimes n}(\nu_n)\|\mathcal{F}_{n,*}(\nu_n)) \leq R_{1,\varepsilon}
\end{eqnarray}
which proves the claim. 
\begin{proposition}[Direct Part of the Generalized Quantum Stein's Lemma]
\label{lem:GQSL_direct}
    Let $\E:\mathcal{X}\rightarrow A$ be a c-q channel, and $(\mathcal{S}_n\subset\CPTP(\mathcal{X}^n\rightarrow A^n))_n$ be a sequence of sets of c-q channels satisfying assumptions (1)-(4) in \Autoref{def:axioms}. Then, the direct part
    \begin{equation}
    \liminf_{n \to \infty} {1 \over n} D_H^{ε}(\E\n\|\S_n) \geq \lim_{n \to \infty} {1 \over n} D(\E\|\S_n)
    \end{equation}%
    holds.
\end{proposition}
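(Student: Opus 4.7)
The strategy is to iterate \zcref{lem:main_iteration} and derive a contradiction, mirroring the argument of \cite[Proposition~S7]{hayashi_generalized_2024} but adapted to the channel setting. Set
\begin{equation*}
    R_2^{*} \coloneqq \lim_{n\to\infty}\frac{1}{n}D(\E\n\|\S_n)=\lim_{n\to\infty}\frac{1}{n}\sup_{\nu_n\in\DM[R\X^n]}\inf_{\F_n\in\S_n}D(\E\n(\nu_n)\|\F_n(\nu_n)),
\end{equation*}
which exists by \zcref{lem:lim_exists}, and let $R_{1,\varepsilon}$ be defined as in \zcref{lem:main_iteration}. We want to show $R_{1,\varepsilon}\geq R_2^{*}$, so suppose for contradiction that $R_2^{*}>R_{1,\varepsilon}$ and fix any $\tilde{\varepsilon}\in(0,\varepsilon)$.

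The key observation is that, for \emph{any} sequence $(\F_n\in\S_n)_n$, one trivially has
\begin{equation*}
    \frac{1}{n}D(\E\n\|\S_n)\;=\;\inf_{\bar{\F}_n\in\S_n}\sup_{\nu_n}\frac{1}{n}D(\E\n(\nu_n)\|\bar{\F}_n(\nu_n))\;\leq\;\sup_{\nu_n}\frac{1}{n}D(\E\n(\nu_n)\|\F_n(\nu_n)),
\end{equation*}
where the equality uses \zcref{lem:rel_entropy_exchange}. Taking $\liminf_{n\to\infty}$ on both sides gives
\begin{equation*}
    R_2^{*}\;\leq\;\liminf_{n\to\infty}\frac{1}{n}\sup_{\nu_n}D(\E\n(\nu_n)\|\F_n(\nu_n))
\end{equation*}
for \emph{every} admissible sequence $(\F_n)_n$. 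Therefore, if we can produce a sequence whose associated $\liminf$ is at most $R_{1,\varepsilon}+\eta$ for arbitrarily small $\eta>0$, we obtain $R_2^{*}\leq R_{1,\varepsilon}$, contradicting our assumption.

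To build such a sequence, I would iterate \zcref{lem:main_iteration}. Starting from any initial sequence $(\F_n^{(0)}\in\S_n)_n$, e.g.\ $\F_n^{(0)}=\F_*^{\otimes n}$ (which lies in $\S_n$ by assumptions (3)--(4) in \zcref{def:axioms}), let
\begin{equation*}
    R_2^{(k)}\;\coloneqq\;\liminf_{n\to\infty}\frac{1}{n}\sup_{\nu_n}D(\E\n(\nu_n)\|\F_n^{(k)}(\nu_n)).
\end{equation*}
If at some step $R_2^{(k)}\leq R_{1,\varepsilon}$ we are already done; otherwise \zcref{lem:main_iteration} applied to $(\F_n^{(k)})_n$ produces a new sequence $(\F_n^{(k+1)}\in\S_n)_n$ with
\begin{equation*}
    R_2^{(k+1)}-R_{1,\varepsilon}\;\leq\;(1-\tilde{\varepsilon})\bigl(R_2^{(k)}-R_{1,\varepsilon}\bigr).
\end{equation*}

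Unfolding the recursion yields $R_2^{(k)}-R_{1,\varepsilon}\leq (1-\tilde{\varepsilon})^{k}(R_2^{(0)}-R_{1,\varepsilon})$. Since $R_2^{(0)}$ is finite (bounded by $D_{\max}(\E\|\F_*)<\infty$, using assumption (4) of \zcref{def:axioms} and tensorization), the right-hand side tends to zero as $k\to\infty$. Hence for any $\eta>0$ we may pick $k$ large enough that $R_2^{(k)}\leq R_{1,\varepsilon}+\eta$. Combined with the observation in the previous paragraph, this gives $R_2^{*}\leq R_{1,\varepsilon}+\eta$ for every $\eta>0$, so $R_2^{*}\leq R_{1,\varepsilon}$. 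This contradicts the standing assumption $R_2^{*}>R_{1,\varepsilon}$ and completes the proof. The main conceptual subtlety---handling the supremum over input states so that the same near-minimizing channel works for all $\nu_n$---is already absorbed into \zcref{lem:main_iteration} via the exchange lemma at its final step, so the present proposition reduces to a clean iterative argument.
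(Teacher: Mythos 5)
Your proposal is correct and follows essentially the same route as the paper: iterate \zcref{lem:main_iteration} to drive the gap $R_2^{(k)}-R_{1,\varepsilon}$ to zero geometrically, and use the trivial bound $\inf_{\bar{\F}_n}\sup_{\nu_n}(\cdot)\leq\sup_{\nu_n}D(\E\n(\nu_n)\|\F_n^{(k)}(\nu_n))$ (together with \zcref{lem:rel_entropy_exchange} and \zcref{lem:lim_exists}) to conclude $\lim_n\frac{1}{n}D(\E\n\|\S_n)\leq R_{1,\varepsilon}$. The only cosmetic differences are the contradiction framing and the explicit choice of the finite-valued starting sequence $\F_*^{\otimes n}$, which the paper handles instead by splitting into the case where no sequence with $R_{2,0}>R_{1,\varepsilon}$ exists.
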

\begin{proof}
    Fix some $\tilde{\varepsilon}\in(0,\varepsilon)$. Assume there exists a sequence of channels $(\mathcal{F}_{n,0})_n$, with $\mathcal{F}_{n,0}\in\mathcal{S}_n$, such that 
    \begin{eqnarray}
        R_{2,0}:= \liminf_{n\rightarrow\infty}\frac{1}{n}\sup_{\nu_n\in\mathcal{D}(R\mathcal{X}^n)}D(\mathcal{E}^{\otimes n}(\nu_n)\|\mathcal{F}_{n,0}(\nu_n))>R_{1,\varepsilon} = \liminf_{n \to \infty} {1 \over n} D_H^{ε}(\E\n\|\S_n)
    \end{eqnarray}
    We have
    \begin{eqnarray}
        \liminf_{n\rightarrow\infty}\frac{1}{n}\sup_{\nu_n\in\mathcal{D}(R\mathcal{X}^n)}D(\mathcal{E}^{\otimes n}(\nu_n)\|\mathcal{F}_{n,0}(\nu_n))
        &\geq&\liminf_{n\rightarrow\infty}\frac{1}{n}\inf_{\mathcal{F}_n \in \mathcal{S}_n} \sup_{\nu_n\in\mathcal{D}(R\mathcal{X}^n)}D(\mathcal{E}^{\otimes n}(\nu_n)\|\mathcal{F}_{n}(\nu_n))\nonumber\\
        &=&\liminf_{n\rightarrow\infty}\frac{1}{n}\sup_{\nu_n\in\mathcal{D}(R\mathcal{X}^n)}\inf_{\mathcal{F}_n \in \mathcal{S}_n}D(\mathcal{E}^{\otimes n}(\nu_n)\|\mathcal{F}_{n}(\nu_n))\nonumber\\
    \end{eqnarray}
    So, if such a sequence $(\mathcal{F}_{n,0})_n$ does not exist, then we have proved the direct part. Otherwise, we use \Autoref{lem:main_iteration} to get another sequence of channels $(\mathcal{F}_{n,1})_n$ such that
    \begin{eqnarray}
        \liminf_{n\rightarrow\infty}\frac{1}{n}\sup_{\nu_n\in\mathcal{D}(R\mathcal{X}^n)}D(\mathcal{E}^{\otimes n}(\nu_n)\|\mathcal{F}_{n,1}(\nu_n)) -R_{1,\varepsilon} \leq (1-\Tilde{\varepsilon})(R_{2,0}-R_{1,\varepsilon})
    \end{eqnarray}
    Now, repeatedly apply this procedure, so from the sequence $(\mathcal{F}_{n,k})_n$ define
    \begin{eqnarray}
        R_{2,k}:= \liminf_{n\rightarrow\infty}\frac{1}{n}\sup_{\nu_n\in\mathcal{D}(R\mathcal{X}^n)}D(\mathcal{E}^{\otimes n}(\nu_n)\|\mathcal{F}_{n,k}(\nu_n))
    \end{eqnarray}
    Then, use \Autoref{lem:main_iteration} to find a sequence $(\mathcal{F}_{n,k+1})_n$ such that 
    \begin{eqnarray}
        R_{2,k+1}-R_{1,\varepsilon} \leq (1-\Tilde{\varepsilon})(R_{2,k}-R_{1.\varepsilon}) \leq (1-\Tilde{\varepsilon})^{k+1}(R_{2,0}-R_{1.\varepsilon})
    \end{eqnarray}
    In particular we find that
    \begin{equation}
    \liminf_{n\rightarrow\infty}\frac{1}{n}\sup_{\nu_n\in\mathcal{D}(R\mathcal{X}^n)}\inf_{\F_{n} \in \S_n} D(\mathcal{E}^{\otimes n}(\nu_n)\|\mathcal{F}_{n}(\nu_n)) -R_{1,\varepsilon} \leq (1 - \tilde{ε})^k (R_{2, 0} - R_{1, ε})
    \end{equation}
    for all $k \in \naturals$, and taking the limit $k \to \infty$ on the right-hand side gives the desired result.
    \begin{EXCLUDED}%
    So, for any sequence $(\delta_k)_k$ with $\delta_k>0$ and $\delta_k\rightarrow 0$ as $k\rightarrow\infty$, there exists a subsequence $(n_k)_k$ of the natural numbers such that
    \begin{eqnarray}
        \frac{1}{n_k}\sup_{\nu_{n_k} \in\mathcal{D}(R\mathcal{X}^{n_k})}D(\mathcal{E}^{\otimes n_k}(\nu_{n_k})\|\mathcal{F}_{n_k,k}(\nu_{n_k})) -R_{1,\varepsilon} &\leq& (1-\Tilde{\varepsilon})^k(R_{2,0}-R_{1,\varepsilon})+\delta_k \nonumber\\
        &\rightarrow& 0 \; \text{as} \;k \rightarrow\infty
    \end{eqnarray}
    We use this to define a new sequence of channels $(\mathcal{F}_{n,*})_n$, with $\mathcal{F}_{n_k,*}=\mathcal{F}_{n_k,k}$ for all $k$. This means that 
    \begin{eqnarray}
        \liminf_{n\rightarrow\infty}\frac{1}{n}\sup_{\nu_n\in\mathcal{D}(R\mathcal{X}^n)}D(\mathcal{E}^{\otimes n}(\nu_n)\|\mathcal{F}_{n,*}(\nu_n)) \leq R_{1,\varepsilon}
    \end{eqnarray}
    which gives the desired result.
    \end{EXCLUDED}%
\end{proof}
%
%
%\pagebreak
\section{Quantum Resource Theories for C-Q Channels}\label{sec:QRT}

As explained in \zcref{sec:intro_QRT}, in a quantum resource theory (QRT) we consider certain resources to be free resources and all other resources to be valuable resources. The question of interest is to study the optimal conversion rate between two valuable resources under a restricted set of operations (called free operations). If a QRT is reversible, this conversion rate gives rise to a unique resource quantifier for the theory. The Generalized Quantum Stein's Lemma can be used to prove the reversibility of QRTs under a certain class of free operations called asymptotically non-resource generating (ARNG) operations. In~\cite{hayashi_generalized_2024}, the authors also took a first step at extending the the reversibility results for QRTs of states to QRTs involving dynamical resources, in particular, they considered classical-quantum (c-q) channels. In such a dynamical QRT, the objects of interests are channels, and the free operations are a set of superchannels \cite{gour_comparison_2019}.
The strategy of \cite{hayashi_generalized_2024} is to apply the GQSL for states to this problem, by considering the Choi states of the c-q channels. The distance metric between channels they consider is then the trace distance between their Choi states. For c-q channels this can easily be seen to be equivalent to the trace distance of the output states of the two channels, \emph{on average over all input states}. In particular, they consider a transformation between channels to be possible if the trace distance of the output states of the transformed channel and the target channel \emph{on average over all input states} goes to zero asymptotically. Since the size of the input system grows exponentially as $n$ increases, this means the channels might still give very different output states (potentially even orthogonal output states) on a polynomially growing fraction of the input space. In particular the transformed channel and the target channel will usually be perfectly distinguishible by just picking a suitable input state. 

Additionally, the authors of \cite{hayashi_generalized_2024} face the issue that Choi states do not behave well under the action of superchannels: When considering a superchannel on the level of Choi states (i.e., the mapping of a channel's Choi matrix to the Choi matrix of the application of a super-channel to this channel) it need not be a CPTP operation. Hence, in \cite{hayashi_generalized_2024} the authors have to further restrict the set of allowed operations to enforce that they behave sufficiently well when considered as maps between Choi states. Concretely, they have to impose the following constraint (which they refer to as \emph{asymptotic continuity}): If the Choi states of two sequences of c-q channels converge in trace distance in the asymptotic limit, then this property holds even after the channels are acted on by a sequence of allowed operations.
As explained in~\cite{hayashi_generalized_2024}, they chose this path of trying to reduce the problem to the state case, because they do not have a GQSL for c-q channels available.  

By using our GQSL for c-q channels, we are able to resolve the issues faced by the authors of \cite{hayashi_generalized_2024} and prove reversibility for QRTs involving c-q channels in a much more natural setting. In particular we compare channels in diamond norm (that is the worst-case trace distance of the output state over all input states), and so if one channel can be (asymptotically) transformed into another, the transformed and target channels' outputs will be (asymptotically) close for all inputs. Additionally, in our statement the allowed operations (i.e., superchannels) are just all resource non-generating (ARNG) operations, and not an implicitly specified subset that satisfies additional continuity properties. This comes at the small additional cost of having to impose the additional assumption of closure under permutations for our sets of free c-q channels, as we required this to prove our GQSL (\zcref{thm:GQSL}).

\subsection{Setup and Definitions}

Resource theories are generally specified by sets of free objects and free operations. We will consider QRTs where the sets of free objects are given by sets of c-q channels, i.e., sets $(\mathcal{S}_n \subset\CPTP(\mathcal{X}^n\rightarrow A^n))_n$. We require this sequence of sets satisfy the axioms of \zcref{def:axioms}, i.e.\ the same axioms we already specified for the Generalized Quantum Stein's Lemma.

Every c-q channel $\E_n \in \CQ$ that is not an element of $\S_n$ is not free, and hence called valuable. We assign to these channels a measure of how resourceful they are, called the \textit{relative entropy of resource}. We define it as follows:
\begin{eqnarray}
    R(\E_n):=D(\E_n\|\S_n) = \inf_{\F_n\in \S_n}\sup_{\nu\in\DM[R\X^n]}D(\E_n(\nu) \| \F_n(\nu))
    \label{eq:rel_entropy_resource_def}
\end{eqnarray}
where the $\inf$ and $\sup$ on the right-hand side can be exchanged due to \zcref{lem:rel_entropy_exchange}.
For a channel $\E \in \CQ$, we also define its regularized form
\begin{eqnarray}
    \R(\E):=\lim_{n\rightarrow\infty}\frac{1}{n}R(\E^{\otimes n})\,.
    \label{eq:reg_rel_entropy_resource_def}
\end{eqnarray} 
We also define the \textit{log robustness} of $\E_n\in\CPTP(\X^n\to A^n)$ as\footnote{In \cite{hayashi_generalized_2024}, they write $R_G(\E)$ for the robustness, which we can relate to the log robustness by $D_{\max}(\E \| \S_n)=\log(1+R_G(\E))$.}
\begin{eqnarray}
    D_{\max}(\E_n\|\S_n)= \sup_{\nu\in\DM[R\X^n]} \inf_{\F_n\in \S_n} D_{\max}(\E_n(\nu)\|\F_n(\nu)) = \inf_{\F_n \in \S_n} D_{\max}(\E_n\|\F_n)
\end{eqnarray}
where the last equality follows from \zcref{lem:channel_divergence_maximally_entangled_state} and the fact that $\sup \inf \leq \inf \sup$. 

\bigskip
We now define a set of operations which are allowed in order to transform resources into each other. These operations are usually called \emph{free operations}.
We take the set of free operations $\OR$ as the set of sequences of superchannels that are asymptotically resource non-generating (ARNG)\footnote{We can also define \emph{resource non-generating} operations $(\Theta_n)_n$ where for any sequence $(\F_n)_n$ of free channels we have that $D_{\max}(\Theta(\F_n)\| \S_n)=0$ for \emph{every} $n$.} in log robustness. This means that, for a sequence of superchannels $(\Theta_n)_n\in\OR$, and for any sequence $(\F_n\in\mathcal{S}_n)_n$ of free channels, we have that 
\begin{eqnarray}
    \lim_{n\rightarrow\infty}D_{\max}(\Theta_n(\F_n)\| \mathcal{S}_n)=0
\end{eqnarray}
\\
The \textit{optimal asymptotic conversion rate} between two valuable resources $\E_1$ and $\E_2$ under the set of operations $\OR$ is then given by
\begin{eqnarray}
    r(\E_1\rightarrow \E_2):=\sup\left\{r>0 \,\bigg|\, \exists(\Theta_n)_n\in\OR\, s.t. \, \limsup_{n\rightarrow\infty} \norm{\Theta_n(\E_1\n)-\E_2^{\otimes \lceil rn \rceil}}_\Diamond=0 \right\}
\end{eqnarray}

\subsection{Statement of the Result}

We prove that if the sets of free c-q channels satisfy the assumptions of \zcref{def:axioms} and the set of free operations (i.e., superchannels) are the asymptotically resource non-generating (ARNG) operations (as specified above), then the QRT is reversible. Hence also, the regularized relative entropy of resource is the unique resource quantifier.

\begin{theorem}[QRT for c-q channels]
    \label{thm:QRT}
    Let $\E_1$, $\E_2 \in \CQ$ be two c-q channels, with $\R(\E_i)>0$ for $i=1,2$. Then
    \begin{eqnarray}
        r(\E_1\rightarrow\E_2) = \frac{\R(\E_1)}{\R(\E_2)}
    \end{eqnarray}
\end{theorem}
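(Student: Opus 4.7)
The proof splits naturally into a converse inequality $r(\E_1\to\E_2)\leq \R(\E_1)/\R(\E_2)$ and an achievability inequality $r(\E_1\to\E_2)\geq \R(\E_1)/\R(\E_2)$.

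For the converse, I would fix any achievable rate $r$, witnessed by $(\Theta_n)_n\in\OR$ with $\norm{\Theta_n(\E_1\n)-\E_2^{\otimes\lceil rn\rceil}}_\Diamond\to 0$, together with a near-minimizing sequence $(\F_n\in\S_n)_n$ for the channel divergence $D(\E_1\n\|\S_n)$. The ARNG assumption gives $D_{\max}(\Theta_n(\F_n)\|\S_{\lceil rn\rceil})\to 0$. I would then chain three standard bounds for the hypothesis-testing channel divergence: (i) data-processing under the superchannel $\Theta_n$, giving $D_H^\varepsilon(\Theta_n(\E_1\n)\|\Theta_n(\F_n))\leq D_H^\varepsilon(\E_1\n\|\F_n)$; (ii) continuity of $D_H^\varepsilon$ in the first argument under diamond-norm perturbations, which converts the left-hand side into $D_H^{\varepsilon+o(1)}(\E_2^{\otimes\lceil rn\rceil}\|\Theta_n(\F_n))$; and (iii) the $D_{\max}$-continuity $D_H^\varepsilon(\rho\|\sigma')\leq D_H^\varepsilon(\rho\|\sigma)+D_{\max}(\sigma\|\sigma')$ lifted to the channel setting, allowing $\Theta_n(\F_n)$ to be replaced by a nearby element of $\S_{\lceil rn\rceil}$ at an additive cost of $o(n)$. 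Dividing by $n$ and letting $n\to\infty$, the GQSL for c-q channels (\zcref{thm:GQSL}) applied to both sides of the resulting bound yields $r\cdot\R(\E_2)\leq \R(\E_1)$.

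For the achievability direction I would adopt the standard distillation-plus-dilution paradigm, reducing the problem to conversions through a fixed reference ``unit'' c-q channel $\E_*$ with $\R(\E_*)=1$. A distillation ARNG superchannel $\Theta^{\mathrm{dist}}_n$ converts $\E_1\n$ into approximately $n\R(\E_1)$ copies of $\E_*$; its construction uses the optimal measurement achieving the direct part of our GQSL (\zcref{lem:GQSL_direct}) as a branching mechanism, outputting $\E_*^{\otimes\lfloor n\R(\E_1)\rfloor}$ on the accepting branch and a free channel on the rejecting branch. A symmetric dilution superchannel $\Theta^{\mathrm{dil}}_m$, built from the GQSL applied to $\E_2$, produces $\E_2^{\otimes m}$ from $\lceil m\R(\E_2)\rceil$ copies of $\E_*$. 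Composing these at the matching rate (with $m=\lceil rn\rceil$) gives an ARNG superchannel mapping $\E_1\n$ to $\E_2^{\otimes\lceil rn\rceil}$ with vanishing diamond-norm error for any $r<\R(\E_1)/\R(\E_2)$.

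The main obstacle will be the achievability direction, specifically arranging the distillation and dilution superchannels so that their outputs are close to $\E_2^{\otimes\lceil rn\rceil}$ in \emph{diamond} (rather than merely average-case) norm. The branching superchannels act between channel spaces and must behave correctly for every input state; ensuring this requires permutation symmetrization (as in \zcref{lem:perm_inputs} and \zcref{lem:perm_cov_channels}) together with \zcref{lem:diamond_input_reduction} to reduce diamond-norm control to classical tensor-product inputs. The ARNG property of the composition is then forced by the exponential type-II error suppression from the strong converse part of the GQSL (\zcref{lem:GQSL_strong_converse}), which ensures $D_{\max}(\Theta_n(\F_n)\|\S_{\lceil rn\rceil})\to 0$ uniformly over free input sequences $(\F_n)_n$.
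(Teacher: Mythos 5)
Your converse argument is a genuinely different (and essentially workable) route from the paper's: instead of chaining $D_H^{\varepsilon}$ through data-processing, diamond-norm continuity in the first argument, and $D_{\max}$-continuity in the second, the paper proves monotonicity of the \emph{relative entropy of resource} under ARNG operations (\zcref{lem:resource_monotone}, via the log-robustness characterization of $\R$) together with stability of $\R$ under diamond-norm perturbations (\zcref{lem:resource_close_sequences}). One caveat for your version: to make the left end of your chain converge to $\R(\E_1)$ you cannot take an arbitrary near-minimizer of $D(\E_1\n\|\S_n)$ at blocklength $n$, since a minimizer of the Umegaki divergence does not directly control $D_H^{\varepsilon}$ at the same blocklength; you need the blocked sequence $\hat{\F}_n=\tilde{\F}_m^{\otimes l}\otimes\F_*^{\otimes r}$ as in \zcref{lem:GQSL_converse_bound}. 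With that fix your converse goes through.

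The achievability direction, however, has a genuine gap. First, the distillation-plus-dilution paradigm through a ``unit'' channel $\E_*$ with $\R(\E_*)=1$ is not available in this generality: nothing in \zcref{def:axioms} guarantees the existence of such a reference resource, and even if one were fixed by hand, proving the distillation step $\E_1\n\to\E_*^{\otimes\lfloor n\R(\E_1)\rfloor}$ is exactly the same problem as the direct conversion you are trying to establish, so nothing is gained by the reduction. Second, and more fundamentally, your dilution step is not supplied by the GQSL alone. What makes the construction work is the asymptotic equipartition statement relating $\R(\E_2)$ to the diamond-norm-smoothed log robustness (\zcref{cor:log_robustness}, proved via \zcref{lem:log_robustness2} using pinching, the projections $P_k$, and the gentle measurement lemma): it produces a channel $\E_2^{(rn)}$ that is diamond-norm close to $\E_2^{\otimes\lceil rn\rceil}$ \emph{and} satisfies $D_{\max}(\E_2^{(rn)}\|\S_{\lceil rn\rceil})\approx \lceil rn\rceil\R(\E_2)$, hence can be written as one branch of a convex decomposition of a free channel (equation \eqref{eq:QRT_direct_RG_extremizer}). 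Without this, the branching superchannel
\begin{equation*}
\Theta_n(\mathcal{N})=\Tr[\Lambda_n\mathcal{N}(\nu_n)]\,\E_2^{(rn)}+\Tr[(\mathds{1}-\Lambda_n)\mathcal{N}(\nu_n)]\,\tilde{\E}_2^{(rn)}
\end{equation*}
cannot be shown to be ARNG: that verification rests on comparing the log robustness $s_n\approx n(\R(\E_1)-\delta)$ of the target with the type-II error $t_n\leq e^{-n\R(\E_1)+n\delta/3}$ coming from the \emph{direct} part of the GQSL (not, as you state, from the strong converse, which only bounds what is impossible and provides no test). Your sketch never establishes this robustness--relative-entropy link, and the permutation-symmetrization and input-reduction lemmas you invoke do not substitute for it.
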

The overall idea of the proof is to follow the established path of first relating the regularized relative entropy of resource $\R(\E)$ to the log robustness in \Autoref{cor:log_robustness}, where we prove both sides of the inequality in \Autoref{lem:log_robustness1} and \Autoref{lem:log_robustness2}. Then, we use the GQSL (\Autoref{thm:GQSL}) to show achievability of the optimal rate of resource conversion in \Autoref{lem:QRT_direct}. Finally, we prove that the relative entropy of resource is monotone under a sequence of ARNG operations in \Autoref{lem:resource_monotone}, and use this to prove the converse part in \Autoref{lem:QRT_converse}.

\subsection{Relating Relative Entropy of Resource to Log Robustness}
First, we relate the regularized relative entropy of resource of a c-q channel, $\E$, to the log robustness of a sequence of c-q channels $(\tilde{\E}_n)_n$ which is asymptotically identical to $\E\n$, in the sense of $\norm{\E\n-\tilde{\E}_n}_\Diamond\rightarrow0$ as $n\rightarrow\infty$. The goal of this section is to prove the following corollary:

\getkeytheorem{stored:cor:log_robustness}
\begin{EXCLUDED}
Note that \zcref{cor:log_robustness} can be thought of as an asymptotic equipartition property (AEP). If we define a channel-smoothed max relative entropy (for $\E, \F \in \cptp[\X \to A]$) as\footnote{Note that this is not a channel-divergence in the sense of \eqref{eq:def_channel_divergence}, i.e., it cannot -- at least not obviously -- be written as a state-divergence optimized over all input states, for more properties of this quantity see also \cite{wang_resource_2019}.}
\begin{equation}\label{eq:def_smoothed_diamond_max}
D_{\max}^{\Diamond, ε}(\E\|\F) \coloneqq \inf_{\substack{\tilde{\E} \in \cptp[\X \to A]\\ \norm{\tilde{\E} - \E}_{\Diamond} \leq ε}}D_{\max}(\tilde{\E}\|\F)
\end{equation}
and set $D_{\max}(\E\|\S) \coloneqq \inf_{\F \in \S} D_{\max}(\E\|\F)$, then \zcref{cor:log_robustness} is equivalent to\footnote{To see this equivalence, note first that for any $ε \in (0,1)$, any sequence $(\tilde{\E}_n)_n$ on the right-hand side of the statement of \zcref{cor:log_robustness} becomes (for $n$ large enough) a valid sequence of $\tilde{\E}$ for \eqref{eq:def_smoothed_diamond_max}, and hence the right-hand side of \eqref{eq:log_robustness_aep} can only be smaller. }
\begin{equation}\label{eq:log_robustness_aep}
    R^{\infty}(\E) = \lim_{n \to \infty} {1 \over n} D(\E\n\|\S_n) = \lim_{ε \to 0} \lim_{n \to \infty} {1 \over n} D_{\max}^{\Diamond, ε}(\E\n\|\S_n)\,.
\end{equation}
\end{EXCLUDED}
To prove \zcref{cor:log_robustness} we would first like to relate the relative entropy of resource of two sequences of c-q channels $(\E_n)_n$ and $(\E_n')_n$ which become identical asymptotically, in the sense of 
\begin{eqnarray}
    \lim_{n\rightarrow\infty}{\norm{\E_n-\tilde{\E_n}}}_\Diamond=0\,.
\end{eqnarray}

Intuitively, this limit suggests that the relative entropy of resource of the two sequences of channels should be the same asymptotically, as they are indistinguishable in the limit $n\rightarrow\infty$. Indeed, this is what we prove in the following lemma: 
\begin{lemma}[Asymptotically identical channels have equal regularized relative entropy of resource]
    \label{lem:resource_close_sequences}
    Let $(\E_n)_n$ and $(\tilde{\E}_n)_n$ be two sequences of c-q channels, with $\E_n,\tilde{\E}_n:\mathcal{X}^n\rightarrow A^n$, satisfying 
    \begin{eqnarray}
        \lim_{n\rightarrow\infty}{\norm{\E_n-\tilde{\E_n}}}_\Diamond=0
    \end{eqnarray}
    Then,
    \begin{eqnarray}
        \liminf_{n\rightarrow\infty}\frac{1}{n}R(\E_n)=\liminf_{n\rightarrow\infty}\frac{1}{n}R(\E_n')
    \end{eqnarray}
\end{lemma}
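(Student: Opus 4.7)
The plan is as follows. By symmetry of the hypothesis in $(\E_n)_n$ and $(\tilde{\E}_n)_n$, it suffices to establish the one-sided bound
\begin{equation}
    \liminf_{n \to \infty} \tfrac{1}{n} R(\tilde{\E}_n) \leq \liminf_{n \to \infty} \tfrac{1}{n} R(\E_n)\,.
\end{equation}
To do this, I would pick, for each $n$, a near-optimizer $\F_n^\star \in \S_n$ of $R(\E_n) = \inf_{\F_n \in \S_n} D(\E_n \| \F_n)$ and then transfer this upper bound to $R(\tilde{\E}_n)$ via a continuity estimate on $D(\cdot \| \cdot)$ in the first argument.

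The main obstacle is that the Umegaki relative entropy is not continuous in the first argument with a dimension-free modulus: a naive Fannes-type estimate on $|D(\tilde{\E}_n(\nu) \| \F_n^\star(\nu)) - D(\E_n(\nu) \| \F_n^\star(\nu))|$ produces a term of the form $\eps_n \, \|\log \F_n^\star(\nu)^{-1}\|_\infty$, and the smallest eigenvalue of $\F_n^\star(\nu)$ can in principle be exponentially small in $n$. To control this I would smooth by mixing with the free tensor product, defining
\begin{equation}
    \F_n^{\star\star} \coloneqq (1-\mu) \F_n^\star + \mu \F_*^{\otimes n}, \qquad \mu \in (0,1),
\end{equation}
where $\F_*$ is the full-Choi-rank free channel of \zcref{def:axioms}(4). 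By \zcref{def:axioms}(1) and (3), $\F_n^{\star\star} \in \S_n$, and since the output of $\F_*$ on every classical basis state has minimum eigenvalue at least some fixed $\lambda_* > 0$, one obtains $\F_n^{\star\star}(\ketbra{x}{x}) \geq \mu \lambda_*^n \1$ for every $x \in \X^n$, and hence $\|\log \F_n^{\star\star}(\ketbra{x}{x})^{-1}\|_\infty = O(n)$ uniformly in $x$. Reducing the suprema defining $D(\E_n \| \F_n^{\star\star})$ and $D(\tilde{\E}_n \| \F_n^{\star\star})$ to a supremum over such classical inputs via \zcref{lem:divergence_input_reduction} would then make this spectral bound applicable.

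The continuity estimate itself decomposes, for each fixed $x$, as
\begin{equation}
    D(\tilde{\E}_n(\ketbra{x}{x}) \| \F_n^{\star\star}(\ketbra{x}{x})) - D(\E_n(\ketbra{x}{x}) \| \F_n^{\star\star}(\ketbra{x}{x})) = \bigl[H(\E_n(\ketbra{x}{x})) - H(\tilde{\E}_n(\ketbra{x}{x}))\bigr] + \tr\bigl[(\tilde{\E}_n - \E_n)(\ketbra{x}{x}) \log \F_n^{\star\star}(\ketbra{x}{x})^{-1}\bigr].
\end{equation}
Writing $\eps_n \coloneqq \|\E_n - \tilde{\E}_n\|_\Diamond \to 0$, Fannes--Audenaert bounds the entropy difference by $\eps_n n \log d_A + h(\eps_n)$, while Hölder combined with the spectral bound above bounds the trace term by $\eps_n (n \log(1/\lambda_*) + \log(1/\mu))$. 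Both vanish after dividing by $n$, uniformly in $x$.

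Finally, $\F_n^{\star\star}(\ketbra{x}{x}) \geq (1-\mu) \F_n^\star(\ketbra{x}{x})$ yields $D(\rho \| \F_n^{\star\star}(\ketbra{x}{x})) \leq D(\rho \| \F_n^\star(\ketbra{x}{x})) + \log \tfrac{1}{1-\mu}$, so chaining the inequalities and using the near-optimality of $\F_n^\star$ gives
\begin{equation}
    \tfrac{1}{n} R(\tilde{\E}_n) \leq \tfrac{1}{n} D(\tilde{\E}_n \| \F_n^{\star\star}) \leq \tfrac{1}{n} R(\E_n) + o_n(1),
\end{equation}
and a $\liminf$ closes the argument. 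The hardest conceptual step is the spectral-control smoothing: assumption \zcref{def:axioms}(4) is exactly what makes the continuity correction grow at most linearly in $n$ with a slope controlled by $\lambda_*$ alone, and without this control a sequence-level continuity of $R(\cdot)/n$ would not be achievable.
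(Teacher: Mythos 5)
Your proof is correct, but it takes a genuinely different route from the paper. The paper handles this lemma by directly invoking the uniform continuity bound for the relative entropy of a resource of channels from \cite[Theorem 4]{gour_how_2019}, namely $\abs{R(\E_1)-R(\E_2)}\leq (1+\varepsilon)h(\tfrac{\varepsilon}{1+\varepsilon})+\varepsilon\kappa_n$ with $\kappa_n=\sup_{\mathcal{N}}R(\mathcal{N})$, and then controls $\kappa_n$ by bounding $D_{\max}(\mathcal{N}\|\F_*^{\otimes n})\leq -n\log\lambda_{\min}(\F_*(\Phi))$ via \zcref{lem:channel_divergence_maximally_entangled_state}; dividing by $n$ and sending $\varepsilon\to 0$ finishes the argument. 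You instead re-derive the needed continuity from scratch: you fix a near-optimal $\F_n^\star$, smooth it to $\F_n^{\star\star}=(1-\mu)\F_n^\star+\mu\F_*^{\otimes n}$ so that every output on a classical input is bounded below by $\mu\lambda_*^n\1$, reduce to classical inputs via \zcref{lem:divergence_input_reduction}, and then split the difference of relative entropies into an entropy term (Fannes--Audenaert) and a trace term (H\"older against $\log$ of the smoothed reference), both of which are $O(n\eps_n)$; the $\log\tfrac{1}{1-\mu}$ mixing penalty is $o(n)$ for fixed $\mu$. The two arguments lean on the same structural fact --- assumption (4) of \zcref{def:axioms} is what keeps the continuity correction linear in $n$ --- but yours is self-contained, exploits the c-q structure to avoid the reference system entirely, and only needs continuity of $D(\cdot\|\sigma)$ against one explicit full-rank reference rather than the uniform continuity of the resource measure $R$ itself; the paper's is shorter by citation and yields the slightly stronger two-sided estimate $\limsup_n\frac{1}{n}\abs{R(\E_n)-R(\tilde{\E}_n)}=0$ in one stroke (though your one-sided bound plus the symmetry of the hypothesis recovers the stated conclusion, and indeed the same equality of $\limsup$'s).
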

\begin{proof}
    We begin by using \cite[Theorem 4]{gour_how_2019}, which states that if $\E_1, \E_2: \mathcal{X}\rightarrow A$ are quantum channels, with $\frac{1}{2}\norm{\E_1-\E_2}_\Diamond=\varepsilon$, then
    \begin{eqnarray}
        \abs{R(\E_1)-R(\E_2)}\leq (1+\varepsilon)h\left(\frac{\varepsilon}{1+\varepsilon} \right) + \varepsilon \kappa
    \end{eqnarray}
    where $h(x)=-(1-x)\log(1-x)-x\log x$ is the binary entropy function and 
    \begin{eqnarray}
        \kappa := \sup_{\mathcal{N}\in\cptp[\mathcal{X}\rightarrow A]}R(\mathcal{N})
    \end{eqnarray}
    The expression on the right hand side is increasing in $\varepsilon$, so we can assume $\frac{1}{2}\norm{\E_1-\E_2}_\Diamond\leq\varepsilon$.
    \\
    \\
    Fix $0<\varepsilon<1$. Now, choose $N$ sufficiently large such that, for all $n\geq N$, we have $\frac{1}{2}\norm{\E_n-\tilde{\E}_n}_\Diamond\leq\varepsilon$. Then, we have that 
    \begin{eqnarray}
        \abs{R(\E_n)-R(\tilde{\E}_n)}\leq (1+\varepsilon)h\left(\frac{\varepsilon}{1+\varepsilon} \right) + \varepsilon \kappa_n
    \end{eqnarray}
    where 
    \begin{eqnarray}
        \kappa_n := \sup_{\mathcal{N}\in\cptp[\X^n\to A^n]}R(\mathcal{N})
    \end{eqnarray}
    Here, since we use the result of \cite[Theorem 4]{gour_how_2019}, which is phrased for general quantum channels, $\mathcal{N}$ need not, in general, be a c-q channel. \\
    \\
    We bound $\kappa_n$, by noting that, by the definition of $R(\mathcal{N})$,
    \begin{eqnarray}
        \kappa_n &=&  \sup_{\mathcal{N}\in\cptp[\X^n\to A^n]} \inf_{\F_n\in\mathcal{S}_n}\sup_{\nu_n\in\mathcal{D}(R\mathcal{X}^n)}D(\mathcal{N}_n(\nu_n)\| \F_n(\nu_n)) \\
        &\leq& \sup_{\mathcal{N}\in\cptp[\X^n\to A^n]} \inf_{\F_n\in\mathcal{S}_n}D_{\max}(\mathcal{N}_n\| \F_n) \\
        &\overset{(a)}{\leq}& \sup_{\mathcal{N}\in\cptp[\X^n\to A^n]} D_{\max}(\mathcal{N}_n\| \F_{*}^{\otimes n}) \\
        &\overset{(b)}{\leq}& \log\left(\frac{1}{λ_{\min}(\F_{*}(\Phi)\n)}\right)\\
        &=& -n \log(λ_{\min}(\F_{*}(\Phi)))
    \end{eqnarray}
    Here, (a) uses that $\mathcal{F}_{*}\n\in\mathcal{S}_n$, by assumptions (3) and (4) in \Autoref{def:axioms}, and (b) holds since the $D_{\max}$ of two channels is always achieved by a maximally entangled input state (\Autoref{lem:channel_divergence_maximally_entangled_state}), which we we wrote as $\Phi = \Phi_{\X \tilde{\X}}$, and the maximally entangled state on the tensor product system $\Phi_{\X\n \tilde{\X}\n}$ satisfies $\Phi_{\X\n \tilde{\X}\n} = \Phi_{\X \tilde{\X}}\n$. Finally, the Choi state $\F_{*}(\Phi)$ is positive definite by assumption (4) and thus has a positive minimum eigenvalue (namely $\lambda_{\min}(\F_*(\Phi))$.

    This means that 
    \begin{eqnarray}
        \limsup_{n\rightarrow\infty}\frac{1}{n}\abs{R(\E_n)-R(\tilde{\E}_n)}\leq \limsup_{n\rightarrow\infty} \varepsilon \frac{\kappa_n}{n} \leq - \varepsilon \log\left(λ_{\min}(F_{*}(\Phi))\right)
    \end{eqnarray}
    Since $\varepsilon>0$ is arbitrary, we can take the limit as $\varepsilon\rightarrow0$ to obtain
    \begin{eqnarray}
        \limsup_{n\rightarrow\infty}\frac{1}{n}\abs{R(\E_n)-R(\tilde{\E}_n)} =0
    \end{eqnarray}
    This gives our required result.
\end{proof}
We now use this lemma to relate the relative entropy of resource to the log robustness in the following two lemmas: \\
\\
\begin{lemma}
    \label{lem:log_robustness1}
    For any sequence $(\E_n\in\CPTP(\X^n\to A^n))_n$ of c-q channels, we have 
    \begin{eqnarray}
        \liminf_{n\rightarrow\infty}\frac{1}{n}R(\E_n) \leq \inf_{(\tilde{\E}_n)_n}\left\{ \liminf_{n\rightarrow\infty}\frac{1}{n}D_{\max}(\tilde{\E}_n\|\S_n) \bigg| \lim_{n\rightarrow\infty} \frac{1}{2} \norm{\E_n-\tilde{\E}_n}_\Diamond =0\right\}
    \end{eqnarray}
\end{lemma}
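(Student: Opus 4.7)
The plan is to reduce the statement to a direct comparison between the Umegaki and max-relative entropies, using \zcref{lem:resource_close_sequences} to replace $\E_n$ by the approximating sequence $\tilde{\E}_n$ without changing the asymptotic rate.

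Concretely, fix an arbitrary sequence $(\tilde{\E}_n)_n$ of c-q channels with $\lim_{n\to\infty}\tfrac{1}{2}\norm{\E_n-\tilde{\E}_n}_\Diamond=0$; the goal is then to show that
\[
\liminf_{n\to\infty}\tfrac{1}{n}R(\E_n)\;\leq\;\liminf_{n\to\infty}\tfrac{1}{n}D_{\max}(\tilde{\E}_n\|\S_n),
\]
after which taking the infimum over all admissible sequences $(\tilde{\E}_n)_n$ yields the claim. The first step is to invoke \zcref{lem:resource_close_sequences}, which gives $\liminf_{n\to\infty}\tfrac{1}{n}R(\E_n)=\liminf_{n\to\infty}\tfrac{1}{n}R(\tilde{\E}_n)$. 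So it suffices to compare $R(\tilde{\E}_n)$ with $D_{\max}(\tilde{\E}_n\|\S_n)$ at each fixed $n$.

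For this comparison I would use the standard pointwise inequality $D(\rho\|\sigma)\leq D_{\max}(\rho\|\sigma)$ (immediate from $\rho\leq e^{D_{\max}(\rho\|\sigma)}\sigma$ together with operator monotonicity of $\log$, or directly from the expression $D(\rho\|\sigma)=\Tr[\rho(\log\rho-\log\sigma)]\leq D_{\max}(\rho\|\sigma)\Tr\rho$). Taking the supremum over input states $\nu\in\mathcal{D}(R\X^n)$ lifts this to the channel level, $D(\tilde{\E}_n\|\F)\leq D_{\max}(\tilde{\E}_n\|\F)$ for every $\F\in\S_n$, and then taking the infimum over $\F\in\S_n$ gives
\[
R(\tilde{\E}_n)\;=\;\inf_{\F\in\S_n}D(\tilde{\E}_n\|\F)\;\leq\;\inf_{\F\in\S_n}D_{\max}(\tilde{\E}_n\|\F)\;=\;D_{\max}(\tilde{\E}_n\|\S_n),
\]
where in the last equality I use that for the max divergence the infimum over $\S_n$ commutes with the supremum over input states (via \zcref{lem:channel_divergence_maximally_entangled_state}, as noted after the definition of the log robustness). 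Dividing by $n$ and taking the $\liminf$ gives the desired bound, and the infimum over admissible sequences $(\tilde{\E}_n)_n$ concludes the proof.

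There is no real obstacle here: the whole argument is a two-line combination of \zcref{lem:resource_close_sequences} with the elementary ordering $D\leq D_{\max}$. The only place one must be slightly careful is the passage from the pointwise ordering to the channel-level ordering, where one should make sure that the supremum over inputs (defining the channel divergences) and the infimum over $\S_n$ are handled consistently; both formulations used in the proof are justified by \zcref{lem:rel_entropy_exchange} and \zcref{lem:channel_divergence_maximally_entangled_state}, which have already been set up earlier in the paper.
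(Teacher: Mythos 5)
Your proposal is correct and follows essentially the same route as the paper: both arguments reduce the claim to the per-$n$ bound $R(\tilde{\E}_n)\leq D_{\max}(\tilde{\E}_n\|\S_n)$ via the elementary ordering $D\leq D_{\max}$, and then invoke \zcref{lem:resource_close_sequences} to transfer the $\liminf$ from $(\tilde{\E}_n)_n$ back to $(\E_n)_n$. The only cosmetic difference is that the paper first selects an exact minimizer $\F_n\in\S_n$ of the log robustness (using compactness), whereas you work with the infima directly, which is equally valid.
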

\begin{proof}
    This lemma is analogous to \cite[Lemma S13]{hayashi_generalized_2024}, but with a different notion of distance between channels.
    Take any sequence $(\tilde{\E}_n)_n$ satisfying
    \begin{eqnarray}
        \lim_{n\rightarrow\infty}{\norm{\E_n-\tilde{\E}_n}_\Diamond}=0
    \end{eqnarray}
    For every c-q channel $\tilde{\E}_n$, there exists a free c-q channel $\F_n$ such that 
    \begin{eqnarray}
        D_{\max}(\tilde{\E}_n\|\F_n) = D_{\max}(\tilde{\E}_n\|\S_n) 
    \end{eqnarray}
    This exists as $D_{\max}$ is jointly lower semi-continuous, and $\S_n$ is compact. So, for any input state $\nu_n$, we have that
    \begin{eqnarray}
        D(\tilde{\E}_n(\nu_n)\|\F_n(\nu_n))) &\leq& D_{\max}(\tilde{\E}_n(\nu_n)\|\F_n(\nu_n)) \\
        &\leq&  D_{\max}(\tilde{\E}_n\|\F_n) 
        \\ &=&  D_{\max}(\tilde{\E}_n\|\S_n)
        \label{eq:robustness_bound}
    \end{eqnarray}
    We now take a supremum over input states, and an infimum over the set of free channels can only reduce the quantum relative entropy on the left-hand side of (\ref{eq:robustness_bound}), so we get
    \begin{eqnarray}
        R(\tilde{\E_n})=\inf_{\mathcal{F}_n \in \mathcal{S}_n} \sup_{\nu_n\in\mathcal{D}(R\mathcal{X}^n)}D(\tilde{\E}_n(\nu_n)\|\F_n(\nu_n))) \leq D_{\max}(\tilde{\E}_n\|\S_n)
    \end{eqnarray}
    Now, by \Autoref{lem:resource_close_sequences}, we get
    \begin{eqnarray}
        \liminf_{n\rightarrow\infty}{\frac{1}{n}R(\E_n)}=\liminf_{n\rightarrow\infty}{\frac{1}{n}R(\tilde{\E}_n)} \leq \liminf_{n\rightarrow\infty}\frac{1}{n}D_{\max}(\tilde{\E}_n\|\S_n)
    \end{eqnarray}
\end{proof}
The following lemma is based on the same construction as \cite[Lemma S14]{hayashi_generalized_2024}. Here, we create a new sequence of c-q channels using their Choi states, and then show that this sequence satisfies both the required equations. The log robustness condition is proved using the pinching inequality, whilst the convergence in diamond norm is proved using the strong converse part of the GQSL for c-q channels (specifically \Autoref{lem:GQSL_converse_bound}).
\begin{lemma}
    \label{lem:log_robustness2}
    For any c-q channel $\E$, and any sequence of sets of c-q channels $(\S_n \subset \CPTP(\X^n \to A^n))_n$ satisfying the assumptions of \Autoref{def:axioms}, there exists a sequence of c-q channels $(\tilde{\E}_n)_n$, with $\tilde{\E}_n:\mathcal{X}^n\rightarrow A^n$ such that both of the following equations hold:
    \begin{eqnarray}
        \R(\E) \geq \limsup_{n\rightarrow\infty}\frac{1}{n}D_{\max}(\tilde{\E_n}\|\S_n) \\
        \lim_{n\rightarrow\infty}\frac{1}{2}\norm{\E^{\otimes n}-\tilde{\E}_n}_{\Diamond}=0
    \end{eqnarray}
\end{lemma}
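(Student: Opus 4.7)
The plan is to build $\tilde{\E}_n$ via a pinching-and-projection construction applied individually to each output $\omega_x := \E\n(\ket{x}\bra{x})$ for $x \in \X^n$, using the strong-converse bound of \zcref{lem:GQSL_converse_bound} to control things uniformly in $x$. Concretely, fix $\delta > 0$, pick $m$ large and $\tilde{\F}_m \in \S_m$ with $\frac{1}{m}\sup_{\nu_m}D(\E^{\otimes m}(\nu_m)\|\tilde{\F}_m(\nu_m)) \leq \R(\E) + \delta/2$; by mixing with a small amount of $\F_*^{\otimes m}$ we may further assume $\tilde{\F}_m$ has full-rank Choi state (at a cost of only $O(\delta)$ in the divergence, using $D_{\max}(\E\|\F_*)<\infty$). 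Set $\hat{\F}_n := \tilde{\F}_m^{\otimes l} \otimes \F_*^{\otimes r} \in \S_n$ with $n=ml+r$ and $\phi_x := \hat{\F}_n(\ket{x}\bra{x})$. Since $\phi_x$ is a tensor product drawn from a fixed finite pool of one-block matrices, grouping identical factors and using $|\spec(M^{\otimes k})|=O(k^{d-1})$ gives $|\spec(\phi_x)| \leq \poly(n)$ uniformly in $x$. Letting $E_x$ be the pinching with respect to $\phi_x$, define
\[
\Pi_x \;:=\; \{E_x(\omega_x) > e^{n(\R(\E)+2\delta)}\phi_x\}, \qquad \tilde{\omega}_x \;:=\; \Pi_x^\perp \omega_x \Pi_x^\perp + \tr(\Pi_x \omega_x)\,\phi_x,
\]
and take $\tilde{\E}_n$ to be the c-q channel $\ket{x}\bra{x}\mapsto\tilde{\omega}_x$.

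For the log-robustness bound, $\Pi_x$ is built from the spectral projectors of $\phi_x$, so it commutes with both $\phi_x$ and $E_x(\omega_x)$; this gives $\Pi_x^\perp E_x(\omega_x)\Pi_x^\perp \leq e^{n(\R(\E)+2\delta)}\phi_x$, and the pinching inequality $\omega_x \leq |\spec(\phi_x)|\,E_x(\omega_x)$ then yields $\tilde{\omega}_x \leq \bigl(\poly(n)\cdot e^{n(\R(\E)+2\delta)}+1\bigr)\phi_x$ uniformly in $x$. Exploiting the c-q structure (via \zcref{lem:channel_divergence_maximally_entangled_state} one checks $D_{\max}(\tilde{\E}_n\|\hat{\F}_n) = \max_x D_{\max}(\tilde{\omega}_x\|\phi_x)$), we conclude $\limsup_n \frac{1}{n}D_{\max}(\tilde{\E}_n\|\S_n) \leq \R(\E) + 2\delta$.

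For the diamond-norm bound, \zcref{lem:diamond_input_reduction} reduces things to $\max_x\|\tilde{\omega}_x-\omega_x\|_1$, and the gentle-measurement lemma controls this by $\max_x\tr(\Pi_x\omega_x)$. To make this vanish uniformly I would rerun the sandwiched-R\'enyi argument from the proof of \zcref{lem:GQSL_converse_bound}: additivity of $\Ds$ under tensor products together with $D_{\max}(\E\|\F_*)<\infty$ give, for some fixed $\alpha>1$ close to $1$, the uniform bound $\sup_x\frac{1}{n}\Ds(\omega_x\|\phi_x)\leq \R(\E)+\delta$ for large $n$. Combining $\tr(\Pi_x\omega_x)=\tr(\Pi_x E_x(\omega_x))$ (because $\Pi_x$ commutes with the eigenprojectors of $\phi_x$) with the Markov-type estimate coming from the operator inequality $E_x(\omega_x)^\alpha\phi_x^{1-\alpha}\geq e^{n\alpha(\R(\E)+2\delta)}\phi_x$ on the range of $\Pi_x$ then yields $\tr(\Pi_x\omega_x)\leq e^{-n(\alpha-1)\delta}$ uniformly in $x$, so $\max_x\|\tilde{\omega}_x-\omega_x\|_1\to 0$.

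Since the construction above produces, for every $\delta>0$, a sequence $(\tilde{\E}_n^{(\delta)})_n$ with $\limsup_n\frac{1}{n}D_{\max}(\tilde{\E}_n^{(\delta)}\|\S_n)\leq\R(\E)+2\delta$ and $\|\E\n-\tilde{\E}_n^{(\delta)}\|_\Diamond\to 0$, a standard diagonal extraction over a sequence $\delta_k\to 0$ yields a single sequence satisfying both required properties simultaneously. The main technical obstacle is uniformity across the exponentially many $x\in\X^n$; this is overcome by (i) the tensor-product structure of $\hat{\F}_n$, which keeps the pinching-inequality overhead polynomial in $n$ uniformly in $x$, and (ii) the sandwiched-R\'enyi version of the strong converse, which reduces to a pointwise estimate in $x$ for c-q channels thanks to \zcref{lem:divergence_input_reduction}.
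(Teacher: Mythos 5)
Your construction is correct and is, in its skeleton, the same as the paper's: both cut out, for each input $x$, the part of $\E\n(\ketbra{x}{x})$ where its pinched version exceeds $e^{n(\R(\E)+O(\delta))}$ times the output of a near-optimal free channel of the form $\tilde{\F}_m^{\otimes l}\otimes\F_*^{\otimes r}$, patch with a free output to restore the trace, control $D_{\max}$ via the pinching inequality at a $\poly(n)$ overhead, and control the diamond norm by a strong-converse argument plus the gentle-measurement lemma and \zcref{lem:input_reduce_diamond}. The differences are local rather than structural. The paper does the bookkeeping at the level of Choi matrices, using a single global pinching with respect to $J(\F_m^{\otimes k})$ and one projection $P_k$ that decomposes blockwise into your $\Pi_x$, and bounds the spectral count by permutation invariance rather than by your (equally valid) argument that each $\phi_x$ is a tensor product drawn from a fixed finite pool. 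For the vanishing of the removed weight, the paper invokes \zcref{lem:GQSL_converse_bound} as a black box and argues by contradiction along sequences $(x_k)_k$ via \zcref{lem:sup_over_sequences}, whereas you inline the sandwiched-\renyi/Markov estimate to get the explicit uniform bound $\Tr(\Pi_x\omega_x)\le e^{-(\alpha-1)n\delta}$; this is more quantitative and avoids the sequence argument, and it is also the reason you introduce the (harmless, and in fact not strictly necessary, since finiteness of $D$ for the chosen $\tilde{\F}_m$ already gives the support condition needed for $\Ds\to D$) full-rank mixing of $\tilde{\F}_m$. Finally, you make explicit the diagonal extraction over $\delta_k\to 0$, which the paper leaves implicit, as its proof as written only establishes the $\limsup$ bound with an arbitrary $R>\R(\E)$ in place of $\R(\E)$.
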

\begin{proof}
    Choose any $R>\R(\E)$. Since, by \Autoref{lem:rel_entropy_exchange}, we have 
    \begin{eqnarray}
        \R(\E)= \lim_{n\rightarrow\infty}\frac{1}{n}\inf_{\mathcal{F}_n\in \mathcal{S}_n} \sup_{\nu_n\in\mathcal{D}(R\mathcal{X}^n)}{D(\mathcal{E}^{\otimes n}(\nu_n)\|\mathcal{F}_n(\nu_n))} 
    \end{eqnarray}
    there exists an integer $m>0$ and a free channel $\F_m \in \S_m$ such that
    \begin{eqnarray}
        \frac{1}{m} \sup_{\nu_m\in\mathcal{D}(R\mathcal{X}^m)}{D(\mathcal{E}^{\otimes m}(\nu_m)\|\mathcal{F}_m(\nu_m))} <R \,.
    \end{eqnarray}
    For any channel $\mathcal{N}:\mathcal{X}\rightarrow A$ we write $J(\mathcal{N}) = \mathcal{N}(\Phi_{\tilde{X}X}) \in\mathcal{D}(\mathcal{\tilde{X}A})$ for its normalized Choi state.
    Then, let $E_k$ be the pinching map with respect to the state $J(\F_m^{\otimes k}) = J(\F_m)^{\otimes k}$. This state is permutation invariant with respect to permuting the $k$ copies of $\tilde{\mathcal{X}}^mA^m$, which implies that $\abs{\spec J(\F_m^{\otimes k})}=O(\poly(k))$ as $k\rightarrow\infty$.

    One also easily sees that because the channel is c-q, we can write $J(\F_m^{\otimes k})$ as 
    \begin{equation}
    J(\F_m^{\otimes k}) = \sum_{x \in \X^{km}} \ketbra{x}{x} \otimes \F_m\k(\ketbra{x}{x})
    \end{equation}

    Now, define the projection:
    \begin{eqnarray}
        P_k:=\{E_k(J(\mathcal{E}^{\otimes km}))-e^{kmR}J(\F_m^{\otimes k}) \geq 0\}
        \label{eq:QRT_logrobustness2_Pk_def}
    \end{eqnarray}
    which commutes with $J(\F_m\k)$ by construction, and hence can also be written as
    \begin{equation}
        P_k = \sum_{x \in \X^{km}} \ketbra{x}{x} \otimes P_k^{(x)}\,.
    \end{equation}
    The intuition is that the subspace that $P_k$ projects onto contains all the \enquote{bad} parts of the channel (output), i.e., the ones which will lead to a large $D_{\max}$. We will show that by projecting onto its orthogonal complement we can remove those bad parts, and this will only incur a vanishing error asymptotically. Since projecting onto a subspace is not trace-preserving in general, for the construction of $\tilde{\E}$ we will add the output of a free channel to make $\tilde{\E}$ trace preserving. 
    Define the c-q channel $\tilde{\E}_{km}$ for all $x \in \X^{km}$ as
    \begin{multline}\label{eq:qrt_robustness_proof_def_cq_channel}
    \tilde{\E}_{km}(\ketbra{x}{x}) \coloneqq \qty(\IdentityMatrix - P_k^{(x)}) \E^{\otimes km}(\ketbra{x}{x}) \qty(\IdentityMatrix - P_{k}^{(x)}) \\+ \qty(1 - \Tr[\qty(\IdentityMatrix - P_k^{(x)}) \E^{\otimes km}(\ketbra{x}{x})]) \F_{*}^{\otimes km} (\ketbra{x}{x})
    \end{multline}
    where $\F_{*} \in \S_1$ is the channel with a full-rank Choi matrix from \Autoref{def:axioms}. It is easy to see that the Choi matrix of $J(\tilde{\E}_{km})$ can be written as
    \begin{equation}
    J(\tilde{\E}_{km}) = (\IdentityMatrix - P_k) J(\E^{\otimes km}) (\IdentityMatrix - P_k) + ω
    \end{equation}
    where $ω$ is a positive matrix coming from the second term in \eqref{eq:qrt_robustness_proof_def_cq_channel}. Since \[1 - \Tr[\qty(\IdentityMatrix - P_k^{(x)}) \E^{\otimes km}(\ketbra{x}{x})] \leq 1\] we have that $ω \leq J(\F_*^{\otimes km}) \in \S_{km}$, and so $D_{\max}(ω\|\S_{km})\leq 0$. 

    Now note that we have the following chain of inequalities,
    \begin{eqnarray}
        (\mathds{1}-P_k)J(\mathcal{E}^{\otimes km})(\mathds{1}-P_k) &\overset{(a)}{\leq}& \abs{\spec J(\F_m^{\otimes k})} (\mathds{1}-P_k)E_k(J(\mathcal{E}^{\otimes km}))(\mathds{1}-P_k) \\
        &\overset{(b)}{\leq}& \abs{\spec J(\F_m^{\otimes k})} e^{kmR}(\mathds{1}-P_k)J(\F_m^{\otimes k})(\mathds{1}-P_k)\\
        &\overset{(c)}{\leq}& \abs{\spec J(\F_m^{\otimes k})} e^{kmR} J(\F_m^{\otimes k})\,,
        \label{eq:QRT_logrobustness2_bound2}
    \end{eqnarray}
    where, (a) is by the pinching inequality, (b) follows from the definition of the projection $P_k$ in  \eqref{eq:QRT_logrobustness2_Pk_def}, and (c) holds since $P_k$ commutes with $J(\F_m^{\otimes k})$. \\
    Since $\S_n \subset \CPTP(\X^n \to A^n)$ is assumed to be convex it holds that\footnote{One easily sees that if $ρ_1 \leq λ_1 σ_1$ and $ρ_2 \leq λ_2 σ_2$ then $ρ_1 + ρ_2 \leq (λ_1 + λ_2) {λ_1 σ_1 + λ_2 σ_2 \over λ_1 + λ_2}$, and hence for a convex set of states $S$: $D_{\max}(ρ_1 + ρ_2 \| S) \leq D_{\max}(ρ_1\|S) + D_{\max}(ρ_2\|S)$. This can be applied together with \Autoref{lem:channel_divergence_maximally_entangled_state} which states that the channel max divergence is always achieved at the maximally entangled input state, and thus given by the max divergence of Choi states.}
    \begin{align}
        D_{\max}(\tilde{\E}_{km}\|\S_{km}) &\leq D_{\max}((\IdentityMatrix - P_k) J(\E^{\otimes km}) (\IdentityMatrix - P_k)\|\S_{km}) + D_{\max}(ω\|\S_{km})\\
        &\leq k m R + \log \abs{\spec J(\F_m^{\otimes k})} + 0
    \end{align}

    Now, if $n$ is not a multiple of $m$ but of the form $n = km + r$, $0 \leq r < m$, we set $\tilde{\E}_n \coloneqq \tilde{\E}_{km} \otimes \E^{\otimes r}$. 
    Then, since $(\S_n)_n$ is closed under tensor products, we get that
    \begin{align}
        {1 \over n} D_{\max}(\tilde{\E}_{n} \| \S_n) &\leq {1 \over n} \qty[D_{\max}(\tilde{\E}_{km} \| \S_{km}) + r D_{\max}(\E\|\S_1)] \\ 
        & \leq {k m R \over n} + {\log \abs{\spec(J(\F_m^{\otimes k}))} \over n} + {r \over n} D_{\max}(\E\|S_1) \xrightarrow{n \to \infty} R 
    \end{align}
    since also $D_{\max}(\E\|\S_1) < \infty$ by \Autoref{def:axioms} and $\abs{\spec J(\F_m^{\otimes k})}=O(\poly(k)) = O(\poly(n))$ as explained above. Hence, 
    \begin{equation}
    \limsup_{n \to \infty} {1 \over n} D_{\max}(\tilde{\E}_{n} \| \S_n) \leq R\,.
    \end{equation}
    \medskip

    \noindent It remains to show that $\lim_{n\rightarrow\infty}\frac{1}{2}\norm{\E^{\otimes n}-\tilde{\E}_n}_{\Diamond}=0$, for which (since $\E\n - \tilde{\E}_n = \E^{\otimes r} \otimes (\E^{\otimes km} - \tilde{\E}_{km})$) it is sufficient to show that 
    \begin{equation}
        \lim_{k \to \infty} {1 \over 2} \norm{\E^{\otimes km} - \tilde{\E}_{km}}_{\Diamond} = 0\,.
    \end{equation}

    Using the strong converse part of the GQSL (specifically \Autoref{lem:GQSL_converse_bound}) with the set of channels for the alternate hypothesis being $\tilde{\mathcal{S}}_{k}=\{\F_{m}^{\otimes k}\}$, and the null hypothesis being $(\E^{\otimes m})^{\otimes k}$, we have that for any $\varepsilon\in(0,1)$
    \begin{multline}\label{eq:qrt_steins_lemma_converse}
        \limsup_{k\rightarrow\infty}-\frac{1}{km}\log( \inf_{\nu_{km}\in\mathcal{D}(R\mathcal{X}^{km})}\inf_{\substack{0\leq\Lambda\leq\mathds{1} \\ \Tr[\Lambda\E^{\otimes mk}(\nu_{km})]\geq1-\varepsilon}} \Tr[\Lambda\F_m^{\otimes k}(\nu_{km})]) \\ \leq \frac{1}{m} \sup_{\nu_m\in\mathcal{D}(R\mathcal{X}^m)}{D(\mathcal{E}^{\otimes m}(\nu_m)\|\mathcal{F}_m(\nu_m))}
    \end{multline}
    However, from the definition of $P_k$, we have that for all $k$,
    \begin{eqnarray}
        \Tr[P_k(e^{-kmR}E_k(J(\E^{\otimes mk}))-J(\F_m^{\otimes k}))] \geq 0
    \end{eqnarray}
    In particular, due to the form of the Choi states, we have that for all $x\in\mathcal{X}^{km}$, 
    \begin{eqnarray}
        \Tr\qty{\qty(\ketbra{x}{x} \otimes P_{k}^{(x)})\qty[e^{-kmR}E_k\Big(\ketbra{x}{x}\otimes\E^{\otimes mk}(\ketbra{x}{x})\Big)-\ketbra{x}{x}\otimes\F_m^{\otimes k}(\ketbra{x}{x})]} \geq 0
    \end{eqnarray}
    Hence, 
    \begin{eqnarray}
        \Tr[P_{k}^{(x)} \F_m^{\otimes k}(\ketbra{x}{x})]&=&\Tr[\ketbra{x}{x}\otimes P_{k}^{(x)} \F_m^{\otimes k}(\ketbra{x}{x})] \\
        &\leq& e^{-kmR}\Tr[(\ketbra{x}{x} \otimes P_{k,x})E_k\Big(\ketbra{x}{x}\otimes\E^{\otimes mk}(\ketbra{x}{x})\Big)] \\
        &\leq& e^{-kmR}
    \end{eqnarray}
    Thus, for all $(x_k\in\mathcal{X}^{km})_k$,
    \begin{eqnarray}
        \limsup_{k\rightarrow\infty}{-\frac{1}{km}\log(\Tr[P_{k}^{(x_k)} \F_m^{\otimes k}(\ketbra{x_k}{x_k})])} \geq R > \frac{1}{m} \sup_{\nu_m\in\mathcal{D}(R\mathcal{X}^m)}{D(\mathcal{E}^{\otimes m}(\nu_m)\|\mathcal{F}_m(\nu_m))}\,.
    \end{eqnarray}
    Thus, for all $ε \in (0,1)$ and all sequences $(x_k\in\mathcal{X}^{km})_k$ the projectors $P_k^{(x_k)}$ can be a valid choice of $\Lambda$ in the left-hand side of \eqref{eq:qrt_steins_lemma_converse} at most for finitely many $k$. In particular this means that for every $ε \in (0,1)$, we get that 
    $\Tr[P_k^{(x_k)} \E^{\otimes mk}(\ketbra{x}{x})] < 1 - ε$ eventually. Hence,
    \begin{equation}
        \lim_{k \to \infty} \Tr[P_k^{(x_k)} \E^{\otimes mk}(\ketbra{x_k}{x_k})] = 0
    \end{equation}
    or equivalently
    \begin{equation}\label{eq:qrt_robustness_proof_second_term_vanishes}
        \lim_{k \to \infty} \Tr[\qty(\IdentityMatrix - P_k^{(x_k)}) \E^{\otimes mk}(\ketbra{x_k}{x_k})]=  1\,.
    \end{equation}
    This already implies that the second term in \eqref{eq:qrt_robustness_proof_def_cq_channel} goes to zero asymptotically. 
    To see what happens to the first term, we use the gentle measurement Lemma which states that \cite[Lemma 6.15]{khatri_principles_2020}: If $\Tr(Λρ) \geq 1 - ε$ for some $0 \leq \Lambda \leq \IdentityMatrix$, then 
    \begin{equation}
        {1 \over 2} \norm{ρ - {\sqrt{\Lambda}ρ\sqrt{\Lambda} \over \Tr(\Lambda ρ)}}_1 \leq \sqrt{ε}\,.
    \end{equation}
    Since 
    \begin{equation}
        \norm{{\sqrt{\Lambda}ρ\sqrt{\Lambda} \over \Tr(\Lambda ρ)} - \sqrt{\Lambda} ρ \sqrt{\Lambda}}_1 = (1 - \Tr(Λρ)) \norm{\sqrt{\Lambda}ρ\sqrt{\Lambda} \over \Tr(\Lambda ρ)}_1 \leq ε
    \end{equation}
    we find by the triangle inequality that
    \begin{equation}
    {1 \over 2} \norm{ρ - \sqrt{\Lambda}ρ \sqrt{\Lambda}}_1 \leq \sqrt{ε} + \frac{ε}{2}
    \end{equation}
    Applying this to $\Lambda = \sqrt{\Lambda} = \IdentityMatrix - P_k^{(x_k)}$ and $ρ = \E^{\otimes mk}(\ketbra{x_k}{x_k})$
    gives
    \begin{equation}
        \lim_{k \to \infty} \norm{\E^{\otimes km}(\ketbra{x_k}{x_k}) - (\IdentityMatrix - P_k^{(x_k)})\E^{\otimes km}(\ketbra{x_k}{x_k})(\IdentityMatrix - P_k^{(x_k)})}_1 = 0
    \end{equation}
    and hence together with \eqref{eq:qrt_robustness_proof_second_term_vanishes}:
    \begin{equation}
        \lim_{k \to \infty} \norm{\E^{\otimes km}(\ketbra{x_k}{x_k}) - \tilde{\E}_{km}(\ketbra{x_k}{x_k})}_1 = 0\,.
    \end{equation}
    Since this holds for all sequences $(x_k\in\mathcal{X}^{km})_k$, we get (using \Autoref{lem:sup_over_sequences}) 
    \begin{equation}
            \lim_{k \to \infty} \sup_{x \in \X^{km}} \norm{\E^{\otimes km}(\ketbra{x_k}{x_k}) - \tilde{\E}_{km}(\ketbra{x_k}{x_k})}_1 = \lim_{k \to \infty} \norm{\E^{\otimes km} - \tilde{\E}_{km}}_{\Diamond} = 0\,.
    \end{equation}
    where the equality to the diamond norm already for input states of the form $\ketbra{x}{x}$ follows by \Autoref{lem:input_reduce_diamond}.
\end{proof}

The above two lemmas combine to give the following corollary, relating the relative entropy of resource to the log robustness:
\begin{corollary}[store={stored:cor:log_robustness}, label={cor:log_robustness}, restate-keys={manual-num={\ref{cor:log_robustness}}}]
    For any c-q channel $\E$, and any sequence of sets of c-q channels $(\S_n)_n$ satisfying the assumptions in \Autoref{def:axioms}, we have 
    \begin{multline}
        \R(\E) = \inf_{(\tilde{\E}_n)_n}\left\{\lim_{n\rightarrow\infty} \frac{1}{n}D_{\max}(\tilde{\E}_n \| \mathcal{S}_n)\,\bigg| \lim_{n\rightarrow\infty}\norm{\E^{\otimes n}-\tilde{\E}_n}_\Diamond =0 \,
        \&\, \lim_{n\rightarrow\infty} \frac{1}{n}D_{\max}(\tilde{\E}_n \| \mathcal{S}_n)\, \mathrm{  exists } \right\}
    \end{multline}
\end{corollary}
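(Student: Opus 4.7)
The plan is to combine Lemmas~\ref{lem:log_robustness1} and~\ref{lem:log_robustness2}, both applied to the IID sequence $\E_n = \E^{\otimes n}$. The key preliminary fact that makes this work is that $\lim_{n \to \infty} \frac{1}{n} R(\E^{\otimes n}) = \R(\E)$ already exists by Lemma~\ref{lem:lim_exists}; this is what will allow the one-sided bounds produced by each of the two lemmas to be collapsed into a matching two-sided identity.

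For the direction ``infimum $\geq \R(\E)$'', I would fix any sequence $(\tilde{\E}_n)_n$ admissible for the infimum on the right-hand side, i.e.\ one with $\norm{\E^{\otimes n} - \tilde{\E}_n}_\Diamond \to 0$ and such that $\lim_{n \to \infty} \frac{1}{n} D_{\max}(\tilde{\E}_n \| \S_n)$ exists. Since the infimum in Lemma~\ref{lem:log_robustness1} (applied with $\E_n = \E^{\otimes n}$) is at most the value attained by this particular sequence, I obtain $\R(\E) \leq \liminf_{n \to \infty} \frac{1}{n} D_{\max}(\tilde{\E}_n \| \S_n)$, which, by admissibility of $(\tilde{\E}_n)_n$, equals the limit. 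Taking the infimum over all admissible sequences yields this direction.

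For the reverse inequality, I would invoke Lemma~\ref{lem:log_robustness2} to produce a concrete sequence $(\tilde{\E}^{*}_n)_n$ satisfying $\norm{\E^{\otimes n} - \tilde{\E}^{*}_n}_\Diamond \to 0$ together with $\limsup_{n \to \infty} \frac{1}{n} D_{\max}(\tilde{\E}^{*}_n \| \S_n) \leq \R(\E)$. At first glance this falls short of what the corollary requires, because the infimum is restricted to sequences along which the limit exists, and Lemma~\ref{lem:log_robustness2} controls only the limsup. The only genuinely new observation needed, which I would use to close this gap, is a sandwich argument: applying Lemma~\ref{lem:log_robustness1} to the same sequence $(\tilde{\E}^{*}_n)_n$ gives $\R(\E) \leq \liminf_{n \to \infty} \frac{1}{n} D_{\max}(\tilde{\E}^{*}_n \| \S_n)$, and combining this with the limsup bound forces the chain
\begin{equation*}
    \R(\E) \;\leq\; \liminf_{n \to \infty} \frac{1}{n} D_{\max}(\tilde{\E}^{*}_n \| \S_n) \;\leq\; \limsup_{n \to \infty} \frac{1}{n} D_{\max}(\tilde{\E}^{*}_n \| \S_n) \;\leq\; \R(\E)\,.
\end{equation*}
All quantities therefore coincide with $\R(\E)$, so the limit exists and $(\tilde{\E}^{*}_n)_n$ is admissible, witnessing the infimum being at most $\R(\E)$. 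I do not anticipate any further obstacle beyond this compatibility step, since the entire technical burden already sits inside the proofs of Lemmas~\ref{lem:log_robustness1} and~\ref{lem:log_robustness2}.
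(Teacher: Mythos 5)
Your proposal is correct and follows essentially the same route as the paper: Lemma~\ref{lem:log_robustness1} gives $\R(\E)$ as a lower bound on the (restricted) infimum, and the sequence from Lemma~\ref{lem:log_robustness2} is shown to be admissible and optimal via exactly the sandwich argument you describe, forcing $\liminf = \limsup = \R(\E)$. No gaps.
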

\begin{proof}
    Using \Autoref{lem:log_robustness1}, we have that 
    \begin{eqnarray}
        \R(\E)\leq \inf_{(\tilde{\E}_n)_n}\left\{ \liminf_{n\rightarrow\infty}\frac{1}{n}D_{\max}(\tilde{\E}_n \| \mathcal{S}_n)\, \bigg|\, \lim_{n\rightarrow\infty} \frac{1}{2} \norm{\E\n-\tilde{\E_n}}_\Diamond =0\right\} \\
        \leq \inf_{(\tilde{\E}_n)_n}\left\{ \lim_{n\rightarrow\infty}\frac{1}{n}D_{\max}(\tilde{\E}_n \| \mathcal{S}_n)\, \bigg|\, \lim_{n\rightarrow\infty} \frac{1}{2} \norm{\E\n-\tilde{\E_n}}_\Diamond =0\,\&\,\lim_{n\rightarrow\infty}\frac{1}{n}D_{\max}(\tilde{\E}_n \| \mathcal{S}_n)\, \mathrm{  exists }\right\}
        \label{eq:QRT_log_robustness_cor}
    \end{eqnarray}
    The second line comes from the fact that imposing the condition that $\lim_{n\rightarrow\infty}\frac{1}{n}D_{\max}(\E_n \| \mathcal{S}_n)$ exists makes our set smaller. \\
    \\
    To show the other direction, we use \Autoref{lem:log_robustness2}. This gives us a sequence $(\tilde{\E}_n')_n$ satisfying both of the following equations:
    \begin{eqnarray}
        \limsup_{n\rightarrow\infty}\frac{1}{n}D_{\max}(\tilde{\E}_n' \| \mathcal{S}_n) \leq \R(\E) \\
        \lim_{n\rightarrow\infty}\norm{\E\n-\tilde{\E}_n'}_\Diamond=0
    \end{eqnarray}
    This means that, from (\ref{eq:QRT_log_robustness_cor}),
    \begin{eqnarray}
        \limsup_{n\rightarrow\infty}\frac{1}{n}D_{\max}(\tilde{\E}_n' \| \mathcal{S}_n) = \liminf_{n\rightarrow\infty}\frac{1}{n}D_{\max}(\tilde{\E}_n' \| \mathcal{S}_n) = \lim_{n\rightarrow\infty}\frac{1}{n}D_{\max}(\tilde{\E}_n' \| \mathcal{S}_n) = \R(\E)
    \end{eqnarray}
    Hence,
    \begin{multline}
        \R(\E)=\min_{(\tilde{\E}_n)_n}\bigg\{ \lim_{n\rightarrow\infty}\frac{1}{n}D_{\max}(\tilde{\E}_n \| \mathcal{S}_n)\, \bigg|\, \lim_{n\rightarrow\infty} \frac{1}{2} \norm{\E\n-\tilde{\E_n}}_\Diamond =0\,\\\&\,\lim_{n\rightarrow\infty}\frac{1}{n}D_{\max}(\tilde{\E}_n \| \mathcal{S}_n)\, \mathrm{  exists }\bigg\}
    \end{multline}
    with a minimizer being $(\tilde{\E}_n')_n$.
\end{proof}
\subsection{Proof of \texorpdfstring{\Autoref{thm:QRT}}{Theorem \ref{thm:QRT}}}
First we will prove the direct part, which involves similar techniques to \cite[Proposition S16]{hayashi_generalized_2024}. We use the GQSL for c-q channels (\Autoref{thm:GQSL}), and \Autoref{cor:log_robustness} to construct a sequence of operations (superchannels) that achieve the desired transformation at the desired rate. We use the bounds from the GQSL for c-q channels to prove that this sequence of operations is ARNG. Meanwhile, \Autoref{cor:log_robustness} allows us to prove that the asymptotic rate of conversion for this transformation is sufficient. 
\begin{lemma}[Direct Bound on the Optimal Asymptotic Conversion Rate]
    \label{lem:QRT_direct}
    Let $\E_1$ and $\E_2$ be two c-q channels which satisfy
    \begin{eqnarray}
        \R(\E_i)>0
    \end{eqnarray}
    Then, we have
    \begin{eqnarray}
        r(\E_1\rightarrow\E_2) \geq \frac{\R(\E_1)}{\R(\E_2)}
    \end{eqnarray}
\end{lemma}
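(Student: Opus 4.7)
The plan is to explicitly construct, for any $r < \R(\E_1)/\R(\E_2)$, a sequence of ARNG superchannels $(\Theta_n)_n \in \OR$ that converts $\E_1\n$ into $\E_2^{\otimes \lceil rn \rceil}$ with vanishing diamond-norm error. The high-level idea is a channel analogue of the state-side Brand\~ao--Plenio construction: build a hypothesis-testing superchannel that probes the input channel with a near-optimal GQSL measurement in order to decide whether it looks like $\E_1\n$ or a free channel, and then conditionally outputs either a good approximation of $\E_2^{\otimes \lceil rn \rceil}$ or a cheap free channel.

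Concretely, fix rates $R_1, R_2$ with $rR_2 < R_1$, $R_1 < \R(\E_1)$, and $R_2 > \R(\E_2)$. Two ingredients feed the construction. First, \zcref{cor:log_robustness} applied to $\E_2$ supplies c-q channels $(\tilde{\E}_{2,m})_m$ with $\|\tilde{\E}_{2,m} - \E_2^{\otimes m}\|_\Diamond \to 0$ and $\tfrac{1}{m}D_{\max}(\tilde{\E}_{2,m}\|\S_m) \to \R(\E_2)$; for all $m$ large enough this furnishes $\F'_m \in \S_m$ such that $\tilde{\E}_{2,m}(\nu) \leq e^{m R_2}\F'_m(\nu)$ for every input $\nu$. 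Second, \zcref{thm:GQSL} applied to $\E_1$ gives $\tfrac{1}{n} D_H^\varepsilon(\E_1\n\|\S_n) \to \R(\E_1)$ for every $\varepsilon \in (0,1)$, and a standard diagonal argument across this family of limits produces a sequence $\varepsilon_n \to 0$ together with inputs $\nu_n \in \DM[R\X^n]$ and POVM elements $M_n$ satisfying $\Tr[(\mathds{1}-M_n)\E_1\n(\nu_n)] \leq \varepsilon_n$ and $\sup_{\F_n \in \S_n} \Tr[M_n \F_n(\nu_n)] \leq e^{-n R_1}$.

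Setting $m = \lceil rn \rceil$, the superchannel $\Theta_n$ will be realized by the circuit: prepare the test state $\nu_n$ on fresh ancillary registers, feed the $\X^n$ part into the input slot of $\E$, measure the resulting $RA^n$ output with $\{M_n, \mathds{1}-M_n\}$, and conditionally on the outcome apply either $\tilde{\E}_{2,m}$ (outcome $M_n$) or the free channel $\F_*^{\otimes m}$ (outcome $\mathds{1}-M_n$) to the actual input register carrying $\rho_{\X^m}$. This yields
\begin{equation}
\Theta_n(\E) = \Tr[M_n \E(\nu_n)]\,\tilde{\E}_{2,m} + \Tr[(\mathds{1}-M_n)\E(\nu_n)]\,\F_*^{\otimes m},
\end{equation}
a manifestly valid c-q superchannel. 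For the direct guarantee, the triangle inequality in diamond norm gives $\|\Theta_n(\E_1\n) - \E_2^{\otimes m}\|_\Diamond \leq \|\tilde{\E}_{2,m} - \E_2^{\otimes m}\|_\Diamond + 2\varepsilon_n \to 0$. For ARNG, given any sequence $(\F_n \in \S_n)_n$, the outcome-$M_n$ weight $p_n := \Tr[M_n \F_n(\nu_n)]$ is at most $e^{-nR_1}$, and combining $\tilde{\E}_{2,m} \leq e^{m R_2}\F'_m$ with convexity of $\S_m$ (assumption (1) of \zcref{def:axioms}) sandwiches $\Theta_n(\F_n)$ pointwise below $\bigl(1 + p_n(e^{m R_2}-1)\bigr)$ times a channel in $\S_m$. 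Hence $D_{\max}(\Theta_n(\F_n)\|\S_m) \leq e^{R_2}\,e^{-n(R_1-rR_2)} \to 0$ uniformly in $(\F_n)_n$, so $(\Theta_n)_n \in \OR$.

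The main obstacle is arranging both guarantees with a single sequence $\varepsilon_n$: the diamond-norm bound forces $\varepsilon_n \to 0$, while the ARNG bound demands that the type-II exponent remain essentially $\R(\E_1)$ at that vanishing $\varepsilon_n$. Threading this needle is precisely what requires the full strength of \zcref{thm:GQSL}, namely the equality (not just inequality) of the two limits uniformly in $\varepsilon \in (0,1)$, from which the diagonal argument extracts a usable $\varepsilon_n$. Once this is in hand, taking the supremum over $r < \R(\E_1)/\R(\E_2)$ yields the claim.
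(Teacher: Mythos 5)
Your proposal is correct and follows essentially the same route as the paper: a GQSL-derived test $(\nu_n, M_n)$ for $\E_1\n$ combined with the log-robustness approximation of $\E_2^{\otimes\lceil rn\rceil}$ from \zcref{cor:log_robustness}, assembled into a measure-and-prepare superchannel whose ARNG property follows from the exponential type-II bound. The only (valid) variation is that on the fail branch you output the fixed free channel $\F_*^{\otimes m}$ and bound $D_{\max}(\Theta_n(\F_n)\|\S_m)$ via the operator inequality $\tilde{\E}_{2,m}\leq e^{mR_2}\F_m'$ together with convexity of $\S_m$, whereas the paper outputs the complementary channel $\tilde{\E}_2^{(rn)}$ from the robustness decomposition and reconstructs the optimal free channel exactly; both yield a vanishing $D_{\max}$ bound.
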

\begin{proof}
    First choose some $\delta\in(0,\min\{\R(\E_1),\R(\E_2)\})$ and define 
    \begin{eqnarray}
        r:=\frac{\R(\E_1)-\delta}{\R(\E_2)}
        \label{eq:QRT_direct_r_def}
    \end{eqnarray}
    We will show that there exists a sequence of operations $(\Theta_n)_n\in\OR$ that achieve asymptotic rate of conversion $r$. \\
    \\
    To construct this sequence, first we apply the GQSL (\Autoref{thm:GQSL}) to the channel $\E_1^{\otimes n}$ as the null hypothesis, and the set of channels $\mathcal{S}_n$ as the alternative hypothesis, for each $n$. This gives us
    \begin{align}
        \lim_{n\rightarrow\infty}\frac{1}{n}\sup_{\nu_n\in\mathcal{D}(R\mathcal{X}^n)}\inf_{\mathcal{F}_n\in \mathcal{S}_n}{D^{\varepsilon}_H(\mathcal{E}_1^{\otimes n}(\nu_n)\|\mathcal{F}_n(\nu_n))}  &=  \lim_{n\rightarrow\infty}\frac{1}{n}\sup_{\nu_n\in\mathcal{D}(R\mathcal{X}^n)}\inf_{\mathcal{F}_n\in \mathcal{S}_n}{D(\mathcal{E}_1^{\otimes n}(\nu_n)\|\mathcal{F}_n(\nu_n))} 
    \end{align}
    By \Autoref{lem:rel_entropy_exchange}, we can exchange the supremum and infimum on the right-hand side to give us $\R(\E_1)$. We can also use  \cite[Lemma S3]{hayashi_generalized_2024} on the left-hand side, to give us
    \begin{eqnarray}
        \R(\E_1)=\lim_{n\rightarrow\infty}-\frac{1}{n}\log \left( \inf_{\nu_n\in\mathcal{D}(R\mathcal{X}^n)}\inf_{\substack{0\leq\Lambda\leq\mathds{1}\\ \Tr[\Lambda\E_1^{\otimes n}(\nu_n)]\geq1-\varepsilon}} \sup_{\mathcal{F}_n\in \mathcal{S}_n}\Tr[\Lambda\mathcal{F}_n(\nu_n)]\right)
    \end{eqnarray}
    This means that, for sufficiently large $n$, 
    \begin{eqnarray}
        \inf_{\nu_n\in\mathcal{D}(R\mathcal{X}^n)}\inf_{\substack{0\leq\Lambda\leq\mathds{1}\\ \Tr[\Lambda\E_1^{\otimes n}(\nu_n)]\geq1-\varepsilon}} \sup_{\mathcal{F}_n\in \mathcal{S}_n}\Tr[\Lambda\mathcal{F}_n(\nu_n)] \leq \exp\left(-n\R(\E_1)+\frac{n\delta}{3}\right)
    \end{eqnarray}
    This means that there exists a sequence $(\varepsilon_n)_n$ of type I error parameters, with $\varepsilon_n\rightarrow0$ as $n\rightarrow\infty$, and there exists a sequence $(\nu_n)_n$ of input states and $(\Lambda_n)_n$ of POVMs such that both of the following equations hold for sufficiently large $n$:
    \begin{eqnarray}
        \label{eq:QRT_direct_I_error_bound}
        \Tr[(\mathds{1}-\Lambda_n)\E_1^{\otimes n}(\nu_n)] \leq \varepsilon_n \\
        \sup_{\F_n\in\mathcal{S}_n}\Tr[\Lambda_n\F_n(\nu_n)]\leq \exp\left(-n\R(\E_1)+\frac{n\delta}{3}\right)
        \label{eq:QRT_direct_trace_bound}
    \end{eqnarray}
    Now, by \Autoref{cor:log_robustness}, there exists a sequence $(\E_2^{(rn)})_n$ of c-q channels satisfying both of 
    \begin{eqnarray}
        \label{eq:QRT_direct_LR_bound}
        \R(\E_2)=\lim_{n\rightarrow\infty}\frac{1}{\lceil rn \rceil}D_{\max}(\E_2^{(rn)}\| \mathcal{S}_{\lceil rn\rceil}) \\
        \lim_{n\rightarrow\infty}{\norm{\E_2^{\otimes \lceil rn \rceil}-\E_2^{(rn)}}_\Diamond}=0
        \label{eq:QRT_direct_diamond_bound}
    \end{eqnarray}
    Let $\F^{(rn)}$ be an optimal channel that minimizes $\inf_{\F\in\mathcal{S}_{\lceil rn\rceil}}D_{\max}(\E_2^{(rn)}\|\F)$, and let $\tilde{\E}^{(rn)}$ be the corresponding c-q channel such that
    \begin{eqnarray}
        \frac{\E_2^{(rn)}+(e^{D_{\max}(\E_2^{(rn)}\| \mathcal{S}_{\lceil rn\rceil})}-1)\tilde{\E}_2^{(rn)}}{e^{D_{\max}(\E_2^{(rn)}\| \mathcal{S}_{\lceil rn\rceil})}} =\F^{(rn)}
        \label{eq:QRT_direct_RG_extremizer}
    \end{eqnarray}
    holds. Now, we define our superchannels $\Theta_n$ as follows:
    \begin{eqnarray}
        \Theta_n(\mathcal{N}) := \Tr[\Lambda_n\mathcal{N}(\nu_n)]\E_2^{(rn)}+\Tr[(\mathds{1}-\Lambda_n)\mathcal{N}(\nu_n)]\tilde{\E}_2^{(rn)}
        \label{eq:QRT_direct_theta_def}
    \end{eqnarray}
    We can see this map in terms of pre and post processing channels. The pre-processing is that we input $\nu_n$ into the c-q channel $\mathcal{N}$, and then measure this with respect to the POVM $\{\Lambda_n, \mathds{1}-\Lambda_n\}$. Depending on the result of the measurement, we output the channel $\tilde{\E}_2^{(rn)}$ or $\E_2^{(rn)}$. \\
    \\
    Now, we prove that $(\Theta_n)_n$ is a sequence of ARNG operations.\\
    \\
    Take any sequence $(\F_n)_n$ of free channels $\F_n\in\mathcal{S}_n$. Define 
    \begin{eqnarray}
        s_n&:=& D_{\max}(\E_2^{(rn)}\| \mathcal{S}_{\lceil rn \rceil}) \\
        t_n&:=& \Tr[\Lambda_n\F_n(\nu_n)]
    \end{eqnarray}
    Then, by (\ref{eq:QRT_direct_LR_bound}), we have for sufficiently large $n$,
    \begin{eqnarray}
        \label{eq:QRT_direct_sn}
        nr\left(\R(\E_2)+\frac{\delta}{3r}\right) \geq s_n\geq nr\left(\R(\E_2)-\frac{\delta}{3r}\right)
    \end{eqnarray}
    The right hand side goes to infinity as $n\rightarrow\infty$, using (\ref{eq:QRT_direct_r_def}). \\
    \\
    Now, by (\ref{eq:QRT_direct_trace_bound}), we have that, for sufficiently large $n$,
    \begin{eqnarray}
        t_n\leq \sup_{\F_n\in\mathcal{S}_n}\Tr[\Lambda_n\F_n(\nu_n)]\leq \exp\left(-n\R(\E_1)+\frac{n\delta}{3}\right)
        \label{eq:QRT_direct_tn}
    \end{eqnarray}
    with the right-hand side tending towards 0 as $n\rightarrow\infty$. \\
    \\
    Now, by (\ref{eq:QRT_direct_r_def}) and (\ref{eq:QRT_direct_sn}), we have that 
    \begin{eqnarray}
        e^{-s_n}
        \geq \exp\left(-n\left(r\R(\E_2)+\frac{\delta}{3}\right)\right) 
        =\exp\left(-n\left(\R(\E_1)-\frac{2\delta}{3}\right)\right)
        \label{eq:QRT_direct_sn_2}
    \end{eqnarray}
    Therefore, by (\ref{eq:QRT_direct_tn}) and (\ref{eq:QRT_direct_sn_2}), we have that 
    \begin{eqnarray}
        e^{-s_n}-t_n \geq\exp\left(-n\left(\R(\E_1)-\frac{2\delta}{3}\right)\right)-\exp\left(-n\R(\E_1)+\frac{n\delta}{3}\right) \geq 0
    \end{eqnarray}
    Now, we have, by (\ref{eq:QRT_direct_theta_def}),
    \begin{eqnarray}
        \Theta_n(\F_n)=t_n\E_2^{(rn)}+(1-t_n)\tilde{\E}_2^{(rn)}
    \end{eqnarray}
    By (\ref{eq:QRT_direct_RG_extremizer}), we have
    \begin{eqnarray}
        \frac{\E_2^{(rn)}+(e^{s_n}-1)\tilde{\E}_2^{(rn)}}{e^{s_n}} =\F^{(rn)}\in \mathcal{S}_{\lceil rn\rceil}
    \end{eqnarray}
    This means that $s=\frac{e^{-s_n}-t_n}{(e^{s_n}-1)e^{-s_n}}$ satisfies
    \begin{eqnarray}
        \frac{\Theta_n(\F_n)+s\E_2^{(rn)}}{1+s}=\F^{(rn)} \in \mathcal{S}_{\lceil rn\rceil}
        \label{eq:QRT_direct_s_bound}
    \end{eqnarray}
    So, we get
    \begin{eqnarray}
        D_{\max}(\Theta_n(\F_n)\|\mathcal{S}_{\lceil rn \rceil})
        &\overset{(a)}{\leq}&\log(1+s)\\
        &=& \log\left(\frac{1-t_n}{1-e^{-s_n}}\right)\\
        &\overset{(b)}{\rightarrow}&0
    \end{eqnarray}
    Here, (a) is due to (\ref{eq:QRT_direct_s_bound}), and (b) is due to (\ref{eq:QRT_direct_sn}) and (\ref{eq:QRT_direct_tn}).\\
    \\
    Thus, $(\Theta_n)_n$ is a sequence of ARNG operations. \\
    \\
    To prove that $(\Theta_n)_n$ achieves the required asymptotic conversion rate, note that 
    \begin{eqnarray}
        \Theta_n(\E_1\n) = \Tr[\Lambda_n\E_1\n(\nu_n)]\E_2^{(rn)}+\Tr[(\mathds{1}-\Lambda_n)\E_1\n(\nu_n)]\tilde{\E}_2^{(rn)}
    \end{eqnarray}
    So, we have that
    \begin{eqnarray}
        &&\norm{\Theta_n(\E_1\n)-\E_2^{\otimes \lceil rn\rceil}}_\Diamond  \nonumber\\
        &\overset{(a)}{\leq}& \Tr[\Lambda_n\E_1\n(\nu_n)] \norm{\E_2^{(rn)}-\E_2^{\otimes \lceil rn \rceil}}_\Diamond+ \Tr[(\mathds{1}-\Lambda_n)\E_1\n(\nu_n)] \norm{\tilde{\E}_2^{(rn)}-\E_2^{\otimes \lceil rn \rceil}}_\Diamond \nonumber\\
        &\overset{(b)}{\leq}& \Tr[\Lambda_n\E_1\n(\nu_n)] \norm{\E_2^{(rn)}-\E_2^{\otimes \lceil rn \rceil}}_\Diamond+ \Tr[(\mathds{1}-\Lambda_n)\E_1\n(\nu_n)] \left( \norm{\tilde{\E}_2^{(rn)}}_\Diamond + \norm{\E_2^{\otimes \lceil rn \rceil}}_\Diamond\right) \nonumber\\
        &\overset{(c)}{\leq}& (1-\varepsilon_n)\norm{\E_2^{(rn)}-\E_2^{\otimes \lceil rn \rceil}}_\Diamond + 2\varepsilon_n \nonumber\\
        &\overset{(d)}{\rightarrow}& 0
    \end{eqnarray}
    Here, (a) is by the triangle inequality and homogeneity, (b) is again by the triangle inequality, (c) is by (\ref{eq:QRT_direct_I_error_bound}), and (d) is by (\ref{eq:QRT_direct_diamond_bound}) and since $\varepsilon_n\rightarrow0$ as $n\rightarrow\infty$.\\
    \\
    This gives us the asymptotic rate of conversion as $r$, as given by (\ref{eq:QRT_direct_r_def}), as required.
\end{proof}
Before we prove the converse part, we need to prove that the relative entropy of resource is non-increasing under sequences of ARNG operations. The use of the diamond norm here removes the condition of asymptotic continuity on the set of asymptotically free operations that was needed in \cite[Lemma S17]{hayashi_generalized_2024}, but otherwise the same reasoning is used. We will use \Autoref{lem:resource_monotone} to bound $R(\Theta_n(\E_1\n))$ in \Autoref{lem:QRT_converse}, in order to prove the converse bound.
\begin{lemma}[Relative Entropy of Resource Decreases under Sequences of ARNG Operations]
    \label{lem:resource_monotone}
    Let $(\Theta_n)_n\in \tilde{\mathcal{O}}$ be a sequence of ARNG operations. Then for any c-q channel $\E$, we have that
    \begin{eqnarray}
        \R(\E) \geq \liminf_{n\rightarrow\infty}\frac{1}{n}R(\Theta_n(\E^{\otimes n}))
    \end{eqnarray}
\end{lemma}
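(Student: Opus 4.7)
The plan is to combine data processing for the Umegaki channel divergence under superchannels with the fact that ARNG operations map free channels approximately (in $D_{\max}$) back into the free set, so as to obtain a single-shot bound of the form $R(\Theta_n(\E^{\otimes n})) \leq R(\E^{\otimes n}) + o(n)$, from which the regularized statement then follows immediately.

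First I would pick, for each $n$, a minimizer $\F_n^\star \in \S_n$ of $R(\E^{\otimes n})$, so that $R(\E^{\otimes n}) = \sup_{\nu}D(\E^{\otimes n}(\nu)\|\F_n^\star(\nu))$; such a minimizer exists because $\S_n$ is compact (assumption (1) of \Autoref{def:axioms}) and $D$ is jointly lower semi-continuous. The ARNG property applied to the sequence $(\F_n^\star)_n$ then yields $\delta_n := D_{\max}(\Theta_n(\F_n^\star)\|\S_n) \to 0$, and by compactness this infimum is attained by some $\F_n' \in \S_n$, giving the operator inequality $\Theta_n(\F_n^\star)(\nu) \leq e^{\delta_n}\F_n'(\nu)$ for every input state $\nu$ (since the channel $D_{\max}$ divergence is the supremum of the state divergence over all inputs with reference).

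Next, since any superchannel decomposes as pre- and post-composition with CPTP maps on an enlarged system, data processing for the Umegaki relative entropy transfers to a data-processing inequality for the channel divergence under superchannels, giving
\begin{equation*}
    D(\Theta_n(\E^{\otimes n})\|\Theta_n(\F_n^\star)) \leq D(\E^{\otimes n}\|\F_n^\star) = R(\E^{\otimes n}).
\end{equation*}
The operator inequality $\Theta_n(\F_n^\star)(\nu) \leq e^{\delta_n}\F_n'(\nu)$ combined with operator monotonicity of the logarithm yields, for every input state $\nu$,
\begin{equation*}
    D(\Theta_n(\E^{\otimes n})(\nu)\|\F_n'(\nu)) \leq D(\Theta_n(\E^{\otimes n})(\nu)\|\Theta_n(\F_n^\star)(\nu)) + \delta_n.
\end{equation*}
Taking the supremum over $\nu$ on both sides and using $\F_n' \in \S_n$ as a valid candidate in the infimum defining $R(\Theta_n(\E^{\otimes n}))$ then gives
\begin{equation*}
    R(\Theta_n(\E^{\otimes n})) \leq R(\E^{\otimes n}) + \delta_n.
\end{equation*}

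Finally, dividing by $n$ and taking $\liminf$ finishes the argument, since $\delta_n/n \to 0$ by ARNG and $\frac{1}{n}R(\E^{\otimes n}) \to \R(\E)$ by definition of the regularization. The main subtle step is the passage from the $D_{\max}$ bound to the Umegaki bound via operator monotonicity of the logarithm; the rest is bookkeeping. Notice in particular that, in contrast to \cite{hayashi_generalized_2024}, we do not need any asymptotic continuity assumption on the free superchannels: because we compare channels in diamond norm rather than through their (normalized) Choi states, the standard data-processing argument for superchannels already suffices.
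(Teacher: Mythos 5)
Your proof is correct, but it takes a genuinely different and more direct route than the paper. The paper first establishes the asymptotic characterization of $\R(\E)$ via the log robustness (\zcref{cor:log_robustness}), constructs a $D_{\max}$-smoothing sequence $(\tilde{\E}_n)_n$, shows that the superchannels preserve diamond-norm closeness via the pre-/post-processing decomposition, bounds $D_{\max}(\Theta_n(\tilde{\E}_n)\|\S_n)$ through a convex-decomposition (subadditivity of robustness) argument, and finally converts back to the relative entropy of resource using \zcref{lem:log_robustness1}. You instead argue in one shot: data processing for the Umegaki \emph{channel} divergence under superchannels (which follows from the same pre-/post-processing decomposition the paper invokes, together with the state DPI), combined with the perturbation bound $D(\rho\|\sigma')\leq D(\rho\|\sigma)+\log\lambda$ whenever $\sigma\leq\lambda\sigma'$, applied with $\lambda = e^{\delta_n}$ where $\delta_n = D_{\max}(\Theta_n(\F_n^\star)\|\S_n)\to 0$ by the ARNG property applied to the sequence of minimizers $(\F_n^\star)_n$ (note the definition of ARNG here is for \emph{arbitrary} sequences of free channels, so this application is legitimate). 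All the individual steps check out: the minimizers exist by compactness and lower semi-continuity, the per-input operator inequality $\Theta_n(\F_n^\star)(\nu)\leq e^{\delta_n}\F_n'(\nu)$ follows from the channel $D_{\max}$ being a supremum over inputs (equivalently from Choi's theorem applied to $e^{\delta_n}\F_n'-\Theta_n(\F_n^\star)$), and the support condition needed for the logarithm argument is automatic since $D(\Theta_n(\E\n)\|\Theta_n(\F_n^\star))\leq R(\E\n)<\infty$ by assumption (4). Your route buys a stronger one-shot inequality $R(\Theta_n(\E\n))\leq R(\E\n)+\delta_n$ (hence also a $\limsup$ version of the lemma) and avoids \zcref{cor:log_robustness}, \zcref{lem:log_robustness1} and \zcref{lem:resource_close_sequences} entirely; the paper's longer route is essentially inherited from the Choi-state argument of \cite{hayashi_generalized_2024}, where the direct DPI is unavailable because superchannels need not act as CPTP maps on Choi states, whereas in the diamond-norm framework your shortcut is available.
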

\begin{proof}
    From \Autoref{lem:log_robustness1}, we have
    \begin{eqnarray}\label{eq:qrt_converse_monotonicity}
        \liminf_{n\rightarrow\infty}\frac{1}{n}R(\Theta_n(\E\n)) \leq \inf_{(\E_n)_n}\left\{ \liminf_{n\rightarrow\infty}\frac{1}{n}D_{\max}(\tilde{\E}_n \| \mathcal{S}_n) \bigg| \lim_{n\rightarrow\infty} \frac{1}{2} \norm{\Theta_n(\E\n)-\tilde{\E_n}}_\Diamond =0\right\}
    \end{eqnarray}
    So, it suffices to show that the right-hand side is further upper-bounded by $\R(\E)$. Using \Autoref{cor:log_robustness}, there exists a sequence $(\tilde{\E}_n)_n$ of c-q channels satisfying both of the following two equations:
    \begin{eqnarray}
        \label{eq:QRT_monotone_LR}
        \lim_{n\rightarrow\infty}\frac{1}{n}D_{\max}(\tilde{\E}_n \| \mathcal{S}_n)= \R(\E) \\
        \lim_{n\rightarrow\infty}\norm{\tilde{\E}_n-\E\n}_\Diamond=0
    \end{eqnarray}
    It is well known that we can write each superchannel $\Theta_n$ as the composition of the input channel with a pre-processing channel $\mathcal{N}_n^{pre}: \X \to \X E$ and a post-processing channel $\mathcal{N}_n^{post}: AE \to A$ (where the system E is not acted on by the channel to be processed) \cite{gour_comparison_2019}. Hence
    \begin{eqnarray}
        \norm{\Theta_n(\E\n)-\Theta_n(\tilde{\E}_n)}_\Diamond &=& \norm{\mathcal{N}_n^{post}\circ\E\n\circ\mathcal{N}_n^{pre}-\mathcal{N}_n^{post}\circ\tilde{\E}_n\circ\mathcal{N}_n^{pre}}_\Diamond \nonumber\\
        &\overset{(a)}{\leq}& \norm{\E\n\circ\mathcal{N}_n^{pre}-\tilde{\E}_n\circ\mathcal{N}_n^{pre}}_\Diamond \nonumber\\
        &=& \sup_{\nu_n\in\mathcal{D}(R\mathcal{X}^n)}\norm{\E\n\circ\mathcal{N}_n^{pre}(\nu_n)-\tilde{\E}_n\circ\mathcal{N}_n^{pre}(\nu_n)}_1 \nonumber\\
        &\overset{(b)}{\leq}& \sup_{\nu_n\in\mathcal{D}(R\mathcal{X}^n)}\norm{\E\n(\nu_n)-\tilde{\E}_n(\nu_n)}_1 \nonumber\\
        &=& \norm{\E\n-\tilde{\E}_n}_\Diamond
    \end{eqnarray}
    Here, (a) is by the data processing inequality for the diamond norm, and (b) is since the inputs of the channels $\E\n$ and $\tilde{\E}_n$ are restricted by the pre-processing channel. Therefore,
    \begin{eqnarray}
        \lim_{n\rightarrow\infty}\norm{\Theta_n(\tilde{\E}_n)-\Theta_n(\E\n)}_\Diamond=0
        \label{eq:QRT_monotone_diamond}
    \end{eqnarray}
    This argument allows us to remove the assumption of asymptotic continuity that was required in \cite{hayashi_generalized_2024}. \\
    \\
    Define $r_n:=e^{D_{\max}(\tilde{\E}_n\|\mathcal{S}_n)}-1$. So, there exists a c-q channel $\E_n'$ such that 
    \begin{eqnarray}
        \frac{\tilde{\E}_n+r_n\E_n'}{1+r_n}=\F_n \in \mathcal{S}_n
    \end{eqnarray}
    is a free channel. Since $(\Theta_n)_n$ is a sequence of ARNG operations, we have $D_{\max}(\Theta_n(\F_n)\| \mathcal{S}_n)  \rightarrow0$. Let
    \begin{eqnarray}
        \label{eq:QRT_monotone_ARNG}
        r'_n:=e^{D_{\max}\left(\Theta_n(\F_n)\| \mathcal{S}_n\right)}-1
    \end{eqnarray}
    Now, there exists another c-q channel $\E_n''$ such that 
    \begin{eqnarray}
        \frac{\Theta_n\left(\frac{\tilde{\E}_n+r_n\E_n'}{1+r_n}\right)+r'_n\E_n''}{1+r'_n}= \frac{\Theta_n(\tilde{\E}_n)+r_n\Theta_n(\E_n')+(1+r_n)r_n'\E_n''}{(1+r_n)(1+r_n')}
    \end{eqnarray}
    is a free c-q channel. Thus,
    \begin{eqnarray}
        D_{\max}(\Theta_n(\tilde{\E}_n)\| \mathcal{S}_n)\leq D_{\max}(\Theta_n(\F_n)\| \mathcal{S}_n)+D_{\max}(\tilde{\E}_n\| \mathcal{S}_n) 
    \end{eqnarray}
    Hence, by (\ref{eq:QRT_monotone_LR}) and (\ref{eq:QRT_monotone_ARNG}), we have that
    \begin{eqnarray}
        \liminf_{n\rightarrow\infty}\frac{1}{n}D_{\max}(\Theta_n(\tilde{\E}_n) \| \mathcal{S}_n) \leq \liminf_{n\rightarrow\infty}\frac{1}{n}[D_{\max}(\Theta_n(\F_n)\| \mathcal{S}_n)+D_{\max}(\tilde{\E}_n\| \mathcal{S}_n) ]=\R(\E)
    \end{eqnarray}
    Since the sequence $(\Theta_n(\tilde{\E}_n))_n$ satisfies (\ref{eq:QRT_monotone_diamond}) it is a viable sequence for the right-hand side of \eqref{eq:qrt_converse_monotonicity}, and so we get that
    \begin{eqnarray}
        \inf_{(\E_n)_n}\left\{ \liminf_{n\rightarrow\infty}\frac{1}{n}D_{\max}(\tilde{\E}_n \| \mathcal{S}_n)\,\bigg| \,\lim_{n\rightarrow\infty} \frac{1}{2} \norm{\Theta_n(\E\n)-\tilde{\E_n}}_\Diamond =0\right\} \leq \R(\E)
    \end{eqnarray}
    as required.
\end{proof}
Finally we prove the converse, which is analogous to \cite[Proposition S18]{hayashi_generalized_2024}.
\begin{lemma}(Converse Bound on the Asymptotic Conversion Rate)
    \label{lem:QRT_converse}
    For any c-q channels $\E_1$ and $\E_2$, with $\R(\E_2)>0$, it holds that 
    \begin{eqnarray}
        r(\E_1\rightarrow\E_2)\leq \frac{\R(\E_1)}{\R(\E_2)}
    \end{eqnarray}
\end{lemma}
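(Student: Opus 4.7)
The proof plan is to combine the monotonicity statement of \Autoref{lem:resource_monotone} with the continuity-type statement of \Autoref{lem:resource_close_sequences}, in a standard way. Fix any rate $r > 0$ that is achievable, i.e., fix a sequence of ARNG operations $(\Theta_n)_n \in \tilde{\mathcal{O}}$ with $\limsup_{n\to\infty}\norm{\Theta_n(\E_1^{\otimes n}) - \E_2^{\otimes \lceil rn \rceil}}_{\Diamond} = 0$. Our aim is to show $r \leq R^{\infty}(\E_1) / R^{\infty}(\E_2)$; passing to the supremum over achievable $r$ then yields the claim.

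First I would invoke \Autoref{lem:resource_monotone}, which states that the regularized relative entropy of resource is non-increasing under sequences of ARNG operations, to obtain
\begin{equation}
R^{\infty}(\E_1) \;\geq\; \liminf_{n \to \infty} \tfrac{1}{n}\, R\bigl(\Theta_n(\E_1^{\otimes n})\bigr).
\end{equation}
Next, both $\Theta_n(\E_1^{\otimes n})$ and $\E_2^{\otimes \lceil rn \rceil}$ are c-q channels $\X^{\lceil rn \rceil} \to A^{\lceil rn \rceil}$, and their diamond distance vanishes as $n \to \infty$. Applying \Autoref{lem:resource_close_sequences} (used with the index $m_n := \lceil rn \rceil$, which is licit because the proof of that lemma only relies on the bound $\kappa_{m_n} \leq -m_n \log \lambda_{\min}(\F_*(\Phi))$ being linear in the system size $m_n$, not on $m_n$ ranging over all integers) gives
\begin{equation}
\liminf_{n \to \infty} \tfrac{1}{m_n}\, R\bigl(\Theta_n(\E_1^{\otimes n})\bigr) \;=\; \liminf_{n \to \infty} \tfrac{1}{m_n}\, R\bigl(\E_2^{\otimes m_n}\bigr) \;=\; R^{\infty}(\E_2).
\end{equation}

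Since $m_n / n \to r$ as $n \to \infty$, a standard liminf argument (multiplying by a sequence with a positive limit preserves $\liminf$) yields
\begin{equation}
\liminf_{n \to \infty} \tfrac{1}{n}\, R\bigl(\Theta_n(\E_1^{\otimes n})\bigr) \;=\; r \cdot \liminf_{n \to \infty} \tfrac{1}{m_n}\, R\bigl(\Theta_n(\E_1^{\otimes n})\bigr) \;=\; r \cdot R^{\infty}(\E_2).
\end{equation}
Combining with the monotonicity bound above gives $R^{\infty}(\E_1) \geq r \cdot R^{\infty}(\E_2)$, and since $R^{\infty}(\E_2) > 0$ by assumption, we conclude $r \leq R^{\infty}(\E_1)/R^{\infty}(\E_2)$. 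Finally, taking the supremum over all rates $r$ achievable by some ARNG sequence $(\Theta_n)_n$ delivers the desired upper bound on $r(\E_1 \to \E_2)$.

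I do not expect any serious obstacle: both of the ingredients (monotonicity under ARNG operations and invariance of the regularized relative entropy of resource under asymptotically identical c-q channel sequences) have already been established, so the only minor care point is making sure the reindexing by $m_n = \lceil rn \rceil$ in \Autoref{lem:resource_close_sequences} is legitimate. This is straightforward because the proof of that lemma goes through verbatim when one replaces $n$ by any diverging sequence of system sizes, the key bound $\kappa_{m_n} / m_n = O(1)$ still being available through assumption (4) of \Autoref{def:axioms}.
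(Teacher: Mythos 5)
Your proof is correct and follows essentially the same route as the paper's: monotonicity of the regularized relative entropy of resource under ARNG operations (\Autoref{lem:resource_monotone}), followed by \Autoref{lem:resource_close_sequences} applied to the asymptotically identical sequences $\Theta_n(\E_1^{\otimes n})$ and $\E_2^{\otimes \lceil rn\rceil}$, and the identification of the resulting limit with $r\,\R(\E_2)$. Your explicit justification of the reindexing by $m_n=\lceil rn\rceil$ is a welcome detail that the paper's proof leaves implicit.
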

\begin{proof}
    Choose an achievable rate $r<r(\E_1\rightarrow\E_2)$. Then we have a sequence of ARNG operations $(\Theta_n)_n\in\OR$ satisfying
    \begin{eqnarray}
        \limsup_{n\rightarrow\infty}\norm{\Theta_n(\E_1\n)-\E_2^{\otimes \lceil rn \rceil}}_\Diamond=0
        \label{eq:QRT_converse_diamond}
    \end{eqnarray}
    Then, we have that
    \begin{eqnarray}
        \R(\E_1)&\overset{(a)}{\geq}&\liminf_{n\rightarrow\infty}\frac{1}{n}R(\Theta_n(\E_1^{\otimes n})) \\
        &\overset{(b)}{=}& \lim_{n\rightarrow\infty}\frac{1}{n}R(\E_2^{\otimes \lceil rn \rceil}) \\
        &\overset{(c)}{=}& r\R(\E_2)
    \end{eqnarray}
    Here, (a) is by \Autoref{lem:resource_monotone}, (b) is by (\ref{eq:QRT_converse_diamond}) and \Autoref{lem:resource_close_sequences}, and (c) can be easily seen from the definition of $R^{\infty}$.
    This gives our required bound.
\end{proof}
\noindent Hence, by combining \Autoref{lem:QRT_converse} and \Autoref{lem:QRT_direct}, we have proven \Autoref{thm:QRT}.
\subsection{Comparison to QRTs via Choi states}\label{sec:QRT_comparison}
As already explained in \zcref{sec:intro_QRT} our treatment of QRTs for C-Q channels is different from the treatment in \cite{hayashi_generalized_2024}. In this section we present some explicit examples to show how our QRTs for c-q channels differ from the construction based on Choi states from \cite{hayashi_generalized_2024}. 
Throughout this section, we fix a very simple QRT for c-q channels with the following set of free channels: 
Let $\mathcal{X}=\{0,1\}$, and let $A=\mathbb{C}^2$, with orthonormal basis $\{\ket{0}, \ket{1}\}$. Let $\F:\mathcal{X}\rightarrow A$ be the completely depolarizing channel (i.e., $\F(\nu)=\1/2$ for all $\nu\in\mathcal{D}(\mathcal{X})$). We define the sets of free channels $\mathcal{S}_n\subset\CPTP(\mathcal{X}^n\rightarrow A^n)$ to contain only the element $\F\n$. It is easy to verify that $(\mathcal{S}_n)_n$ satisfies assumptions (1) to (4) in \Autoref{def:axioms}. \zcref{lem:divergence_input_reduction} also directly implies that for every $\E \in CQ$: $R^{\infty}(\E) = R(\E)$.

We first provide an example that illustrates that using the diamond norm as a distance measure between channels (when considering whether one resource can be approximately transformed into another) leads to a relative entropy of resource ({\em{viz.}}~$R(\E) = D(\E\|\S_1)$) which is different from that considered in \cite{hayashi_generalized_2024} ({\em{viz.}} $\tilde{R}(\E) = \inf_{\F \in \S} D(J(\E)\|J(\F))$). In fact, by definition, we must have that $R(\E)\geq \tilde{R}(\E)$ for any channel $\E$, since the former has an additional optimization over input states.

%eads to channels having differing amounts of resource, i.e., the relative entropy of resource we consider ($R(E) = D(\E\|\S_1)$) is different from the relative entropy of resource considered in \cite{hayashi_generalized_2024} (which is $\tilde{R}(\E) = \inf_{\F \in \S} (J(\E)\|J(\F))$). 

\begin{example}[Relative entropy of resource can be different]
    Let $\X=\{0,1\}$, and let $A=\mathbb{C}^2$ have orthonormal basis $\{\ket{0},\ket{1}\}$. We define $\E^{(1)}$, $\E^{(2)} \in \CQ$ as follows:
    \begin{eqnarray}
        \E^{(1)}(\ketbra{x}{x})&:=& \begin{cases}
        \ketbra{0}{0}\:\,\text{if}\:\,x = 0 \\
        \IdentityMatrix/2\:\,\text{if}\:\,x = 1
        \end{cases}\\
        \E^{(2)}(\ketbra{x}{x})&:=&\ketbra{0}{0}\:\forall\, x\in\mathcal{X} \\
    \end{eqnarray}
    Then for $\S_1=\{\F\}$, where $\F:\X\to A$ is the completely depolarizing channel, we easily find that
    \begin{equation}
    R(\E^{(1)}) = D(\E^{(1)}\|\F) = D(\E^{(1)}(\ketbra{0}{0})\|\F(\ketbra{0}{0})) = D(\ketbra{0}{0} \| \IdentityMatrix /2) = \log(2)
    \end{equation}
    whereas by the direct sum property of $D$
    \begin{equation}
    {\tilde{R}}(\E^{(1)}) = (D(J(\E^{(1)})\|J(\F)) = {1 \over 2} D(\ketbra{0}{0} \| \IdentityMatrix / 2) + {1 \over 2} D(\IdentityMatrix/2 \| \IdentityMatrix /2) = {1 \over 2} \log(2)\,.
    \end{equation}
    Additionally, these two different resource quantifiers cannot be scalar multiples of each other since
    \begin{equation}
    R(\E^{(2)}) = D(\E^{(2)}\|\F) = \log(2)
    \end{equation}
    and
    \begin{equation}
        \tilde{R}(\E^{(2)})= D(J(\E^{(2)})\|J(\F)) = D(\ketbra{0}{0}\|\IdentityMatrix/2) = \log(2)\,
    \end{equation}
    and so for the channel $\E^{(2)}$ both constructions give the same relative entropy of resource.
\end{example}

In the next example we show that the restriction to what they consider {\em{asymptotically continuous}} free operations is a real restriction, i.e., there are superchannels that do not satisfy this property. This demonstrates that our result can be applied to a wider range of operations than the analogous result in \cite[Lemma S12]{hayashi_generalized_2024}. 

The authors of~\cite{hayashi_generalized_2024} define asymptotic continuity for superchannels $(\Theta_n)_n$ as follows: For every sequence of channels $\E_n$ and $\tilde{\E}_n$ such that 
\begin{equation}
    \lim_{n \to \infty} \norm{J(\E_n) - J(\tilde{\E}_n)}_1 = 0
\end{equation}
it holds that
\begin{equation}
    \lim_{n \to \infty} \norm{J(\Theta_n(\E_n)) - J(\Theta_n(\tilde{\E}_n))}_1 = 0\,.
\end{equation}
The idea of our example is that we consider two sequences of channels (say, $(\E_n^{(1)})_n$ and $(\E_n^{(2)})_n$) which we choose in such a way that their outputs for each $n$ are orthogonal for one specific input state, but are equal for all other inputs. This will imply that the trace distance between their Choi states (which can be seen as an average-case trace distance between their outputs) vanishes in the asymptotic limit ($n \to \infty$). However, if we apply a superchannel that picks out this specific input, then the trace distance between the Choi states of the channels which result from the action of the superchannel on $\E_n^{(1)}$ and $\E_n^{(2)}$ no longer vanishes in the asymptotic limit. We show below that, for a certain choice of free channels, such a superchannel is resource non-generating for all $n$, and hence a sequence of such superchannels is ARNG. Hence, asymptotic continuity does not hold.   %norm of the difference in the 
%Choi states of two sequences of channels can tend to zero, but the output for each pair of channels can differ a lot for one specific input. Therefore, if we apply a superchannel that picks out this specific input, then the Choi states should no longer converge.

\begin{example}[Not every ARNG sequence of superchannels is asymptotically continuous in the sense of \cite{hayashi_generalized_2024}]
We now define two sequences of c-q channels in the following way:
    \begin{eqnarray}
        \E_n^{(1)}(\ketbra{x}{x})&:=&\ketbra{0}{0}\n\:\forall\, x\in\mathcal{X}^n \\
        \E_n^{(2)}(\ketbra{x}{x})&:=& \begin{cases}
        \ketbra{0}{0}\n\:\text{if}\:x\neq (1,1,1,..., 1) \\
        \ketbra{1}{1}\n\:\text{if}\:x=(1,1,1,..., 1)
        \end{cases}
    \end{eqnarray}
    These two c-q channels have the following Choi states:
    \begin{eqnarray}
        J(\E_n^{(1)})&=&\tau_n\otimes\ketbra{0}{0}\n \\
        J(\E_n^{(2)})&=&(\tau_n-2^{-n}\ketbra{1}{1}\n)\otimes\ketbra{0}{0}\n+2^{-n}\ketbra{1}{1}\n\otimes\ketbra{1}{1}\n 
    \end{eqnarray}
    where $\tau_n:=\1_{\mathcal{X}^n}/2^n$. Hence,
    \begin{eqnarray}
        \norm{ J(\E_n^{(1)})- J(\E_n^{(2)})}_1=\norm{2^{-n}(\ketbra{1}{1}\n\otimes\ketbra{0}{0}\n-\ketbra{1}{1}\n\otimes\ketbra{1}{1}\n)}_1=2^{1-n}\rightarrow0
    \end{eqnarray}
    as $n\rightarrow\infty$. \\
    \\
    We introduce the channel $\mathcal{N}^{pre}_n:\mathcal{X}^n\rightarrow\mathcal{X}^n$, where $\mathcal{N}^{pre}_n(\ketbra{x}{x})=\ketbra{1}{1}\n$ for all $x\in\mathcal{X}^n$. We define the corresponding superchannel $\Theta_n:\CPTP(\mathcal{X}^n\rightarrow A^n)\rightarrow \CPTP(\mathcal{X}^n\rightarrow A^n)$ by $\Theta_n(\mathcal{N}):=\mathcal{N}\circ\mathcal{N}_n^{pre}$. \\
    \\
    Since $\Theta_n(\F\n)=\F\n\circ\mathcal{N}_n^{pre}=\F\n$, we have that $D_{\max}(\Theta_n(\F\n) \|\S_n)=0$, and hence $(\Theta_n)_n$ is a sequence of ARNG operations (in fact, each $\Theta_n$ is resource non-generating). \\
    \\
    However, we have that 
    \begin{eqnarray}
        J(\Theta_n(\E_n^{(1)}))=\tau_n\otimes\ketbra{0}{0}\n \\
        J(\Theta_n(\E_n^{(1)}))=\tau_n\otimes\ketbra{1}{1}\n
    \end{eqnarray}
    So, 
    \begin{eqnarray}
        \norm{J(\Theta_n(\E_n^{(1)}))-J(\Theta_n(\E_n^{(2)}))}_1 = \norm{\tau_n\otimes(\ketbra{0}{0}\n-\ketbra{1}{1}\n)}_1 = 2
    \end{eqnarray}
    for all $n$. \\
    \\
    Hence, the sequence of superchannels $(\Theta_n)_n$ is ARNG but not asymptotically continuous.
\end{example}

\subsection{Examples of QRTs for C-Q Channels}\label{sec:QRT_examples}
We now present some examples of QRTs for c-q channels to which our result, \Autoref{thm:QRT}, applies.\\
One natural example of free c-q channels satisfying the assumptions in \Autoref{def:axioms} is the class of c-q channels whose outputs are bipartite quantum states which are separable. It is defined as follows:
%is the following: \textcolor{cyan}{Not sure what to call this example}
\begin{example}[c-q channels yielding separable output states]
    Let $\S_n\subset\CPTP(\mathcal{X}^n\rightarrow A^nB^n)$ be the set of c-q channels 
    \begin{eqnarray}
        \S_n:=\Set{\F\in\CPTP(\mathcal{X}^n\rightarrow A^nB^n) | \F(\ketbra{x}{x})\in\SEP(A^n:B^n)\,\forall x\in\mathcal{X}^n }
    \end{eqnarray}
    Here, $\SEP(A:B)$ is the set of all quantum states that are separable across the bipartition $A:B$ of the system $AB$, i.e.
    \begin{eqnarray}
        \rho_{AB}\in\SEP(A:B) \iff \rho_{AB}=\sum_{y\in\mathcal{Y}}{p(y)\sigma_A^{(y)}\otimes\varphi_B^{(y)}}
    \end{eqnarray}
    for some $\sigma_A^{(y)}\in\mathcal{D}(A)$, $\varphi_B^{(y)}\in\mathcal{D}(B)$, and  some probability distribution $p:\mathcal{Y}\rightarrow[0,1]$. \\
    \\
    It is easy to verify that $(\S_n)_n$ satisfies all the assumptions in \Autoref{def:axioms}. \\
    
    If $\abs{\mathcal{X}}=1$, then this construction reduces to a QRT for entanglement of states. This is because a channel with only a single possible input state is equivalent to a quantum state (namely, the corresponding unique output state). Moreover, the reference system does not play a role in this case, since  if $\X=\{0\}$, then for any input $\nu_n=\nu_R\otimes\ketbra{0}{0}\n\in\mathcal{D}(R\mathcal{X}^n)$ and any c-q channel $\E_n\in\CPTP(\X^n\to A^nB^n)$, we have that $\E_n(\nu_n)=\nu_R\otimes\E(\ketbra{0}{0}\n)$. Since $D$ and $D_{\max}$ are faithful and additive under tensor products, we can thus ignore the reference system. Hence, this example of a QRT can be seen as an extension of the QRT for entanglement to c-q channels. 
    
    Lami and Regula proved  in~\cite{lami_no_2023} that for mixed states entanglement is not reversible under the class of RNG operations\footnote{Recall that these are operations that map free objects to free objects for every $n$.}. This result, along with the above-mentioned relation between the QRTs, imply that also the QRT for c-q channels cannot be reversible under just RNG operations. Hence, ARNG operations seem to be again the smallest meaningful class of operations under which reversibility holds.
    
    \end{example}

In fact this construction can be applied to any resource theory for quantum states that satisfies the original Brandao-Plenio axioms \cite{brandao_reversible_2015}, in particular this includes the resource theories of coherence, magic, athermality and many others (see e.g. \cite{chitambar_quantum_2019} for an overview):
    \begin{example}
        Let $(S_n \subset\DM[\HS\n])_n$ be any sequence of sets that satisfy the following conditions: 
        \begin{enumerate}
            \item Each set $S_n$ is closed and convex as a subset of $\DM[\HS\n]$.
            \item The sets $(S_n)$ are closed under tensor products, i.e., if $σ_n \in S_n$ and $σ_m \in S_m$ then $σ_n \otimes σ_m \in S_{n + m}$. 
            \item The set $S_1$ contains a full-rank state.
            \item For any state $σ_n \in S_{n}$ and a permutation $\pi \in \Sn$ with corresponding representation $P_{\HS}(\pi)$ that permutes the $n$ copies of $\HS$, also $P_{\HS}(\pi)σ_n P_{\HS}(\pi)^\dagger \in S_n$.
        \end{enumerate}
        Then, the following set of c-q channels satisfies the assumptions of \zcref{def:axioms}, and hence serves as the set of free channels for a reversible QRT in the sense described above:
        \begin{equation} \S_n \coloneqq \Set{\F\in\CPTP(\mathcal{X}^n\rightarrow A^nB^n) | \F(\ketbra{x}{x})\in S_n\,\forall x\in\mathcal{X}^n }
        \end{equation}
    \end{example}

    %Of course
    Note that the applicability of our statements is not limited to such constructions. To illustrate this, %let us consider the free set of replacer channels, which will lead us to the resource theory of classical capacity for c-q channels, in particular the (regularized) relative entropy of resource of a channel will be equal to it's capacity:
    let us consider the free set to consist of all replacer channels. This leads us to the QRT for the (classical) capacity of c-q channels, that is a QRT in which the (regularized) relative entropy of resource of a c-q channel is given by its capacity to transmit classical information:
    \begin{example}[Resource theory of classical capacity for c-q channels]
        Consider the resource theory where the free channels are all replacer channels, i.e.
        \begin{equation}
            \S_n \coloneqq \Set{\F \in \CQ[\X^n \to A^n] | \exists\, σ_n \in \DM[A^n]: \F(\ketbra{x}{x}) = σ_n \,\forall x \in \X^n}
        \end{equation}
        For any probability distribution $p_{\X}:=\{p(x)\}_{x \in \X}$, let $\omega_{R\X}^{(p)} \coloneqq \sum_{x \in \X} p(x) \ketbra{xx}{xx}_{R \X}$.
        One then finds that for any channel $\E:\X\to A$, by \zcref{lem:optimal_cq_input_state}, the relative entropy of resource (see (\ref{eq:rel_entropy_resource_def})) satisfies:
        \begin{align}
            R(\E) := D(\E\|\S_1) &= \sup_{\nu_{R\X} \in \DM[R\X]} \inf_{\F \in \S_1} D(\E(\nu)\|\F(\nu)) = \sup_{p_{\X}} \inf_{\F \in \S_1} D(\E(\omega_{R\X}^{(p)})\|\F(\omega_{R\X}^{(p)})) \nonumber\\ 
            &= \sup_{p_{\X}} \inf_{σ_1 \in \DM[A]} D(\E(\omega_{R\X}^{(p)})\| \omega_{R}^{(p)} \otimes σ_1) = \sup_{p_{\X}} I(R:A)_{\E(\omega_{R \X}^{(p)})} \equiv C(\E),
        \end{align}
        where $C(\E)$ is the capacity of $\E$ \cite{holevo_capacity_1998,schumacher_sending_1997}. The intuition behind this relation is that every c-q channel with zero capacity is a replacer channel, and so the sets $(\S_n)_n$ are exactly the free sets in the QRT of capacity for c-q channels. Moreover, the capacity of c-q channels is known to be additive \cite{wilde_strong_2014}, and so $R^{\infty}(\E) = R(\E)$.
        In particular, our \zcref{thm:QRT} then implies that under ARNG operations this QRT is reversible. This is essentially a reverse Shannon theorem for c-q channels. To see this, note that the capacity can be defined as the rate $r(\E \to \id_{\X})$ under local pre- and post-processing operations (which can be easily seen to preserve replacer channels). The statement that the reverse is possible, i.e.\ the rate $r(\id_{\X} \to \E)$ is given by the inverse of the capacity, is known as a reverse Shannon theorem. This is usually studied and shown in a setting where the set of allowed (i.e.\ free) operations for this reverse task is given by local operations together with pre-shared entanglement \cite{berta_quantum_2011, bennett_quantum_2014}, which are also easily seen to be a subclass of RNG operations for the free set of replacer channels.
        
        Note that in \cite{hayashi_generalized_2024} the authors also consider these sets of free channels $(\S_n)_n$ as a particular example. However, their construction based on Choi states leads to a different theory, where the input distribution $p_{\X}$ from above is fixed to be a uniform distribution. In this case, the corresponding relative entropy of resource ($\tilde{R}(\E)$) is also no longer the capacity of the c-q channel $\E$.
    \end{example}

\appendix
\section{Proofs of Technical Lemmas} \label{Appendix}
\begin{lemma}[name=Input reduction for c-q channels with divergences, store={stored:lem:divergence_input_reduction}, label={lem:divergence_input_reduction}, restate-keys={manual-num={\ref{lem:divergence_input_reduction}}}]
Let $\mathcal{E},\mathcal{F}:\mathcal{X}\rightarrow A$ be c-q channels. Let $\mathbf{D}$ be a quantum divergence that is jointly quasi-convex, sub-additive under tensor products, and faithful. Then
\begin{equation}
    \sup_{\nu\in\mathcal{D}(R\mathcal{X})}{\mathbf{D}(\mathcal{E}(\nu)\|\mathcal{F}(\nu))} =  \sup_{x\in\mathcal{X}}{\mathbf{D}(\mathcal{E}(\ketbra{x}{x})\|\mathcal{F}(\ketbra{x}{x}))}
\end{equation}
\end{lemma}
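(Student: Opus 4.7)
The direction $\sup_{\nu\in\mathcal{D}(R\mathcal{X})}\mathbf{D}(\mathcal{E}(\nu)\|\mathcal{F}(\nu)) \geq \sup_{x\in\mathcal{X}}\mathbf{D}(\mathcal{E}(\ketbra{x}{x})\|\mathcal{F}(\ketbra{x}{x}))$ is immediate by restricting the supremum on the left-hand side to product states of the form $\omega_R \otimes \ketbra{x}{x}$ with trivial reference system (or, equivalently, to the input states $\ketbra{x}{x}$ directly). So the content of the lemma is the opposite inequality, which I plan to establish pointwise: for \emph{every} $\nu_{R\X} \in \mathcal{D}(R\X)$,
\begin{equation}
\mathbf{D}(\mathcal{E}(\nu_{R\X})\|\mathcal{F}(\nu_{R\X})) \leq \sup_{x\in\mathcal{X}}\mathbf{D}(\mathcal{E}(\ketbra{x}{x})\|\mathcal{F}(\ketbra{x}{x})).
\end{equation}

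The key structural observation is that, by the c-q form of the channels, if we write $\mathcal{E}(\ketbra{x}{x}) = \omega_x$ and $\mathcal{F}(\ketbra{x}{x}) = \tau_x$, then both output states decompose as a convex combination \emph{with the same classical index $x$ and the same reduced states on $R$}:
\begin{equation}
(\id_R\otimes\mathcal{E})(\nu_{R\X}) = \sum_{x} p(x)\, \sigma_R^{(x)} \otimes \omega_x, \qquad (\id_R\otimes\mathcal{F})(\nu_{R\X}) = \sum_{x} p(x)\, \sigma_R^{(x)} \otimes \tau_x,
\end{equation}
where $p(x) = \Tr[(\IdentityMatrix_R\otimes\ketbra{x}{x})\nu_{R\X}]$ and $\sigma_R^{(x)} = \Tr_{\X}[(\IdentityMatrix_R\otimes\ketbra{x}{x})\nu_{R\X}]/p(x)$ (omitting indices with $p(x)=0$). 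This holds because the projective measurement in the $\{\ket{x}\}$ basis is part of both $\mathcal{E}$ and $\mathcal{F}$.

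The final step is then to apply the three hypotheses on $\mathbf{D}$ in turn. Joint quasi-convexity of $\mathbf{D}$ gives
\begin{equation}
\mathbf{D}(\mathcal{E}(\nu_{R\X})\|\mathcal{F}(\nu_{R\X})) \leq \max_{x:\,p(x)>0}\mathbf{D}\bigl(\sigma_R^{(x)}\otimes\omega_x \,\big\|\, \sigma_R^{(x)}\otimes\tau_x\bigr).
\end{equation}
Sub-additivity under tensor products then gives $\mathbf{D}(\sigma_R^{(x)}\otimes\omega_x\|\sigma_R^{(x)}\otimes\tau_x) \leq \mathbf{D}(\sigma_R^{(x)}\|\sigma_R^{(x)}) + \mathbf{D}(\omega_x\|\tau_x)$, and faithfulness kills the first term, leaving $\mathbf{D}(\omega_x\|\tau_x) = \mathbf{D}(\mathcal{E}(\ketbra{x}{x})\|\mathcal{F}(\ketbra{x}{x}))$. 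Bounding by $\sup_{x\in\X}$ and then taking $\sup_\nu$ on the left-hand side yields the claim. There is no real obstacle here; the only minor bookkeeping issue is handling $x$'s with $p(x)=0$, which we just omit from the decomposition, and ensuring we have not implicitly assumed data processing (we have not — only the three listed properties are used).
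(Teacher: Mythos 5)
Your proposal is correct and follows essentially the same route as the paper's proof: the same decomposition of $\mathcal{E}(\nu)$ and $\mathcal{F}(\nu)$ as a convex combination over $x$ with common reduced states $\sigma_R^{(x)}$, followed by joint quasi-convexity, sub-additivity under tensor products, and faithfulness in the same order. Your explicit handling of the $p(x)=0$ terms is a minor tidiness improvement over the paper's presentation, but the argument is otherwise identical.
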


\begin{proof}
Clearly, by elementary properties of suprema,
\begin{equation}
    \sup_{\nu\in\mathcal{D}(R\mathcal{X})}{\mathbf{D}(\mathcal{E}(\nu)\|\mathcal{F}(\nu))} \geq  \sup_{x\in\mathcal{X}}{\mathbf{D}(\mathcal{E}(\ketbra{x}{x})\|\mathcal{F}(\ketbra{x}{x}))}
\end{equation}
So, it remains to show the reverse inequality. \\
\\
Fix some (arbitrary) $\nu\in\mathcal{D}(R\mathcal{X})$. Then, we define
\begin{equation}
    \tr_{\mathcal{X}}\left[\nu\ketbra{x}{x}\right] := p(x,\nu)\rho_R^{(x)}
\end{equation}
where $\rho_R^{(x)}$ is a state in system $R$ and $p(x,\nu)$ is a normalization constant satisfying $p(x,\nu)\geq0$ and $\sum_{x\in \mathcal{X}}p(x,\nu)=1$, since 
\begin{equation}
    1=\tr[\nu]=\tr_{R}\left[\sum_{x\in\mathcal{X}}\tr_{\mathcal{X}}[\nu\ketbra{x}{x}]\right] = \sum_{x\in\mathcal{X}}\tr_{R}[p(x,\nu)\rho_R^{(x)}]=\sum_{x\in\mathcal{X}}p(x,\nu)
\end{equation}
Now, we have 
\begin{eqnarray}
    \mathcal{E}(\nu)=\sum_{x\in\mathcal{X}}p(x,\nu)\rho_R^{(x)}\otimes \mathcal{E}(\ketbra{x}{x}) \\
    \mathcal{F}(\nu)=\sum_{x\in\mathcal{X}}p(x,\nu)\rho_R^{(x)}\otimes \mathcal{F}(\ketbra{x}{x})
\end{eqnarray}
So, we get 
\begin{eqnarray}
    \mathbf{D}(\mathcal{E}(\nu)\|\mathcal{F}(\nu)) 
    &=& \mathbf{D}\left( \sum_{x\in\mathcal{X}}p(x,\nu)\rho_R^{(x)}\otimes \mathcal{E}(\ketbra{x}{x}) \:\middle\|\:\sum_{x\in\mathcal{X}}p(x,\nu)\rho_R^{(x)}\otimes \mathcal{F}(\ketbra{x}{x})\right) \nonumber\\
    &\overset{(a)}{\leq}&\max_{x\in\mathcal{X}}\mathbf{D}(\rho_R^{(x)}\otimes \mathcal{E}(\ketbra{x}{x})\|\rho_R^{(x)}\otimes \mathcal{F}(\ketbra{x}{x})) \nonumber\\
    &\overset{(b)}{\leq}& \max_{x\in\mathcal{X}}\left(\mathbf{D}(\rho_R^{(x)}\|\rho_R^{(x)})+\mathbf{D}\left(\mathcal{E}(\ketbra{x}{x})\|\mathcal{F}(\ketbra{x}{x})\right)\right)\nonumber\\
    &\overset{(c)}{=}& \max_{x\in\mathcal{X}}\mathbf{D}\left(\mathcal{E}(\ketbra{x}{x})\|\mathcal{F}(\ketbra{x}{x})\right)
\end{eqnarray}
Here, (a) is by joint quasi-convexity, (b) is by sub-additivity under tensor products, and (c) uses that the divergence is faithful.\\
\\
Now, we take the supremum over all input states $\nu$, to obtain
\begin{equation}
\begin{split}
    \sup_{\nu\in\mathcal{D}(R\mathcal{X})} \mathbf{D}(\mathcal{E}(\nu)\|\mathcal{F}(\nu))
    &\leq \sup_{x\in\mathcal{X}}\mathbf{D}\left(\mathcal{E}(\ketbra{x}{x})\|\mathcal{F}(\ketbra{x}{x})\right)
\end{split}
\end{equation}
which gives our desired result.
\end{proof}

\begin{lemma}[name=Input reduction for c-q channels with diamond norm, 
store={stored:lem:diamond_input_reduction}, label={lem:input_reduce_diamond}, restate-keys={manual-num={\ref{lem:input_reduce_diamond}}}]
Let $\mathcal{E},\mathcal{F}:\mathcal{X}\rightarrow A$ be c-q channels. Then
\begin{equation}
    \norm{\mathcal{E}-\mathcal{F}}_{\Diamond} =  \sup_{x\in\mathcal{X}}\norm{\mathcal{E}(\ketbra{x}{x}) - \mathcal{F}(\ketbra{x}{x})}_1
\end{equation}
\end{lemma}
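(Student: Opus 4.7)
The plan is to prove this by showing both directions of the inequality, with the nontrivial direction exploiting the classical nature of the input structure to c-q channels in a way completely analogous to \zcref{lem:divergence_input_reduction}.

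First, the inequality $\sup_{x\in\mathcal{X}}\norm{\mathcal{E}(\ketbra{x}{x}) - \mathcal{F}(\ketbra{x}{x})}_1 \leq \norm{\mathcal{E}-\mathcal{F}}_{\Diamond}$ is immediate: for each $x \in \mathcal{X}$, the product state $\ketbra{x}{x}$ (with a trivial reference system) is a valid input in the optimization defining the diamond norm.

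For the reverse inequality, I will fix an arbitrary $\nu \in \mathcal{D}(R\mathcal{X})$ and, exactly as in the proof of \zcref{lem:divergence_input_reduction}, decompose it in the basis $\{\ket{x}\}_{x\in\mathcal{X}}$ of $\mathcal{X}$ by writing
\begin{equation}
    \tr_{\mathcal{X}}[\nu(\1_R\otimes\ketbra{x}{x})] = p(x,\nu)\,\rho_R^{(x)},
\end{equation}
where $\{p(x,\nu)\}_{x\in\mathcal{X}}$ is a probability distribution and each $\rho_R^{(x)}\in\mathcal{D}(R)$. Since both $\mathcal{E}$ and $\mathcal{F}$ begin by measuring in the $\{\ket{x}\}$ basis, we obtain
\begin{equation}
    \mathcal{E}(\nu)-\mathcal{F}(\nu) = \sum_{x\in\mathcal{X}} p(x,\nu)\,\rho_R^{(x)}\otimes\bigl(\mathcal{E}(\ketbra{x}{x})-\mathcal{F}(\ketbra{x}{x})\bigr).
\end{equation}

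Applying the triangle inequality and the multiplicativity of the trace norm under tensor products (together with $\norm{\rho_R^{(x)}}_1 = 1$) then yields
\begin{equation}
    \norm{\mathcal{E}(\nu)-\mathcal{F}(\nu)}_1 \leq \sum_{x\in\mathcal{X}} p(x,\nu)\norm{\mathcal{E}(\ketbra{x}{x})-\mathcal{F}(\ketbra{x}{x})}_1 \leq \sup_{x\in\mathcal{X}}\norm{\mathcal{E}(\ketbra{x}{x})-\mathcal{F}(\ketbra{x}{x})}_1.
\end{equation}
Taking the supremum over $\nu\in\mathcal{D}(R\mathcal{X})$ on the left-hand side finishes the argument. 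There is no real obstacle here; the only subtle point is noting that the decomposition above genuinely works because the measurement in the $\{\ket{x}\}$ basis kills any coherence between different classical inputs, leaving a convex combination of product states with $\rho_R^{(x)}$ decoupled from the output.
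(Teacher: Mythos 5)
Your proposal is correct and follows essentially the same route as the paper's proof: the same decomposition $\tr_{\mathcal{X}}[\nu\ketbra{x}{x}] = p(x,\nu)\rho_R^{(x)}$, followed by the triangle inequality, multiplicativity of the trace norm under tensor products, and bounding the convex combination by the supremum. No gaps.
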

\begin{proof}
Clearly, by elementary properties of suprema,
\begin{equation}
    \norm{\mathcal{E}-\mathcal{F}}_{\Diamond} =\sup_{\nu\in\mathcal{D}(R\mathcal{X})}\norm{\mathcal{E}(\nu) - \mathcal{F}(\nu)}_1 \geq  \sup_{x\in\mathcal{X}}\norm{\mathcal{E}(\ketbra{x}{x}) - \mathcal{F}(\ketbra{x}{x})}_1
\end{equation}
So, it remains to show the reverse inequality. \\
\\
Fix some (arbitrary) $\nu\in\mathcal{D}(R\mathcal{X})$. Then, we define
\begin{equation}
    \tr_{\mathcal{X}}\left[\nu\ketbra{x}{x}\right] := p(x,\nu)\rho_R^{(x)}
\end{equation}
where $\rho_R^{(x)}$ is a state in system $R$ and $p(x,\nu)$ is a normalization constant satisfying $p(x,\nu)\geq0$ and $\sum_{x\in \mathcal{X}}p(x,\nu)=1$, since 
\begin{equation}
    1=\tr[\nu]=\tr_{R}\left[\sum_{x\in\mathcal{X}}\tr_{\mathcal{X}}[\nu\ketbra{x}{x}]\right] = \sum_{x\in\mathcal{X}}\tr_{R}[p(x,\nu)\rho_R^{(x)}]=\sum_{x\in\mathcal{X}}p(x,\nu)
\end{equation}
Now, we have 
\begin{eqnarray}
    \mathcal{E}(\nu)=\sum_{x\in\mathcal{X}}p(x,\nu)\rho_R^{(x)}\otimes \mathcal{E}(\ketbra{x}{x}) \\
    \mathcal{F}(\nu)=\sum_{x\in\mathcal{X}}p(x,\nu)\rho_R^{(x)}\otimes \mathcal{F}(\ketbra{x}{x})
\end{eqnarray}
So, we get 
\begin{eqnarray}
    &&\norm{\mathcal{E}(\nu) - \mathcal{F}(\nu)}_1 \\
    &=& \norm{ \sum_{x\in\mathcal{X}}p(x,\nu)\rho_R^{(x)}\otimes \mathcal{E}(\ketbra{x}{x}) -\sum_{x\in\mathcal{X}}p(x,\nu)\rho_R^{(x)}\otimes \mathcal{F}(\ketbra{x}{x})}_1 \\
    &\overset{(a)}{\leq}&\sum_{x\in\mathcal{X}} p(x,\nu)\norm{\rho_R^{(x)}\otimes \mathcal{E}(\ketbra{x}{x})-\rho_R^{(x)}\otimes \mathcal{F}(\ketbra{x}{x})}_1 \\
    &\overset{(b)}{=}& \sum_{x\in\mathcal{X}} p(x,\nu)\norm{\rho_R^{(x)}}_1 \norm{\left(\mathcal{E}(\ketbra{x}{x}) -\mathcal{F}(\ketbra{x}{x})\right)}_1\\
    &\overset{(c)}{=}& \sum_{x\in\mathcal{X}} p(x,\nu)\norm{\mathcal{E}(\ketbra{x}{x}) -\mathcal{F}(\ketbra{x}{x})}_1
\end{eqnarray}
Here, (a) is by the triangle inequality, (b) is by multiplicativity under tensor products, and (c) uses that the trace norm of a quantum state is 1.\\
\\
Let $\Prob(X)$ be the set of all probability distributions on a set $X$. Now, we take the supremum over all input states $\nu$, to obtain
\begin{equation}
\begin{split}
    \sup_{\nu\in\mathcal{D}(R\mathcal{X})} \norm{\mathcal{E}(\nu) - \mathcal{F}(\nu)}_1
    &\leq \sup_{p\in \Prob(\mathcal{X})} \sum_{x\in\mathcal{X}}p(x)\norm{\mathcal{E}(\ketbra{x}{x}) -\mathcal{F}(\ketbra{x}{x})}_1 \\
    &= \sup_{x\in\mathcal{X}}\norm{\mathcal{E}(\ketbra{x}{x}) -\mathcal{F}(\ketbra{x}{x})}_1
\end{split}
\end{equation}
which gives our desired result.
\end{proof}

\begin{lemma}[Input reduction for divergences with infimum]\label{lem:optimal_cq_input_state}
Let $\D$ be any quantum divergence that satisfies the data-processing inequality, let $\E \in \CQ$ be a c-q channel and let $\S \subset \CQ$ be a set of CQ channels. Then the supremum over states $\nu \in \DM[R \X]$ in
\begin{equation}
    \sup_{\nu \in \DM[R \X]} \inf_{\F \in \S} \D(\E(\nu)\|\F(\nu))
\end{equation}
is achieved for $R \cong \X$ and at a state of the form
\begin{equation}
    \nu_{R X} = \sum_{x \in \X} p(x) \ketbra{x}{x}_{R} \otimes \ketbra{x}{x}_{\X}\,.
\end{equation}
\end{lemma}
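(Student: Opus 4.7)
The plan is to proceed in two stages, both of which are applications of the data-processing inequality, exploiting the c-q structure of $\E$ and of every $\F \in \S$.

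First I would restrict the supremum to states that are classical on $\X$. Since $\E$ is c-q, it factors as $\E = \E \circ \M_\X$, where $\M_\X : \X \to \X$ is the projective measurement channel in the basis $\{\ket{x}\}_{x \in \X}$; the same holds for every $\F \in \S$. For any $\nu \in \DM[R\X]$, define $p(x) \rho_R^{(x)} \coloneqq \tr_\X[\nu\,(\I_R \otimes \ketbra{x}{x}_\X)]$, so that $(\id_R \otimes \M_\X)(\nu) = \sum_x p(x)\,\rho_R^{(x)} \otimes \ketbra{x}{x}_\X \eqqcolon \nu'$. Then $\E(\nu) = \E(\nu')$ and $\F(\nu) = \F(\nu')$, hence $\D(\E(\nu)\|\F(\nu)) = \D(\E(\nu')\|\F(\nu'))$ for all $\F$, and taking infima shows we may assume WLOG that $\nu$ is classical on $\X$.

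Second, given $\nu = \sum_x p(x)\,\rho_R^{(x)} \otimes \ketbra{x}{x}_\X$, introduce an auxiliary system $R' \cong \X$ and the candidate state $\tilde{\nu}_{R'\X} = \sum_x p(x)\,\ketbra{x}{x}_{R'} \otimes \ketbra{x}{x}_\X$. Define the c-q preparation channel $\mathcal{N} : R' \to R$ by $\mathcal{N}(\ketbra{x}{x}_{R'}) = \rho_R^{(x)}$, which is a legitimate quantum channel. Then $(\mathcal{N} \otimes \id_A)(\E(\tilde{\nu})) = \sum_x p(x)\,\rho_R^{(x)} \otimes \E(\ketbra{x}{x}) = \E(\nu)$, and identically $(\mathcal{N} \otimes \id_A)(\F(\tilde{\nu})) = \F(\nu)$ for every $\F \in \S$ (here we use once more that $\F$ is c-q, so $\F(\tilde{\nu})$ has the same tensor structure on $R'$ as $\tilde{\nu}$). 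The data-processing inequality applied to the channel $\mathcal{N} \otimes \id_A$ then yields, for each $\F \in \S$,
\begin{equation}
\D(\E(\tilde{\nu})\|\F(\tilde{\nu})) \;\geq\; \D(\E(\nu)\|\F(\nu)).
\end{equation}
Taking the infimum over $\F \in \S$ on both sides (which preserves the inequality) gives $\inf_{\F \in \S} \D(\E(\tilde{\nu})\|\F(\tilde{\nu})) \geq \inf_{\F \in \S} \D(\E(\nu)\|\F(\nu))$.

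Combining the two stages, for every $\nu \in \DM[R\X]$ there exists a state $\tilde{\nu}$ of the prescribed form (with reference system $R' \cong \X$) that achieves at least as large a value of $\inf_{\F \in \S} \D(\E(\cdot)\|\F(\cdot))$. Hence the supremum over all $\nu \in \DM[R\X]$ equals the supremum over such product-of-classically-correlated states, proving the claim. I do not anticipate a serious obstacle: both steps are direct applications of DPI together with the c-q factorization, and the slight subtlety of commuting the infimum over $\F$ past the DPI step is handled by the standard monotonicity of infima under pointwise inequalities.
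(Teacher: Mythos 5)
Your proof is correct and follows essentially the same route as the paper's: the paper likewise first replaces $\nu$ by its $\X$-measured version and then uses a c-q preparation channel (called $\Lambda$ there, your $\mathcal{N}$) applied to the classically correlated state $\sum_x p(x)\ketbra{xx}{xx}$ together with the data-processing inequality. Your explicit remark about commuting the infimum over $\F$ past the pointwise DPI inequality is the same (implicit) step in the paper's argument.
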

\begin{proof}
Let $\nu \in \DM[R \X]$ be any state. Since all channels are C-Q and thus start with a measurement in a fixed basis of $\X$, the state $\nu$ and the state 
\begin{equation}
    \tilde{\nu}_{R \X} \coloneqq \sum_{x \in \X} \Tr_{\X}[\nu_{R\X}\ketbra{x}{x}] \otimes \ketbra{x}{x}
\end{equation}
are equivalent inputs for all C-Q channels. With $p(x) = \Tr[\nu_{R\X}\ketbra{x}{x}]$ let $\nu_{R}^{(x)} \coloneqq \Tr_{\X}[\nu_{R\X}\ketbra{x}{x}]/p(x)$, and then
\begin{equation}
    \tilde{\nu}_{R \X} = \sum_{x \in \X} p(x) \nu_{R}^{(x)} \otimes \ketbra{x}{x}\,.
\end{equation}
Define the channel $\Lambda: \tilde{X} \to R$, via $\ketbra{x}{x} \mapsto \nu_{R}^{(x)}$. Then, with
\begin{equation}
\omega_{\tilde{\X} \X} = \sum_{x \in \X} p(x) \ketbra{xx}{xx}_{\tilde{\X}\X}
\end{equation}
we find that
\begin{equation}
    D(\E(\nu_{R\X})\|\F(\nu_{R\X})) = D((\Lambda \otimes \E)(\omega_{\tilde{\X}\X})\|(\Lambda \otimes \F)(\omega_{\tilde{\X}\X})) \leq D(\E(\omega_{\tilde{\X}{\X}})\|\F(\omega_{\tilde{\X}X}))
\end{equation}
for every $\F \in \S$  by the data-processing inequality, which implies the desired statement.
\end{proof}

\begin{lemma}[store={stored:lem:perm_inputs}, label={lem:perm_inputs}, restate-keys={manual-num={\ref{lem:perm_inputs}}}]
    Let $\mathcal{S}$, $\mathcal{T}$ be two compact sets of quantum channels in $\CPTP(A^n\rightarrow B^n)$ that are permutation covariant. Then
    \begin{eqnarray}
        \sup_{\nu\in\mathcal{D}(RA^n)} \inf_{\substack{\mathcal{E}\in  \mathcal{S} \\ \mathcal{F}\in \mathcal{T}}} D(\mathcal{E}(\nu)\|\mathcal{F}(\nu)) = \sup_{\substack{\nu\in\mathcal{D}(K^nA^n) \\ K\cong A \\ \nu \,perm.\, invariant}} \inf_{\substack{\mathcal{E}\in  \mathcal{S} \\ \mathcal{F}\in \mathcal{T}}} D(\mathcal{E}(\nu)\|\mathcal{F}(\nu))
    \end{eqnarray}
    where permutation invariant on the right-hand side means with respect to permuting the n $KA$ subsystems.
\end{lemma}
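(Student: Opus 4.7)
The plan is as follows. The direction RHS $\le$ LHS will be immediate, since any permutation-invariant $\nu$ on $K^nA^n$ with $K\cong A$ is in particular a state on $RA^n$ with $R=K^n$.

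For LHS $\le$ RHS, I will first invoke the standard channel-divergence reduction (purification together with DPI on the purifying system of the reference) to restrict the left-hand supremum to pure states $\nu=\ketbra{\psi}{\psi}$ with $R=K^n$, $K\cong A$. Given such a pure $\nu$, I will set $\rho=\Tr_{K^n}(\nu)$, define the symmetrized marginal $\rho'=\frac{1}{n!}\sum_{\pi\in\Sn}P_A(\pi)\rho P_A(\pi)^\dagger$, and take its canonical purification $\ket{\phi_{\rho'}}=(I_{K^n}\otimes\sqrt{\rho'})\ket{\Omega}$ on $K^nA^n$, with $\ket{\Omega}=\sum_i\ket{i}_{K^n}\ket{i}_{A^n}$. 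A short computation using $(P_K(\pi)\otimes I)\ket{\Omega}=(I\otimes P_A(\pi^{-1}))\ket{\Omega}$ and $P_A(\pi)\sqrt{\rho'}P_A(\pi)^\dagger=\sqrt{\rho'}$ will verify that $\ketbra{\phi_{\rho'}}{\phi_{\rho'}}$ is invariant under the joint $\Sn$-action on the $n$ copies of $KA$.

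The remaining key step is to prove, for every permutation-covariant pair $(\mathcal{E},\mathcal{F})\in\mathcal{S}\times\mathcal{T}$, the pointwise bound $D(\mathcal{E}(\ketbra{\phi_{\rho'}}{\phi_{\rho'}})\|\mathcal{F}(\ketbra{\phi_{\rho'}}{\phi_{\rho'}}))\ge D(\mathcal{E}(\nu)\|\mathcal{F}(\nu))$. To this end I will introduce an auxiliary classical register $C$ of dimension $n!$ and form the intermediate purification $\ket{\Phi}=\frac{1}{\sqrt{n!}}\sum_{\pi\in\Sn}\ket{\pi}_C\otimes(I_{K^n}\otimes P_A(\pi))\ket{\phi_\rho}$ on $C\otimes K^n\otimes A^n$, where $\ket{\phi_\rho}$ is any purification of $\rho$ on $K^nA^n$. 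Since $\Tr_{CK^n}\ketbra{\Phi}{\Phi}=\rho'$, the two vectors $\ket{\Phi}$ and $\ket{\phi_{\rho'}}$ are purifications of the same marginal and are therefore related by an isometry $W:K^n\to C\otimes K^n$ acting only on the reference; isometric invariance of $D$ then yields $D(\mathcal{E}(\ketbra{\Phi}{\Phi})\|\mathcal{F}(\ketbra{\Phi}{\Phi}))=D(\mathcal{E}(\ketbra{\phi_{\rho'}}{\phi_{\rho'}})\|\mathcal{F}(\ketbra{\phi_{\rho'}}{\phi_{\rho'}}))$. On the other hand, dephasing $C$ is a CPTP map $\mathcal{M}$, so DPI gives $D(\mathcal{E}(\ketbra{\Phi}{\Phi})\|\mathcal{F}(\ketbra{\Phi}{\Phi}))\ge D(\mathcal{M}(\mathcal{E}(\ketbra{\Phi}{\Phi}))\|\mathcal{M}(\mathcal{F}(\ketbra{\Phi}{\Phi})))$, and using permutation covariance of $\mathcal{E},\mathcal{F}$ together with the direct-sum property of $D$ and unitary invariance, the dephased right-hand side simplifies to $D(\mathcal{E}(\ketbra{\phi_\rho}{\phi_\rho})\|\mathcal{F}(\ketbra{\phi_\rho}{\phi_\rho}))$; this in turn equals $D(\mathcal{E}(\nu)\|\mathcal{F}(\nu))$, because all purifications of $\rho$ give the same value of the channel-applied divergence. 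Chaining the two relations, taking infima over $(\mathcal{E},\mathcal{F})$, and then the supremum over $\nu$, yields LHS $\le$ RHS.

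The main subtlety, and essentially the only real obstacle, is choosing the correct intermediate comparison. A direct symmetrization $\nu\mapsto\frac{1}{n!}\sum_\pi U(\pi)\nu U(\pi)^\dagger$ combined with joint convexity of $D$ only yields the inequality in the wrong direction, because twirling is a CPTP map. Routing the comparison instead through the canonical purification of the symmetrized marginal and the enlarged-reference purification $\ket{\Phi}$ arranges both DPI steps — the isometric embedding $W$ and the dephasing $\mathcal{M}$ — so that they go in the correct direction, which is what makes the argument go through and is the reason $\ket{\phi_{\rho'}}$, rather than the symmetrization of $\nu$ itself, is the right permutation-invariant witness.
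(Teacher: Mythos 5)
Your proof is correct and follows essentially the same route as the paper's: a controlled/flagged permutation mixture, the direct-sum property, a single application of the data-processing inequality, and an isometry acting only on the reference that relates this to a permutation-invariant purification of the symmetrized marginal. The only cosmetic difference is that you construct that permutation-invariant purification explicitly as the canonical purification of $\rho'$ and verify its invariance by hand, whereas the paper invokes the existence result of Christandl et al.; both versions are sound.
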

\begin{proof}
    Let $\nu=\nu_{R_0A^n}$ be an arbitrary state in $\mathcal{D}(R_0A^n)$, where $R_0$ is an arbitrary reference system. Let $\pi\in S_n$ be an arbitrary permutation with unitary representation $P_A(\pi)$ and $P_B(\pi)$ on the systems $A^n$ and $B^n$ respectively. 
    \begin{eqnarray}
    \label{Perm1_eq1}
        D(\mathcal{E}(\nu)\|\mathcal{F}(\nu)) &=& D(P_B(\pi)\mathcal{E}(\nu)P_B(\pi)^\dag\|P_B(\pi)\mathcal{F}(\nu)P_B(\pi)^\dag) \\
        &=& D(\mathcal{E}(P_A(\pi)\nu P_A(\pi)^\dag)\|\mathcal{F}(P_A(\pi)\nu P_A(\pi)^\dag))
        \label{Perm1_eq2}
    \end{eqnarray}
    Here, equation (\ref{Perm1_eq1}) is because $D$ is invariant under the action of unitary matrices, and (\ref{Perm1_eq2}) is due to the permutation covariance of the channels. \\
    \\
    Now, define 
    \begin{eqnarray}
        \omega_{CR_0A^n}:=\frac{1}{n!}\sum_{\pi\in \Sn} \ketbra{\pi}{\pi}\otimes (P_A(\pi)\nu P_A(\pi)^\dag)
    \end{eqnarray}
    where the system $C$ is classical and stores the permutation (so $C$ has orthonormal basis $\{\ket{\pi}\,|\,\pi\in \Sn\}$). By the direct sum property,
    \begin{eqnarray}
        D(\mathcal{E}(\nu)\|\mathcal{F}(\nu)) &=& \frac{1}{n!} \sum_{\pi\in \Sn}D(\mathcal{E}(P_A(\pi)\nu P_A(\pi)^\dag)\|\mathcal{F}(P_A(\pi)\nu P_A(\pi)^\dag)) \\
        &=& D(\mathcal{E}(\omega_{CR_0A^n})\|\mathcal{F}(\omega_{CR_0A^n}))
    \end{eqnarray}
    Let $\omega_{WCR_0A^n}$ be a purification of $\omega_{CR_0A^n}$. Since
    \begin{eqnarray}
        \omega_{A^n}=\frac{1}{n!}\sum_{\pi\in \Sn}P_A(\pi)\nu_{A^n} P_A(\pi)^\dag
    \end{eqnarray}
    is permutation invariant, we have that there exists a system $K$, which is isomorphic to $A$, and a permutation invariant purification $\omega_{(KA)^n}\in \mathcal{D}(K^nA^n)$, by \cite[Lemma II.5]{christandl_one-and--half_2007}. Here, the permutations permute the copies of $KA$. Now the two purifications are related by a partial isometry $V:K^n\rightarrow WCR_0$, which commute with $\mathcal{E}$ and $\mathcal{F}$. Hence,
    \begin{eqnarray}
        D(\mathcal{E}(\omega_{CR_0A^n})\|\mathcal{F}(\omega_{CR_0A^n})) &\leq& D(\mathcal{E}(\omega_{WCR_0A^n})\|\mathcal{F}(\omega_{WCR_0A^n})) \\
        &=& D(\mathcal{E}(\omega_{(KA)^n})\|\mathcal{F}(\omega_{(KA)^n}))
    \end{eqnarray}
    using the data processing inequality and since the quantum relative entropy is invariant under isometries.
    Now, take an infimum over all the permutation covariant channels to get
    \begin{eqnarray}
        \inf_{\substack{\mathcal{E}\in  \mathcal{S} \\ \mathcal{F}\in \mathcal{T}}} D(\mathcal{E}(\omega_{CR_0A^n})\|\mathcal{F}(\omega_{CR_0A^n})) &\leq& \inf_{\substack{\mathcal{E}\in  \mathcal{S} \\ \mathcal{F}\in \mathcal{T}}}D(\mathcal{E}(\omega_{(KA)^n})\|\mathcal{F}(\omega_{(KA)^n}))
    \end{eqnarray}
So, we can restrict the supremum to only being over permutation invariant input states as required.
\end{proof}
\begin{EXCLUDED}
    
\section{Main New Ideas}
\subsection{GQSL Lemmas}
\Autoref{lem:divergence_input_reduction} allows us to restrict to tensor product states. It allows us to use additivity of $D$ to show that the sequence is subadditive, to use Fekete in \Autoref{lem:lim_exists}. \Autoref{lem:rel_entropy_exchange} also allows us to swap sup and inf at end, which allows us to pick a minimal $\F_n$. This is something that we regularly do in the GQSL proof. Last prop is unremarkable. 
\subsection{GQSL Strong Converse}
In \Autoref{lem:GQSL_converse_bound} we need \Autoref{lem:divergence_input_reduction} to use additivity under tensor products. We also need the finite set to take a sup over to swap limit and supremum. \\
\\
In strong converse proof we need \Autoref{lem:rel_entropy_exchange} to show we can choose an optimum $\F_k$ to apply \Autoref{lem:GQSL_converse_bound} to, to get the strong converse.
\subsection{GQSL Direct}
\Autoref{lem:sup_over_sequences} is useful as we only need to deal with fixed states - can use second half of Hayashi Lemma S8 without much alteration. Mainly necessary as we don't have an exchange lemma for $D_H^\varepsilon$.
\\
Min evalue trick in Hayashi Lemma S8 hard to do as with reference system, can have evalue arbitrarily close but not equal to zero, so don't get right rate of decay needed. Instead, we use perm invariance to bound spectrum appropriately. \Autoref{lem:perm_cov_channels} and \Autoref{lem:perm_averaging_converse} allow us to guarantee that the state we deal with is permutation invariant, given a permutation invariant input. \Autoref{lem:perm_inputs} allows us to restore sup to over all states at end, which allows this initial restriction. Finally need \Autoref{lem:rel_entropy_exchange} to get final sequence of channels.\\
\\
Final part of proof has nothing new (except notation is better).
\subsection{QRT Relative entropy of resource to Log Robustness}

$\R$ is a more natural resource measure and diamond norm convergence is stronger than convergence of choi states. \\
\\
\Autoref{lem:resource_close_sequences} uses Gour\&Winter bound instead of Winter bound. I have checked it and I am pretty sure we can apply in this circumstance. It is needed in \Autoref{lem:log_robustness1} and \Autoref{lem:QRT_converse}. \\
\\
\Autoref{lem:log_robustness1} otherwise pretty much same but assume diamond norm initially. \\
\\
\Autoref{lem:log_robustness2} needs stronger GQSL to deal with non-choi states. Construction and log robustness bound is otherwise the same. Use c-q property to break up condition on pinched states (8.63) to on individual parts of reference systems. Finiteness allows us to swap sup and lim. We then need \Autoref{lem:input_reduce_diamond} to change this sup to a sup over all states. Need this/trace out other parts \& data processing to get $n\neq mk$ part of the diamond norm limit. \\
\\
\Autoref{cor:log_robustness} nothing new.
\subsection{QRT Direct}
Need full GQSL to get bound for log robustness. \Autoref{cor:log_robustness} is necessary to get the diamond norm bound.
\subsection{QRT Converse}
Diamond norm allows us to remove asymptotic continuity condition - helps as this was an unnatural condition anyway. \\
\\
Rest is pretty much nothing new.

\end{EXCLUDED}

\printbibliography
\end{document}